%%%%%%%%%%%%%%%%%%%%%%%%%%%%%%%%%%%%%%%%%%%%%%%%%%%%%%%%%%%%%%%%%%%%%
%%                                                                 %%
%% Please do not use \input{...} to include other tex files.       %%
%% Submit your LaTeX manuscript as one .tex document.              %%
%%                                                                 %%
%% All additional figures and files should be attached             %%
%% separately and not embedded in the \TeX\ document itself.       %%
%%                                                                 %%
%%%%%%%%%%%%%%%%%%%%%%%%%%%%%%%%%%%%%%%%%%%%%%%%%%%%%%%%%%%%%%%%%%%%%

%%\documentclass[referee,sn-basic]{sn-jnl}% referee option is meant for double line spacing

%%=======================================================%%
%% to print line numbers in the margin use lineno option %%
%%=======================================================%%

%%\documentclass[lineno,sn-basic]{sn-jnl}% Basic Springer Nature Reference Style/Chemistry Reference Style

%%======================================================%%
%% to compile with pdflatex/xelatex use pdflatex option %%
%%======================================================%%

%%\documentclass[pdflatex,sn-basic]{sn-jnl}% Basic Springer Nature Reference Style/Chemistry Reference Style

%%\documentclass[sn-basic]{sn-jnl}% Basic Springer Nature Reference Style/Chemistry Reference Style
\documentclass[sn-mathphys]{sn-jnl}% Math and Physical Sciences Reference Style
%%\documentclass[sn-aps]{sn-jnl}% American Physical Society (APS) Reference Style
%%\documentclass[sn-vancouver]{sn-jnl}% Vancouver Reference Style
%%\documentclass[sn-apa]{sn-jnl}% APA Reference Style
%%\documentclass[sn-chicago]{sn-jnl}% Chicago-based Humanities Reference Style
%%\documentclass[sn-standardnature]{sn-jnl}% Standard Nature Portfolio Reference Style
%%\documentclass[default]{sn-jnl}% Default
%%\documentclass[default,iicol]{sn-jnl}% Default with double column layout

%\jyear{2021}%

%% as per the requirement new theorem styles can be included as shown below
\theoremstyle{thmstyleone}%
\newtheorem{theorem}{Theorem}%  meant for continuous numbers
\newtheorem{lemma}[theorem]{Lemma}
%%\newtheorem{theorem}{Theorem}[section]% meant for sectionwise numbers
%% optional argument [theorem] produces theorem numbering sequence instead of independent numbers for Proposition
% 
%%\newtheorem{proposition}{Proposition}% to get separate numbers for theorem and proposition etc.

\theoremstyle{thmstyletwo}%

\theoremstyle{thmstylethree}%
\newtheorem{definition}{Definition}%

\raggedbottom
%%\unnumbered% uncomment this for unnumbered level heads

\usepackage{mathtools}

%General
\renewcommand{\R}{\mathbb{R}}
\newcommand{\Rn}[1][n]{\mathbb{R}^{#1}}

%% Cooperative games
\newcommand{\K}{\mathcal{K}}
\newcommand{\Kmin}{\mathcal{K}_{\min}}

\newcommand{\pgame}{(N,\K,v)}
\newcommand{\minpgame}{(N,\Kmin,v)}

% Classes of extensions
\renewcommand{\S}{S^n}
\newcommand{\Sv}{S^n(v)}
\newcommand{\Co}{C^n}

\renewcommand{\P}{P^n}
\newcommand{\Pv}{P^n(v)}
\newcommand{\oneC}{C_1^n}
\newcommand{\oneCv}{C_1^n(v)}

% Solution concepts

\newcommand{\Core}{\mathcal{C}}
\newcommand{\I}{\mathcal{I}}

\newcommand{\W}{\mathcal{W}}

% Solution concepts (incomplete)

% Special extensions
\newcommand{\walfa}{{w_{\alpha}}}
\newcommand{\walfabeta}{{w_{\alpha,\beta}}}
\newcommand{\wa}{{w_{A}}}

% Other
\newcommand{\marg}[1]{m^{#1}_\sigma}
\newcommand{\margi}[2]{\left(m^{#1}_\sigma\right)_{#2}}

\begin{document}

\title[Approximations of solution concepts of cooperative games]{Approximations of solution concepts of cooperative games}

\author{\fnm{Martin} \sur{\v{C}ern\'{y}}}\email{cerny@kam.mff.cuni.cz}

\affil{\orgdiv{Department of Applied Mathematics}, \orgname{Charles University}, \orgaddress{\city{Prague}, \country{Czech Republic}}}

%%==================================%%
%% sample for unstructured abstract %%
%%==================================%%

\abstract{The computation of a solution concept of a cooperative game usually depends on values of all coalitions. However, in some applications, values of some of the coalitions might be unknown due to various reasons. We introduce a method to approximate standard solution concepts based only on partial information given by a so called incomplete game.
	
We demonstrate the ideas on the class of minimal incomplete games. Approximations are derived for different solution concepts including the Shapley value, the nucleolus, or the core. We show explicit formulas for approximations of some of the solution concepts and show how the approximability differs based on additional information about the game.}

%%================================%%
%% Sample for structured abstract %%
%%================================%%

\keywords{cooperative games, incomplete games, solution concepts, approximation}

%%\pacs[JEL Classification]{D8, H51}

%%\pacs[MSC Classification]{35A01, 65L10, 65L12, 65L20, 65L70}

\maketitle

\section{Introduction}
The model of cooperative games is long studied since its original formulation by von Neumann and Morgenstern in 1944~\cite{Neumann1944}. Since then, one of its main disadvantages, which makes it often infeasible in applications prevails. If we want to model a real world problem using a cooperative game of $n$ players, we need to collect $2^n$ real values representing the worth of cooperation between different subsets of players. In the best case scenario, this process is both resource and time demanding but in many scenarios, such information is too complex and unfeasible to collect. Researchers try to deal with these problems in many ways; using stochastic approach~\cite{Granot1977,Borm2002}, approach employing (fuzzy) intervals~\cite{Gok2009,Bok2015,Branzei2010,Deng2016,Mares2013}, among other approaches are e.g. ellipsoidal games~\cite{Gok2014}, multi-choice games~\cite{Branzei2008}, or restricted games~\cite{Grabisch2016}. When computing the solution concepts, in many of these models the unknown values are either ignored, their values are substituted with $0$ or the solution concept is computed with respect to a special cooperative game. This is hardly satisfactory, because the unknown values are either not taken into account or there is no guarantee on their relation with the substituted values.

In this text, we derive approximations of standard solution concepts, given only partial information about an underlying cooperative game. Section~\ref{sec2} contains preliminaries to cooperative games and section~\ref{sec3} is dedicated to the model of incomplete cooperative games that we use as the tool to capture partial information about an underlying complete game. Section~\ref{sec4} is dedicated to approximations of standard solution concepts and in Section~\ref{sec5}, we conclude this paper with open problems and future research.

\section{Complete cooperative games}\label{sec2}
This section presents only the necessary background. We define the cooperative game, different classes of games and solution concepts studied in our research. For more on cooperative games, see~\cite{Grabisch2016,Peleg2007,Peters2008}. 
\begin{definition}
	A \emph{cooperative game} is an ordered pair $(N,v)$, where $N=\{1,2,\ldots,n\}$ and $v\colon 2^N \to
	\mathbb{R}$ is the characteristic function. Further, $v(\emptyset) = 0$.
\end{definition}

We denote the set of $n$-person cooperative games by $\Gamma^n$. Subsets of $N$ are called \emph{coalitions}, $\{i\}$ for $i \in N$ are \emph{singletons} and $N$ is called the
\emph{grand coalition}. We often write $v$ instead of $(N,v)$ when there
is no confusion over the player set. We associate the characteristic functions $v\colon 2^N\to\mathbb{R}$ with vectors $v\in\mathbb{R}^{2^{\lvert N \rvert}}$. This is convenient for viewing sets of cooperative games as sets of points.
We use the following abbreviations. We often replace $\left\{i\right\}$ with $i$. To denote the sizes of coalitions e.g. $N,S,T$, we use $n,s,t$. For $x \in \Rn$ and $S \subseteq N$, $x(S) \coloneqq \sum_{i \in S} x_i$. By $\mathbb{R}_+$, we denote the set of non-negative real values.

\begin{definition}\label{def:classes-of-games}
	Let $(N,v)$ be a cooperative game. The game is
	\begin{enumerate}
		%\item {\it monotone} if it satisfies
		%\begin{equation*}
		%	v(S) \le v(T), \hspace{5ex} S\subseteq T \subseteq N;
		%\end{equation*}
		\item {\it superadditive} if it satisfies
		\begin{equation*}
			v(S) + v(T) \le v(T\cup S), \hspace{5ex} S, T \subseteq N, S\cap T=\emptyset;
		\end{equation*}
		\item {\it convex} if it satisfies
		\begin{equation*}
			v(S) + v(T) \le v(T\cup S) + v(T\cap S), \hspace{5ex} S, T \subseteq N.
		\end{equation*}
	\end{enumerate}
	We denote the sets of superadditive and convex $n$-person games by $\S$ and $\Co$.
\end{definition}
\emph{Unanimity games} $(N,u_T)$ for nonempty $T \subseteq N$ defined as
\[
u_T(S) \coloneqq 
\begin{cases}
	1 & \text{if }T \subseteq S,\\
	0 & \text{otherwise},\\
\end{cases}
\]
are important when we view the set of all $n$-person cooperative games $\Gamma^n$ as a vector space. Shapley~\cite{Shapley1953} showed that unanimity games form one of its bases, i.e. every game $v \in \Gamma^n$ can be expressed as $v= \sum_{T \subseteq N,T \neq \emptyset}d_v(T)u_T$, or equivalently $v(S) = \sum_{T \subseteq S, T \neq \emptyset}d_v(T)$. The coefficients of this linear combination $d_v(T)$ are called \emph{Harsanyi dividends} and can be expressed as
\[
d_v(T) = \sum_{S \subseteq T}(-1)^{\lvert T \rvert - \lvert S \rvert}v(S).
\]
\begin{definition}
	A cooperative game $(N,v)$ is \emph{positive}, if it holds for all coalitions $\emptyset \neq T \subseteq N$ that \[d_v(T) \geq 0.\]
	We denote the set of all positive cooperative $n$-person games by $\P$.
\end{definition}
It is straightforward that convex games form a subset of superadditive games. Further, positive games form a subset of convex games and unanimity games are themselves positive games. Positive games are well-studied in the theory of capacities, where they are also called \emph{totally-monotonic games} (see~\cite{Grabisch2016}).

The \emph{upper vector} $b^v \in \mathbb{R}^n$ defined as $b^v_i\coloneqq  v(N) - v(N\setminus i)$ captures each player's marginal contribution to the grand coalition. The \emph{lower vector} $a^v$ is defined as $a^v_i = \max_{S: i \in S} v(S) - b^v(S)$. Finally, the \emph{gap function} $g^v \colon 2^N \to \R$ is defined as $g^v(S) \coloneqq b^v(S) - v(S)$. The following lemma shows that the upper vector and the gap function are linear with respect to cooperative games.

\begin{lemma}\cite{Driessen1985-thesis}\label{lem:linear-combination}
	For a linear combination $v = \sum_{i=1}^k\alpha_i v_i$ of games $v_1,\dots,v_k$, it holds
	\begin{enumerate}
		\item $b^v = \sum_{i=1}^k\alpha_ib^{v_i},$
		\item $b^v(N)=\sum_{i=1}^k\alpha_ib^{v_i}(N),$
		\item $g^v(N)=\sum_{i=1}^k\alpha_ig^{v_i}(N).$
	\end{enumerate}
\end{lemma}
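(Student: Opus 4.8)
The plan is to exploit the identification of a game with its vector of coalition values, which makes the characteristic function of a linear combination equal to the corresponding linear combination of characteristic functions. Concretely, for $v = \sum_{i=1}^k \alpha_i v_i$ one has $v(S) = \sum_{i=1}^k \alpha_i v_i(S)$ for every coalition $S \subseteq N$, and the whole lemma will follow by substituting this identity into the definitions of $b^v$ and $g^v$, each of which is, for a fixed coalition, a fixed linear expression in the values $v(S)$.

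First I would prove (1). Fixing a player $j \in N$, I apply the value identity to the coalitions $N$ and $N\setminus j$ and use the definition $b^v_j = v(N) - v(N\setminus j)$ to obtain $b^v_j = \sum_{i=1}^k \alpha_i \bigl(v_i(N) - v_i(N\setminus j)\bigr) = \sum_{i=1}^k \alpha_i b^{v_i}_j$; since $j$ was arbitrary, this is exactly (1). Statement (2) then follows by summing (1) over all $j \in N$ and interchanging the two finite sums, giving $b^v(N) = \sum_{i=1}^k \alpha_i b^{v_i}(N)$. Finally, (3) combines (2) with the value identity at $S = N$: writing $g^v(N) = b^v(N) - v(N)$ and substituting both linear expressions yields $g^v(N) = \sum_{i=1}^k \alpha_i \bigl(b^{v_i}(N) - v_i(N)\bigr) = \sum_{i=1}^k \alpha_i g^{v_i}(N)$.

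There is no genuine obstacle here: each quantity in the statement is, for a fixed coalition, an affine-linear functional of the coordinates $v(S)$, so linearity is inherited automatically, and the argument is pure bookkeeping. The only point worth flagging is why the lower vector $a^v$ is absent from the lemma: its definition involves a maximum over coalitions, which is not linear in the values $v(S)$, so the same substitution fails and no analogous identity should be expected.
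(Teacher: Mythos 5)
Your proof is correct: the paper states this lemma without proof (it is cited from Driessen's thesis), and your direct substitution of $v(S)=\sum_{i=1}^k\alpha_i v_i(S)$ into the definitions of $b^v$ and $g^v$, with (2) obtained by summing (1) and (3) by combining (2) with the identity at $S=N$, is exactly the standard bookkeeping argument the citation points to. Your closing remark about why the lower vector $a^v$ is excluded (the maximum destroys linearity) is also accurate and a nice observation.
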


\begin{definition}\label{def:oneC-games}
	Let $(N,v)$ be a cooperative game. The game is 1-convex if for all coalitions $S \subseteq N$, $S \neq \emptyset$ the inequality
	\begin{equation}\label{def:1conv-cond1}
		v(S) \leq v(N) - b^v(N\setminus S)
	\end{equation}
	holds and also
	\begin{equation}\label{def:1conv-cond2}
		b^v(N) \geq v(N).
	\end{equation}
	The set of 1-convex $n$-person games is denoted by $\oneC$.
\end{definition}
\subsection{Solution concepts and payoff vectors}
One of the goals of cooperative game theory is to distribute the value of the grand coalition $v(N)$ between all players. To be able to work with individual payoffs more easily, \emph{payoff vectors} are introduced. Those are vectors $x \in \Rn$ where $x_i$ represents the individual payoff of player $i$. The definition of the payoff vector is quite general. This is why, for a cooperative game $(N,v)$, one usually considers \emph{preimputations} $\I^*(v)$ and \emph{imputations} $\I(v)$,
\begin{itemize}
	\item $\I^*(v) \coloneqq \{x \in \Rn \mid x(N) = v(N) \}$,
	\item $\I(v) \coloneqq \{ x \in \Rn \mid x(N) = v(N) \text{ and }  x_i \geq v(i) \text{ for } i \in N\}$.
\end{itemize}
Preimputations are payoff vectors $x \in \Rn$ that are \emph{efficient}, i.e. $x(N) = v(N)$. Imputations are also \emph{individually rational}, meaning every player $i$ receives at least his singleton value, i.e.\, $x_i \geq v(i)$. To work with imputations more easily, for $k \in N$, we denote by $I^k \in \Rn$ the imputation defined as
\[
(I^k)_i = \begin{cases}
	v(i) + \delta & \text{if } i =k,\\
	v(i) & \text{if } i \neq k.\\
\end{cases}
\]
and for $\alpha \in \Rn_+$ such that $\alpha(N)=1$, we denote by $I^\alpha \in \Rn$ the imputation defined as
\[
I^\alpha_i = v(i) + \alpha_i\Delta.
\]
\begin{definition}
	Let $C\subseteq \Gamma^n$ be a class of $n$-person cooperative games. Then a function $f\colon C\to2^{\Rn}$ is a \emph{solution concept} (on class $C$).
\end{definition} 
Solution concepts might be equivalently defined as subsets of payoff vectors. Both approaches are useful and we often switch between them. If the image $f(v)$ of every cooperative game $v \in C$ is exactly one vector, we write $f\colon C\to\Rn$ and we say $f$ is a \emph{one-point} solution concept or a \emph{value}. Otherwise, we say $f$ is a \emph{multi-point} solution concept. Each solution concept follows a different goal, e.g.\, the \emph{Shapley value} (Definition~\ref{def:shapley}) is a one-point solution concept, which strives to distribute the payoff as \emph{fairly} as possible. Another example is the \emph{core} (Definition~\ref{def:core}), a multi-point solution concept, focused on \emph{stability}. In the rest of this section, we introduce solution concepts considered in our research.

\subsubsection{The Shapley value}
It is one of the most studied one-point solution concept originally defined by Shapley~\cite{Shapley1953}. Here, we use an alternative characterisation from~\cite{Peters2008}.
\begin{definition}\label{def:shapley}
	The \emph{Shapley value} $\phi \colon \Gamma^n \to \mathbb{R}^n$ is a one-point solution concept defined as
	\[\phi_i(v) \coloneqq \frac{1}{n}\sum\limits_{S\subseteq N \setminus i}{n-1\choose s}^{-1}(v(S\cup i) - v(S)).\]
\end{definition}
The Shapley value is defined for every cooperative game and has many nice properties, e.g. the \emph{linearity}~\cite{Shapley1953},
\begin{equation}\label{eq:shapley-linearity}
	\phi(\alpha v+\beta w) = \alpha\phi(v) + \beta\phi(w).
\end{equation}
In general, the Shapley value is not individually rational, i.e.\, $\phi(v) \notin \mathcal{I}(v)$. Further in this text, we will work with Shapley values of the unanimity games.
\begin{lemma}\cite{Shapley1953}\label{lem:shapley-unanimity}
	The Shapley value of a unanimity game $(N,u_T)$ can be expressed as
	\begin{equation*}
		\phi_i(u_T) = \begin{cases}
			\frac{1}{\lvert T \rvert} & \text{if } i \in T,\\
			0 & \text{if } i \notin T.\\    
		\end{cases}
	\end{equation*}
\end{lemma}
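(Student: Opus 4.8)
The plan is to substitute the unanimity game $u_T$ directly into the defining formula of the Shapley value and to exploit the fact that every marginal contribution $u_T(S\cup i) - u_T(S)$ is either $0$ or $1$. Since $u_T(R)=1$ exactly when $T\subseteq R$, a summand is nonzero precisely when $T\subseteq S\cup i$ but $T\not\subseteq S$. I would split the argument into the two cases appearing in the statement.

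For $i\notin T$, the case is immediate: for any $S\subseteq N\setminus i$ we have $T\subseteq S\cup i$ if and only if $T\subseteq S$, because $i\notin T$ contributes nothing to covering $T$. Hence $u_T(S\cup i)=u_T(S)$ for all such $S$, every marginal contribution vanishes, and $\phi_i(u_T)=0$.

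For $i\in T$, since $S\subseteq N\setminus i$ never contains $i$, the condition $T\not\subseteq S$ is automatic, so a summand equals $1$ exactly when $T\setminus i\subseteq S$. The admissible coalitions are therefore $S=(T\setminus i)\cup R$ with $R\subseteq N\setminus T$; writing $r=|R|$ and $t=|T|$ gives $s=(t-1)+r$, with $r$ ranging over $0,\dots,n-t$ and $\binom{n-t}{r}$ choices for each $r$. This reduces the Shapley value to the single sum
\begin{equation*}
\phi_i(u_T)=\frac{1}{n}\sum_{r=0}^{n-t}\binom{n-t}{r}\binom{n-1}{t-1+r}^{-1}.
\end{equation*}

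The main work is the evaluation of this sum, which I expect to be the only real obstacle. I would handle it with the Beta-integral representation of inverse binomial coefficients, $\binom{n-1}{k}^{-1}=n\int_0^1 x^{k}(1-x)^{n-1-k}\,dx$. Substituting $k=t-1+r$, interchanging the finite sum with the integral, and applying the binomial theorem collapses the inner sum $\sum_{r=0}^{n-t}\binom{n-t}{r}x^{r}(1-x)^{n-t-r}=1$, leaving $n\int_0^1 x^{t-1}\,dx=n/t$. Thus $\phi_i(u_T)=\frac{1}{n}\cdot\frac{n}{t}=\frac{1}{t}$, completing the case. As an alternative to the explicit computation, one could instead invoke that the Shapley value is efficient and symmetric and assigns $0$ to null players: the players outside $T$ are null in $u_T$, the players inside $T$ are mutually symmetric, and efficiency forces their equal shares to sum to $u_T(N)=1$, giving $1/|T|$ each. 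I would mention this route but rely on the direct computation, since only the linearity of $\phi$ has been recorded so far.
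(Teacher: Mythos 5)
Your proof is correct. Note that the paper itself gives no proof of this lemma --- it is cited directly from Shapley's 1953 paper, where it is established via the axiomatic route (symmetry, efficiency, and the null/carrier property) that you sketch as your alternative. Your primary argument is the other standard route: a direct computation from the combinatorial formula in Definition~\ref{def:shapley}. Every step checks out: the marginal contributions of $u_T$ are indeed $0$--$1$ valued; for $i\notin T$ they all vanish since $T\subseteq S\cup i \iff T\subseteq S$; for $i\in T$ the contributing coalitions are exactly $S=(T\setminus i)\cup R$ with $R\subseteq N\setminus T$, giving
\[
\phi_i(u_T)=\frac{1}{n}\sum_{r=0}^{n-t}\binom{n-t}{r}\binom{n-1}{t-1+r}^{-1},
\]
and your Beta-integral evaluation is valid: $\binom{n-1}{k}^{-1}=n\int_0^1 x^k(1-x)^{n-1-k}\,dx$ holds since the integral equals $k!\,(n-1-k)!/n!$, the interchange of the finite sum with the integral is harmless, and the binomial theorem collapses the inner sum to $(x+(1-x))^{n-t}=1$, yielding $\frac{1}{n}\cdot n\int_0^1 x^{t-1}\,dx=\frac{1}{t}$. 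Your judgment call at the end is also sound: the axiomatic argument is shorter but leans on efficiency, symmetry, and the null-player property, none of which the paper records (only linearity and the explicit formula appear), so grounding the proof in the direct computation makes it self-contained relative to this paper's toolkit. The direct computation buys independence from the axiomatization at the cost of the one identity you evaluate; the axiomatic route buys brevity at the cost of importing unstated axioms.
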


\subsubsection{The $\tau$-value}
The $\tau$-value is a one-point solution concept originally defined for the class of \textit{quasi-balanced} games~\cite{Driessen1985-thesis}, which is a superset of both of convex and 1-convex games. One of its equivalent forms views it as the efficient compromise between the upper vector $b^v$ and the lower vector $a^v$.
\begin{definition}
	The $\tau$-value $\tau(v)$ of a cooperative game $(N,v)$ is the unique convex combination of $a^v$ and $b^v$ satisfying $\sum_{i \in N}\tau_i(v)=v(N)$.
\end{definition}
For both convex games and 1-convex games, we have explicit formulas for the $\tau$-value.
\begin{theorem}\cite{Driessen1985-thesis}\label{thm:convex-tau}
	For a convex $(N,v)$, the $\tau$-value can be expressed as
	\begin{equation*}
		\tau_i(v) = \begin{cases}
			b^v_i - \frac{g^v(N)}{\sum_{i \in N}g^v(i)}g^v(i) & \text{if } g^v(N)=0,\\
			b^v_i & \text{if } g^v(N) > 0.\\
		\end{cases}
	\end{equation*}
\end{theorem}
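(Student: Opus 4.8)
\emph{Approach.} The $\tau$-value is, by definition, the unique point of the segment between $a^v$ and $b^v$ that lies on the efficiency hyperplane, so I would write $\tau(v)=\lambda a^v+(1-\lambda)b^v$ for a single scalar $\lambda\in[0,1]$ fixed by $\tau(N)=v(N)$. The whole statement then reduces to (i) an explicit coordinatewise description of the lower vector $a^v$ on convex games, and (ii) solving one linear equation for $\lambda$ and substituting into the expression claimed for each of the two regimes of $g^v(N)$.

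The substantive step is (i): for a convex game I would show $a^v_i=b^v_i-g^v(i)$, i.e. that the maximum defining $a^v_i$ is attained at the singleton $\{i\}$. The mechanism is monotonicity of the gap function. Convexity makes marginal contributions nondecreasing in the coalition, $v(S\cup i)-v(S)\le v(N)-v(N\setminus i)=b^v_i$, whence $g^v(S\cup i)-g^v(S)=b^v_i-\bigl(v(S\cup i)-v(S)\bigr)\ge 0$; thus $g^v$ is nondecreasing, and its least value over coalitions containing $i$ is $g^v(\{i\})=g^v(i)$. I expect this monotonicity argument, together with its careful translation into the defining maximum of $a^v_i$, to be the main obstacle; the rest is algebra. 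It yields $a^v=b^v-g$ with $g=(g^v(1),\dots,g^v(n))$, and hence $\tau(v)=b^v-\lambda g$.

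It then remains to pin down $\lambda$ and split into the two cases exactly as in the statement. Imposing efficiency on $\tau=b^v-\lambda g$ gives $b^v(N)-\lambda\sum_{j\in N}g^v(j)=v(N)$, i.e. the scalar identity $\lambda\sum_{j\in N}g^v(j)=b^v(N)-v(N)=g^v(N)$, the last equality being the definition of $g^v(N)$. In the first regime, $g^v(N)=0$, I would read off that the displayed correction $\frac{g^v(N)}{\sum_{j\in N}g^v(j)}g^v(i)$ has vanishing numerator and so equals $0$; matching this, the efficiency identity forces $\lambda\sum_{j\in N}g^v(j)=0$, and whether $\sum_{j\in N}g^v(j)>0$ (so $\lambda=0$) or $\sum_{j\in N}g^v(j)=0$ (so $g=0$), I obtain $\tau(v)=b^v$, which agrees coordinatewise with $b^v_i-\frac{g^v(N)}{\sum_{j\in N}g^v(j)}g^v(i)$ and closes this branch.

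In the second regime, $g^v(N)>0$, I would solve the same identity for $\lambda$, check that it is a legitimate weight using $0\le g^v(N)\le\sum_{j\in N}g^v(j)$ — the left bound from $g^v(N)\ge g^v(\emptyset)=0$ by the monotonicity just established, and the right bound from $\sum_{j\in N}g^v(j)-g^v(N)=v(N)-\sum_{j\in N}v(j)\ge 0$ by superadditivity — and then substitute $\lambda$ back into $\tau_i(v)=b^v_i-\lambda g^v(i)$ to reach the value claimed for this case, completing the argument. I would pay particular attention here to the degenerate denominator $\sum_{j\in N}g^v(j)=0$, which collapses the game to $a^v=b^v$ and must be reconciled with the stated piecewise formula; verifying that both branches glue consistently along $g^v(N)=0$ is the one place where I anticipate the bookkeeping to be delicate.
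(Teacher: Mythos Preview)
The paper does not supply its own proof of this theorem: it is quoted verbatim from Driessen's thesis and used as a black box, so there is no in-paper argument to compare your proposal against. Your approach---identify $a^v_i$ on convex games via monotonicity of the gap function, write $\tau(v)=b^v-\lambda g$, and solve the single efficiency equation for $\lambda$---is the standard route and is sound.

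One point worth flagging explicitly rather than papering over. The case labels in the displayed statement are transposed: as written, the branch ``$g^v(N)>0$'' assigns $\tau_i(v)=b^v_i$, but then $\tau(N)=b^v(N)>v(N)$ would violate efficiency, so this cannot be the intended split. Your own derivation shows this cleanly---when $g^v(N)>0$ you obtain $\lambda=g^v(N)/\sum_{j}g^v(j)>0$ and hence the corrected value $b^v_i-\lambda g^v(i)$, not $b^v_i$---yet in your write-up you say you ``reach the value claimed for this case,'' which you do not. Make the typo explicit: the formula with the fraction belongs to the regime $g^v(N)>0$, and the uncorrected $b^v_i$ belongs to $g^v(N)=0$ (where $b^v$ is already efficient). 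With that correction stated up front, your argument goes through without further change, including your treatment of the degenerate denominator $\sum_j g^v(j)=0$, which for a convex game forces $g^v(i)=0$ for all $i$ and hence $a^v=b^v=\tau(v)$.
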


\begin{theorem}\cite{Driessen1985-thesis}\label{thm:1convex-tau}
	For a 1-convex $(N,v)$, the $\tau$-value can be expressed as
	\begin{equation*}
		\tau_i(v) = b^v_i - \frac{g^v(N)}{n}.
	\end{equation*}
\end{theorem}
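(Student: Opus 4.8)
The plan is to compute the $\tau$-value directly from its definition as the unique convex combination of $a^v$ and $b^v$ that is efficient, using the explicit structure of $1$-convex games. Writing $\tau(v) = b^v - \lambda(b^v - a^v)$ for some $\lambda \in [0,1]$, I would first determine $\lambda$ from the efficiency condition $\sum_{i\in N}\tau_i(v) = v(N)$, and then show that the resulting expression simplifies to $b^v_i - g^v(N)/n$.

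First I would establish what the lower vector $a^v$ looks like for a $1$-convex game. Recall $a^v_i = \max_{S:i\in S}\bigl(v(S) - b^v(S)\bigr)$, equivalently $a^v_i = \max_{S:i\in S}\bigl(v(S) - b^v(S\setminus i)\bigr) - b^v_i$. I would use the defining inequality~\eqref{def:1conv-cond1}, namely $v(S) \le v(N) - b^v(N\setminus S)$, rewritten as $v(S) - b^v(S) \le v(N) - b^v(N)$, to show that for every coalition $S$ containing $i$ the quantity $v(S)-b^v(S)$ is bounded above by $v(N)-b^v(N)$, with equality attained at $S = N$. This identifies $a^v_i = v(N) - b^v(N) + b^v_i - b^v_i$; more carefully, $a^v_i = \bigl(v(N)-b^v(N)\bigr) + b^v_i$, so that $b^v_i - a^v_i = b^v(N) - v(N) = g^v(N)$ for every $i$. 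This is the crucial structural fact: in a $1$-convex game the gap between the upper and lower vectors is the same constant $g^v(N)$ in every coordinate.

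With $b^v_i - a^v_i = g^v(N)$ for all $i$, the convex combination becomes $\tau_i(v) = b^v_i - \lambda\, g^v(N)$, and summing over $N$ gives $b^v(N) - \lambda n\, g^v(N) = v(N)$. Since $g^v(N) = b^v(N) - v(N)$, this yields $\lambda n\, g^v(N) = g^v(N)$, hence $\lambda = 1/n$ (when $g^v(N)\neq 0$; the case $g^v(N)=0$ is trivial since then $a^v=b^v=\tau(v)$ and the formula holds vacuously). Substituting back gives exactly $\tau_i(v) = b^v_i - g^v(N)/n$, as claimed.

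I expect the main obstacle to be the careful verification that the maximum defining $a^v_i$ is genuinely attained at $S=N$ for a $1$-convex game, i.e.\ that~\eqref{def:1conv-cond1} really forces $v(S)-b^v(S) \le v(N)-b^v(N)$ for all $S\ni i$ rather than merely for $S=N\setminus i$ type coalitions; this is where condition~\eqref{def:1conv-cond2}, ensuring $g^v(N)\ge 0$, and the full strength of the $1$-convexity inequalities must be invoked to guarantee $\tau(v)$ is a genuine \emph{convex} combination (i.e.\ $\lambda\in[0,1]$). The remaining algebra is routine once the identity $b^v_i - a^v_i = g^v(N)$ is in hand.
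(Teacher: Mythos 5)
The paper states this theorem as a cited result from Driessen's thesis and gives no proof of its own, so there is nothing internal to compare against; judged on its merits, your argument is correct and is essentially the standard (Driessen) proof. The key structural fact you isolate is exactly right: for a $1$-convex game the maximum defining $a^v_i$ is attained at $S=N$, since condition~\eqref{def:1conv-cond1} rewrites as $v(S)-b^v(S\setminus i)\le v(N)-b^v(N\setminus i)$ for every $S\ni i$, giving $a^v_i = v(N)-b^v(N\setminus i) = b^v_i - g^v(N)$, hence $b^v_i - a^v_i = g^v(N)$ in every coordinate; efficiency then forces $\lambda = 1/n$ and the formula follows, with the degenerate case $g^v(N)=0$ handled as you say. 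Two small points of hygiene. First, your line ``$a^v_i = v(N) - b^v(N) + b^v_i - b^v_i$'' is garbled (it simplifies to $v(N)-b^v(N)$, contradicting the corrected identity $a^v_i = v(N)-b^v(N)+b^v_i$ you state immediately after); only the latter is right, and you should delete the former. Second, note that the paper's printed definition $a^v_i = \max_{S: i\in S} v(S) - b^v(S)$, read literally, would give $a^v_i = -g^v(N)$ independent of $i$ and would \emph{not} yield the theorem; your ``equivalent'' reformulation $a^v_i = \max_{S: i \in S}\bigl(v(S)-b^v(S\setminus i)\bigr)$ is not equivalent to the literal text but is the correct Driessen definition (the paper has evidently dropped a ``$\setminus\, i$''), and your proof tacitly and correctly commits to it. Finally, your worry about needing $\lambda\in[0,1]$ is resolved exactly by condition~\eqref{def:1conv-cond2}: $g^v(N)\ge 0$ gives $a^v\le b^v$ coordinatewise and places the efficient combination at $\lambda = 1/n\in[0,1]$, so uniqueness in the definition of the $\tau$-value is not in jeopardy.
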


The $\tau$-value is always contained in the imputation set, i.e. $\tau(v) \in \I(v)$~\cite{Driessen1985-thesis}.

\subsubsection{The core and the Weber set}
\begin{definition}\label{def:core}
	The \emph{core} $\Core(v)$ of a cooperative game $(N,v)$ is a multi-point solution concept defined as
	\[\Core(v) \coloneqq \{x \in \Rn \mid x(N)=v(N) \text{ and } x(S) \geq v(S) \text{ for } S \subseteq N\}.\]
\end{definition}
Notice, $\Core(v) \subseteq \I(v)$ since $x(N)=v(N)$ and for $S=\{i\}$, we have $x_i \geq v(i)$. The core satisfies $x(S) \geq v(S)$ for $S \subseteq N$, a property called \emph{coalitional rationality}. The core of a superadditive game might be empty, but for convex and 1-convex games it is always nonempty. For 1-convex game, it can be expressed as a convex hull of easily computable vectors.
\begin{theorem}\label{thm:1convex-core}\cite{Driessen1985}
	For 1-convex game $(N,v)$, it holds
	\[ \Core(v) = conv\left\{b^v-g^v(N)e_i \mid i \in N\right\}.\]
\end{theorem}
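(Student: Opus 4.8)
The plan is to first translate the definition of 1-convexity into a single clean statement about the gap function, and then to verify the two inclusions between $\Core(v)$ and the claimed polytope by elementary inequality manipulation.

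First I would rewrite condition~\eqref{def:1conv-cond1}. Since $b^v(N\setminus S) = b^v(N) - b^v(S)$, the right-hand side $v(N) - b^v(N\setminus S)$ equals $b^v(S) - g^v(N)$, so \eqref{def:1conv-cond1} is equivalent to $g^v(S) = b^v(S) - v(S) \geq g^v(N)$ for every nonempty $S$. Together with \eqref{def:1conv-cond2}, which says exactly $g^v(N) \geq 0$, this shows that $(N,v)$ is 1-convex if and only if $g^v(S) \geq g^v(N) \geq 0$ for all $\emptyset \neq S \subseteq N$; i.e.\ among nonempty coalitions the gap is minimized by the grand coalition and is nonnegative. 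I expect this reformulation to be the crux: once it is in hand, both inclusions follow quickly.

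Write $g := g^v(N)$ and $w^i := b^v - g\,e_i$. For the inclusion $conv\{w^i \mid i \in N\} \subseteq \Core(v)$, note first that each $w^i$ is efficient, since $w^i(N) = b^v(N) - g = v(N)$; hence every convex combination $x = \sum_i \lambda_i w^i$ is efficient as well. Computing coordinates gives $x_j = b^v_j - g\lambda_j$, so for any coalition $S$ one has $x(S) = b^v(S) - g\,\lambda(S)$. Coalitional rationality $x(S) \geq v(S)$ is then equivalent to $g^v(S) \geq g\,\lambda(S)$, which holds because $\lambda(S) \leq 1$ and $g \geq 0$ give $g\,\lambda(S) \leq g = g^v(N) \leq g^v(S)$ by the reformulated 1-convexity.

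For the reverse inclusion, take $x \in \Core(v)$ and, when $g > 0$, set $\lambda_j := (b^v_j - x_j)/g$. Summing yields $\sum_j \lambda_j = (b^v(N) - x(N))/g = g/g = 1$; and each $\lambda_j \geq 0$ because coalitional rationality applied to $N\setminus j$ gives $x_j = v(N) - x(N\setminus j) \leq v(N) - v(N\setminus j) = b^v_j$. Since $x_j = b^v_j - g\lambda_j = \left(\sum_k \lambda_k w^k\right)_j$, the vector $x$ is the required convex combination. The degenerate case $g = 0$ must be handled directly: then $x_j \leq b^v_j$ for all $j$ together with $x(N) = v(N) = b^v(N)$ forces $x = b^v$, while $conv\{w^i \mid i \in N\} = \{b^v\}$, so both sides reduce to the single point $b^v$. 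The only mildly delicate point is remembering to separate this $g=0$ case, where one cannot divide by $g$; everything else is bookkeeping on the identity $x_j = b^v_j - g\lambda_j$.
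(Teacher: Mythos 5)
Your proof is correct, and it is complete: the reformulation of 1-convexity as $g^v(S) \geq g^v(N) \geq 0$ for all $\emptyset \neq S \subseteq N$ is exactly the right pivot, both inclusions check out (including the use of coalitional rationality on $N\setminus j$ to get $x_j \leq b^v_j$), and you correctly isolated the degenerate case $g^v(N)=0$, which is the one place a naive division would fail. Note that the paper itself gives no proof of this statement --- it is imported from Driessen's work via the citation --- so there is nothing to compare against; your argument is the standard one for this result and would serve as a self-contained proof.
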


The Weber set is a multi-point generalisation of the Shapley value. It is a \emph{core-catcher}, meaning it always contains the core and it coincides with the core if and only if the underlying game is convex.
For a permutation $\sigma \in \Sigma_n$, the set of predecessors of $i$ with respect to $\sigma$ is $S_{\sigma(i)} \coloneqq \{ j \in N \mid \sigma(j) < \sigma(i)\} $. A \emph{marginal vector} $\marg{v} \in \Rn$ is then defined as
\begin{equation}\label{eq:marg-vector-definition}
	\margi{v}{i} = v(S_{\sigma(i)} \cup i) -v(S_{\sigma(i)}).
\end{equation}
\begin{definition}\label{def:weber}
	The \emph{Weber set} $\W(v)$ of a cooperative game $(N,v)$ is defined as 
	\[
	\W(v) \coloneqq conv\{\marg{v} \mid \sigma \in \Sigma_n\}.
	\]
\end{definition}

\begin{theorem}\label{thm:convex-weber-equals-core}\cite{Peleg2007}
	For every cooperative game $(N,v)$, it holds $\Core(v)\subseteq \W(v)$. Further, $\Core(v)=\W(v)$ if and only if $(N,v)$ is convex.
\end{theorem}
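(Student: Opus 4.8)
The plan is to treat the two assertions separately, proving the core-catcher inclusion $\Core(v)\subseteq\W(v)$ first and then the characterisation of equality. For the inclusion I would exploit that both $\Core(v)$ and $\W(v)$ are compact convex subsets of $\Rn$ (the Weber set as the convex hull of finitely many points, the core as a bounded polyhedron contained in $\I(v)$), so that the inclusion is equivalent to domination of support functions: it suffices to check $\max_{x\in\Core(v)}\langle w,x\rangle\le\max_{\sigma\in\Sigma_n}\langle w,\marg{v}\rangle$ for every direction $w\in\Rn$. If $\Core(v)=\emptyset$ there is nothing to prove, so I may assume a core element exists.

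Fix $w$ and relabel the players so that $w_{i_1}\le w_{i_2}\le\cdots\le w_{i_n}$; let $\sigma$ be the permutation with this entry order and set $T_k=\{i_1,\dots,i_k\}$, so that $T_0=\emptyset$ and $T_n=N$. The key computation is a summation by parts: for any efficient vector $x$ one has $\langle w,x\rangle=w_{i_n}v(N)+\sum_{k=1}^{n-1}(w_{i_k}-w_{i_{k+1}})x(T_k)$, and the same identity with $x$ replaced by $\marg{v}$ carries the coalition values $v(T_k)$ in place of $x(T_k)$, because the marginal vector telescopes along its own chain $T_1\subset\cdots\subset T_n$. Since the coefficients $w_{i_k}-w_{i_{k+1}}$ are non-positive for this ordering and $x(T_k)\ge v(T_k)$ by coalitional rationality, I obtain $\langle w,x\rangle\le\langle w,\marg{v}\rangle$ for every $x\in\Core(v)$. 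This yields $\max_{x\in\Core(v)}\langle w,x\rangle\le\max_{\sigma\in\Sigma_n}\langle w,\marg{v}\rangle$, and as $w$ was arbitrary the inclusion follows.

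For the equality I would route both implications through the single fact that $\marg{v}\in\Core(v)$ for all $\sigma$ precisely when $v$ is convex. For ``$v$ convex $\Rightarrow$ equality'', first note that convexity is equivalent to monotone marginal contributions, $v(A\cup i)-v(A)\le v(B\cup i)-v(B)$ whenever $A\subseteq B\subseteq N\setminus i$, which follows directly from the supermodular inequality of Definition~\ref{def:classes-of-games}. Given $\sigma$ and any coalition $S$, I would list the players of $S$ in $\sigma$-order and compare each player's marginal contribution over its full set of predecessors with its contribution over only its predecessors inside $S$; monotonicity gives a termwise bound whose right-hand side telescopes to $v(S)$, so $\marg{v}(S)\ge v(S)$, while telescoping gives $\marg{v}(N)=v(N)$. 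Hence every marginal vector lies in $\Core(v)$, so $\W(v)=conv\{\marg{v}\}\subseteq\Core(v)$, which together with the inclusion already proved gives $\Core(v)=\W(v)$.

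For the converse I would argue contrapositively: if $v$ is not convex, pick $A\subseteq B\subseteq N\setminus i$ with $v(A\cup i)-v(A)>v(B\cup i)-v(B)$ and build a permutation $\sigma$ letting the players of $A$ enter first, then those of $B\setminus A$, then $i$. For this $\sigma$ the chain restricted to $A$ telescopes to $\marg{v}(A)=v(A)$, while $i$ enters with predecessor set $B$, so $\margi{v}{i}=v(B\cup i)-v(B)$; evaluating on $S=A\cup i$ gives $\marg{v}(S)=v(A)+v(B\cup i)-v(B)<v(A\cup i)$, whence $\marg{v}\notin\Core(v)$. Since $\marg{v}\in\W(v)$ always, this exhibits a point of $\W(v)\setminus\Core(v)$, so $\Core(v)\subsetneq\W(v)$. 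I expect the main obstacle to be the bookkeeping of ordering and signs in the summation-by-parts step of the inclusion, where the increasing order must be chosen so that the telescoped core inequalities point the right way; the convexity equivalence is then largely mechanical, resting on the monotone-marginal characterisation and the explicit permutation construction.
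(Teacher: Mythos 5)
The paper offers no proof of this theorem at all---it is imported verbatim from \cite{Peleg2007}---so there is no internal argument to compare against; judged on its own, your proposal is correct and assembles the classical proof from standard pieces. The inclusion $\Core(v)\subseteq\W(v)$ via support functions and Abel summation along the chain of the increasing-$w$ ordering is sound: the coefficients $w_{i_k}-w_{i_{k+1}}$ are non-positive, coalitional rationality $x(T_k)\ge v(T_k)$ therefore tilts the inequality the right way, and the marginal vector of the corresponding permutation satisfies $\marg{v}(T_k)=v(T_k)$ exactly, so a single greedy vertex dominates every core point in direction $w$; with both sets compact and convex this yields the inclusion (this is the greedy/LP-duality route, as opposed to Weber's original induction on the number of players). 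The direction ``convex $\Rightarrow$ every $\marg{v}\in\Core(v)$'' and the contrapositive construction for the converse (players of $A$ enter first, then $B\setminus A$, then $i$, giving $\marg{v}(A\cup i)=v(A)+v(B\cup i)-v(B)<v(A\cup i)$, so $\marg{v}\in\W(v)\setminus\Core(v)$) are the standard Shapley--Ichiishi arguments and check out. The one spot that deserves an extra line: your converse requires the implication ``monotone marginal contributions for all $A\subseteq B\subseteq N\setminus i$ imply convexity'', so that non-convexity yields a violating triple, whereas your parenthetical justification only covers the easy direction ``supermodularity implies monotone marginals''. The needed direction is true and routine---write $S\setminus T=\{j_1,\dots,j_r\}$ and add the $j_k$ to $S\cap T$ and to $T$ one at a time, comparing the two telescoping sums---but as written you assert the equivalence rather than prove the half you use; note also that any strict violation automatically forces $A\subsetneq B$, so your permutation construction then goes through verbatim.
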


\subsubsection{The (pre)kernel}
The \emph{excess} of $(N,v)$ with respect to $x \in \Rn$ is $e(S,x,v) \coloneqq v(S) - x(S)$ and the \emph{maximal surplus of $i$ over $j$ at $x \in \Rn$} is
\[
s_{ij}(x,v) \coloneqq \max_{S: i \in S, j \notin S}e(S,x,v).
\]
\begin{definition}\label{def:prekernel}
	The \emph{prekernel} $\K^*(v)$ of a cooperative game $(N,v)$ is a multi-point solution concept defined as
	\[
	\K^*(v) \coloneqq \{x \in \I^*(v) \mid s_{ij}(x,v) = s_{ji}(x,v) \text{ } \forall i \neq j \}.
	\]
\end{definition}

The definition of the kernel is slightly more restrictive than that of the prekernel. The main difference is in individual rationality of the payoff vectors.
\begin{definition}\label{def:kernel}
	The \emph{kernel} $\K(v)$ of a cooperative game $(N,v)$ is a multi-point solution concept defined as
	\begin{align*}
		\K(v) = \Big\{ x \in \I(v) \mid \forall i\neq j: \left(s_{ij}(x,v) - s_{ji}(x,v)\right)(x_j - v_j) \leq 0 \text{ \,} \\
		\text{ or } \left(s_{ij}(x,v) - s_{ij}(x,v)\right)(x_i - v_i) \leq 0\Big\}&.
	\end{align*}

\end{definition}
Either $s_{ij}(x,v) = s_{ji}(x,v)$ or, from individually rational, $s_{ij(x,v)} > s_{ji}(x,v)$ implies $x_j=v_j$ and similarly $s_{ji}(x,v) > s_{ij}(x,v)$ implies $x_i=v_i$. Although different in general, both solution concepts coincide for superadditive games.
\begin{theorem}\label{thm:kernel-equals-prekernel}\cite{Peleg2007}
	For every superadditive game $(N,v)$, it holds $\K^*(v) = \K(v)$.
\end{theorem}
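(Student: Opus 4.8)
The plan is to prove the two inclusions $\K^*(v)\subseteq\K(v)$ and $\K(v)\subseteq\K^*(v)$ separately, isolating the single place where superadditivity enters: the statement that the prekernel of a superadditive game consists of imputations. Recall the two structural differences between the concepts. The prekernel lives on preimputations and demands the \emph{exact} balance $s_{ij}(x,v)=s_{ji}(x,v)$, whereas the kernel lives on imputations and only asks, for each ordered pair, that a strict inequality $s_{ij}(x,v)>s_{ji}(x,v)$ be excused by $x_j=v(j)$. Thus once prekernel points are shown to be individually rational and the two conditions are matched under superadditivity, the theorem follows.

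The first inclusion rests on the following lemma, which is the heart of the matter: if $(N,v)$ is superadditive and $x\in\K^*(v)$, then $x\in\I(v)$. I would prove it by a maximal-excess argument. Suppose some player $k$ is underpaid, $x_k<v(k)$, and set $\mu:=\max\{e(S,x,v)\mid\emptyset\neq S\subsetneq N\}$; since $e(\{k\},x,v)=v(k)-x_k>0$ we have $\mu>0$. First, every coalition attaining $\mu$ must contain $k$: if $k\notin S$ and $e(S,x,v)=\mu$, superadditivity gives $e(S\cup k,x,v)\ge e(S,x,v)+(v(k)-x_k)>\mu$, and as $e(N,x,v)=0$ the coalition $S\cup k$ is proper, contradicting maximality. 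Second, the prekernel balance produces a maximal-excess coalition \emph{avoiding} $k$: for any proper $S_0$ with $e(S_0,x,v)=\mu$ (which contains $k$) and any $j\notin S_0$ we get $s_{kj}(x,v)\ge\mu$, hence $s_{kj}(x,v)=\mu$, so by balance $s_{jk}(x,v)=\mu$ is attained by some $R\ni j$, $R\not\ni k$ --- contradicting the first point. Hence no underpaid player exists. Granting the lemma, $\K^*(v)\subseteq\K(v)$ is immediate: a prekernel point is then an imputation with $s_{ij}(x,v)=s_{ji}(x,v)$ for all pairs, so the kernel's defining implication holds vacuously.

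For the reverse inclusion I would take $x\in\K(v)$ --- already an imputation, hence a preimputation --- and try to upgrade the conditional inequalities to the exact balance $s_{ij}(x,v)=s_{ji}(x,v)$. Pairs of two ``interior'' players (those with $x_i>v(i)$) are balanced for free, since the kernel condition applied to both orders forbids either strict inequality. The difficulty is a pair in which a strictly dominated player $j$ sits exactly at its floor $x_j=v(j)$. The tempting move --- take $S\ni i$, $S\not\ni j$ attaining $s_{ij}(x,v)$ and swap $i$ for $j$ to feed $s_{ji}(x,v)$ --- fails: removing $i$ costs its full marginal contribution $v(S)-v(S\setminus i)$, which superadditivity bounds below by $v(i)$ but never above by $x_i$, so a single swap can strictly lose excess. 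Indeed, for a generic imputation a boundary player may be strictly dominated, so no purely local statement can suffice.

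This last point is the main obstacle, and it is where the global force of the kernel conditions must be used. The route I would pursue again keeps the maximal-excess set $\mathcal{T}$ and exploits that a boundary player is marginal-neutral: since $x_j=v(j)$, superadditivity gives $e(A\cup j,x,v)\ge e(A,x,v)$, so adjoining boundary players never lowers the excess and one may assume each coalition of $\mathcal{T}$ contains all of them. If $\mu>0$, saturating a coalition of $\mathcal{T}$ with the boundary players cannot reach $N$ (that would force $e(N,x,v)=0\ge\mu$), so some interior player $j$ stays outside; applying the interior balance to $j$ and the players of that coalition then yields, for each of them, a further maximal-excess coalition containing $j$ but excluding it. The plan is to organize the maximal-excess coalitions so obtained --- each satisfying $v(T)=x(T)+\mu$ --- into a suitable family and to play their aggregate against the efficiency identity $x(N)=v(N)$ to contradict $\mu>0$, and then to treat $\mu\le0$ (the core case) by the same bookkeeping. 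Pinning down this combinatorial step is exactly the part I expect to require the most care; once it is in place, assembling the interior-balance step, the boundary-saturation step, and this aggregate comparison completes $\K(v)\subseteq\K^*(v)$, and the theorem follows.
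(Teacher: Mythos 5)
Your first inclusion is sound and complete: the maximal-excess argument showing that prekernel elements of a superadditive (indeed merely zero-monotonic) game are individually rational is exactly the classical one, and $\K^*(v)\subseteq\K(v)$ then follows at once. Note that the paper itself offers no proof to compare against---the theorem is quoted from \cite{Peleg2007}---so your attempt has to stand as a self-contained proof, and as such it is only half finished.

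The genuine gap is the converse inclusion $\K(v)\subseteq\K^*(v)$, which you never actually establish: by your own admission the decisive combinatorial step (``organize the maximal-excess coalitions into a suitable family and play their aggregate against the efficiency identity'') is left open, and the case $\mu\le 0$ is deferred to ``the same bookkeeping''. Worse, the announced target of that bookkeeping---deriving a contradiction from $\mu>0$---is unattainable as stated, because kernel elements of superadditive games routinely have strictly positive maximal excess. Take $N=\{1,2,3\}$ with $v(S)=1$ for $\lvert S\rvert\ge 2$ and $v(S)=0$ otherwise: this game is superadditive with empty core, the point $x=(1/3,1/3,1/3)$ lies in $\K(v)$, every player is interior, and $\mu=1/3>0$. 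Your saturation and interior-balance steps go through verbatim there (each two-player coalition attains $\mu$, and balance supplies, for each member, another maximal-excess coalition avoiding it), yet no contradiction can emerge: efficiency gives nothing, since superadditivity does not bound the aggregate value of a balanced family of coalitions---that is Bondareva--Shapley territory, and it fails precisely for empty-core games. So any correct argument must exploit the assumed unbalanced pair $s_{ij}(x,v)>s_{ji}(x,v)$ with $x_j=v(j)$ throughout, which your aggregation scheme never does. Your negative observations are all correct (the naive $i$-for-$j$ swap fails, and an imputation can indeed have $x_j=v(j)$ together with $s_{ij}(x,v)>s_{ji}(x,v)$), but the hard half of the theorem is missing; the actual argument, a more delicate analysis of which pairs can be unbalanced at a kernel point of a zero-monotonic game, goes back to Maschler, Peleg and Shapley (1979) and is given in the book the paper cites.
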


\subsubsection{The nucleolus}
The essential component of nucleolus is $\theta(x) \in \mathbb{R}^{2^n}$, the vector of excesses with respect to $x$ which is arranged in non-increasing order.

\begin{definition}
	The \emph{nucleolus}, $\eta\colon \Gamma^{n}\to \mathbb{R}^n$ of a cooperative game $(N,v)$ is a one-point solution concept which assigns to $(N,v)$ the minimal imputation $x$ with respect to the lexicographical ordering $\theta(x)$ defined as:
	\[
	\theta(x) < \theta(y) \text{ if } \exists k: \forall i < k: \theta_i(x)=\theta_i(y) \text{ and } \theta_k(x) < \theta_k(y).
	\]
\end{definition}

The nucleolus is a \emph{core selector}, meaning $\eta(n) \in \Core(v)$ whenever the core is nonempty~\cite{Schmeidler1969}. For the class of convex games and 1-convex games, the nucleolus coincides with different solution concepts.

\begin{theorem}\cite{Maschler1971}\label{thm:convex-nucleolus-coincidence}
	Let $(N,v)$ be a convex game. Then $\eta(v)=\K(v)=\K^*(v)$.
\end{theorem}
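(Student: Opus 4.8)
The plan is to prove the two equalities in turn, reducing the kernel to the prekernel and then showing that the prekernel of a convex game is a single point, necessarily equal to the nucleolus. Since every convex game is superadditive (Definition~\ref{def:classes-of-games}), Theorem~\ref{thm:kernel-equals-prekernel} immediately yields $\K(v)=\K^*(v)$, so it suffices to establish $\eta(v)=\K^*(v)$. For this I would first invoke the classical fact that the nucleolus always lies in the prekernel, $\eta(v)\in\K^*(v)$ (this follows from the balancedness characterisation of the nucleolus, cf.~\cite{Schmeidler1969}); because $\eta$ is a single point, the whole theorem then reduces to the assertion that $\K^*(v)$ contains no other point, i.e.\ that the prekernel of a convex game is a singleton.

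To prove uniqueness I would exploit that convexity makes the excess function supermodular in its coalition argument. Indeed, for fixed $x$ the map $S\mapsto e(S,x,v)=v(S)-x(S)$ is supermodular whenever $v$ is convex, since $x(S)$ is modular. Consequently, for each ordered pair $(i,j)$ the coalitions attaining the maximal surplus $s_{ij}(x,v)$ over $\{S:i\in S,\ j\notin S\}$ form a lattice, and the surplus itself behaves monotonically under transfers of payoff between players. Concretely, moving a small amount of payoff from $j$ to $i$ (that is, perturbing $x$ along $e_i-e_j$ within $\I^*(v)$) weakly decreases every excess of a coalition containing $i$ but not $j$ and weakly increases every excess of a coalition containing $j$ but not $i$; hence $s_{ij}(x,v)-s_{ji}(x,v)$ is strictly decreasing along this direction. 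This is the monotonicity that ultimately forces a unique balance point.

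The main obstacle is to upgrade this pairwise monotonicity into global uniqueness, since perturbing $x_i$ and $x_j$ simultaneously affects the surpluses of all other pairs. The standard route, which I would follow, is a difference-set argument: assuming two distinct prekernel points $x,y$, set $P=\{i:x_i>y_i\}$ and $Q=\{i:x_i<y_i\}$, both nonempty by efficiency $x(N)=y(N)=v(N)$. Using the prekernel equalities $s_{ij}=s_{ji}$ at both $x$ and $y$ together with the lattice structure of the maximising coalitions, one selects a pair $i\in P$, $j\in Q$ and a maximising coalition whose excess moves in contradictory directions as we pass from $y$ to $x$, contradicting the equalities at one of the two points. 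An equivalent and perhaps cleaner formulation uses the Davis--Maschler reduced game: the prekernel satisfies the reduced game property and convexity is inherited by the relevant two-player reductions, so induction on $\lvert N\rvert$ reduces uniqueness to the trivial two-player case, where $s_{12}(x,v)=s_{21}(x,v)$ pins down $x$ uniquely. Either way, once $\K^*(v)$ is shown to be a singleton it must coincide with $\eta(v)$ by the containment above, completing the proof. This is the content of~\cite{Maschler1971}.
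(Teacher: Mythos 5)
The paper itself gives no proof of this statement --- it is imported verbatim from \cite{Maschler1971} (Maschler--Peleg--Shapley) as background --- so your attempt must stand on its own. Your outer reductions are sound: convex games are superadditive, so Theorem~\ref{thm:kernel-equals-prekernel} gives $\K(v)=\K^*(v)$; superadditivity also yields $v(N)\geq\sum_{i\in N}v(i)$, so $\I(v)\neq\emptyset$ and the nucleolus is well defined and lies in the kernel. (One caution: your claim that the nucleolus ``always lies in the prekernel'' is false in general --- it is the \emph{pre}nucleolus that lies in the prekernel, while $\eta(v)\in\K(v)$; your containment is legitimate here only because the identity $\K(v)=\K^*(v)$ has already been established.) Everything therefore correctly reduces to showing that $\K^*(v)$ is a singleton when $v$ is convex, and this is where the attempt has a genuine gap.

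The monotonicity you invoke --- transferring $\varepsilon$ from $j$ to $i$ lowers $s_{ij}$ by exactly $\varepsilon$ and raises $s_{ji}$ by exactly $\varepsilon$ --- holds for \emph{every} game, convex or not, and prekernels of general games are not singletons; so this observation alone cannot carry the proof. Convexity enters only through the lattice property of the coalitions attaining $s_{ij}(x,v)$, and the step where that property actually produces a contradiction from two distinct prekernel points $x,y$ --- choosing $i\in P$, $j\in Q$, and a maximizing coalition ``whose excess moves in contradictory directions'' --- is precisely the content of the Maschler--Peleg--Shapley argument. You state that one selects such a coalition but never exhibit the selection, so the heart of the theorem is asserted rather than proved. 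Your fallback via reduced games does not close the gap either, for two reasons. First, Davis--Maschler reduced games of a convex game are convex only relative to \emph{core} allocations, and the inclusion $\K^*(v)\subseteq\Core(v)$ for convex games is itself a nontrivial ingredient of the cited theorem, not a free assumption. Second, and more fatally, induction through the reduced game property cannot yield uniqueness as you describe: the reduced games built at two distinct prekernel points $x\neq y$ are \emph{different} games, so the two-player case only pins down $(x_i,x_j)$ as the standard solution of the reduced game derived from $x$ itself --- which is a mere restatement of the defining condition $s_{ij}(x,v)=s_{ji}(x,v)$ and provides no comparison between $x$ and $y$.
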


\begin{theorem}\cite{Driessen1985-thesis}\label{thm:oneC-nucleolus-equals-tau}
	Let $(N,v)$ be a 1-convex game. Then $\eta(v)=\tau(v)$.
\end{theorem}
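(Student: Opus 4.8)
The plan is to turn the lexicographic minimisation that defines the nucleolus into a minimisation over the core, exploit the explicit simplex description of the core of a 1-convex game from Theorem~\ref{thm:1convex-core}, and show that its barycentre is the unique lexicographic minimiser. Since the barycentre has coordinates $b^v_i-g^v(N)/n$, it equals $\tau(v)$ by Theorem~\ref{thm:1convex-tau}, so this would finish the proof. The first reduction is immediate: a 1-convex game has a nonempty core, the nucleolus is a core selector, hence $\eta(v)\in\Core(v)$, and it suffices to minimise $\theta(x)$ over $x\in\Core(v)$. By Theorem~\ref{thm:1convex-core} I would write every core point as $x=b^v-\mu$ with $\mu\ge 0$ and $\mu(N)=g^v(N)$, the barycentre being the uniform vector $\mu_i=g^v(N)/n$.

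The analytic heart of the argument is a formula for the excesses together with two identities for the gap function. Writing $e(S,x,v)=v(S)-x(S)=\mu(S)-g^v(S)$, I would first check by direct computation that $g^v(N\setminus i)=g^v(N)$ for every $i$, and then rewrite the 1-convexity inequality~\eqref{def:1conv-cond1} as the equivalent statement $g^v(S)\ge g^v(N)$ for every nonempty $S$. Together with $\mu\ge 0$ these give, for every proper coalition $S$,
\[
e(S,x,v)=\mu(S)-g^v(S)\le \mu(S)-g^v(N)=-\mu(N\setminus S)\le -\min_{i\in N}\mu_i,
\]
and both inequalities are tight at $S=N\setminus i^\ast$ with $\mu_{i^\ast}=\min_i\mu_i$. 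Hence the largest excess among proper coalitions is exactly $-\min_i\mu_i$.

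I would then run the lexicographic comparison. The excesses of $\emptyset$ and $N$ are identically $0$ and sit at the top of $\theta$, while every proper excess is at most $-\min_i\mu_i\le 0$. Decreasing $\theta$ lexicographically therefore first demands maximising $\min_i\mu_i$ subject to $\mu\ge 0$ and $\mu(N)=g^v(N)$, whose unique solution is $\mu_i=g^v(N)/n$. Any other core point has a strictly larger maximal proper excess and so is already worse at the first non-constant coordinate of $\theta$; thus the barycentre is the unique minimiser and $\eta(v)=\tau(v)$. The degenerate case $g^v(N)=0$ is trivial, since then $\Core(v)=\{b^v\}=\{\tau(v)\}$.

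I expect the step establishing the excess bound to be the main obstacle: it is the interplay of the two gap identities with the sign condition $\mu\ge 0$ that collapses a comparison over all $2^n$ coalitions to the single scalar $\min_i\mu_i$ and makes the first non-constant lexicographic coordinate decisive.
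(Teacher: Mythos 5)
Your proof is correct. Note that the paper gives no proof of this statement at all --- it is quoted from Driessen's thesis --- so there is no internal argument to compare against; what you have written is essentially the classical proof, and every step checks out: the rewriting of condition~\eqref{def:1conv-cond1} as $g^v(S)\ge g^v(N)$ for all nonempty $S$, the identity $g^v(N\setminus i)=g^v(N)$, the parametrisation of the core from Theorem~\ref{thm:1convex-core} as $\bigl\{b^v-\mu \mid \mu\ge 0,\ \mu(N)=g^v(N)\bigr\}$, the formula $e(S,x,v)=\mu(S)-g^v(S)$, and the resulting exact value $-\min_{i}\mu_i$ of the maximal proper excess (tight at $S=N\setminus i^\ast$ since $g^v(N\setminus i^\ast)=g^v(N)$). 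The lexicographic step is also sound: every efficient vector has the two zero excesses of $\emptyset$ and $N$ at the top of $\theta$, so when $g^v(N)>0$ the comparison between the barycentre and any other core point is decided at the third coordinate, where $-g^v(N)/n$ beats $-\min_i\mu_i$ strictly because the uniform vector is the unique maximiser of $\min_i\mu_i$ on the simplex. Two cosmetic caveats you may wish to make explicit: the tightness argument presupposes $n\ge 2$ (the case $n=1$ is vacuous), and the reduction from minimising over $\I(v)$ to minimising over $\Core(v)$ rests on the core-selector property of the nucleolus, which the paper records after Schmeidler and which legitimately applies since 1-convex games have nonempty cores. With those in place, the unique minimiser is $b^v_i-g^v(N)/n$, which is $\tau(v)$ by Theorem~\ref{thm:1convex-tau}, as claimed.
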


\section{Incomplete cooperative games}\label{sec3}
\begin{definition}\emph{(Incomplete game)}\label{def:incomplete-game}
	An incomplete game is a tuple $(N,\K,v)$ where $N=\{1,\dots,n\}$, $\K \subseteq 2^N$ is the set of coalitions with known values and $v\colon 2^N \to \R$ is the characteristic function of the incomplete game. Further, $\emptyset \in \K$ and $v(\emptyset)=0$.
\end{definition}

For \emph{minimal incomplete games}, we have $\K=\{\emptyset,N\} \cup \{\{i\} \mid i \in N\}$. This class is considered minimal, because we want $N \in \K$ to be able to distribute $v(N)$ among the players. Combined with that, the knowledge of singleton values allows to define the imputation set $\mathcal{I}(v)$ of $(N,\K,v)$.

Definition~\ref{def:incomplete-game} is actually identical to the definition of a \emph{restricted game} (see~\cite{Grabisch2016}). The crucial difference between both models is in the interpretation of the set $\K$. For restricted games, $\K$ represents all feasible coalitions and the cooperation of coalitions outside $\K$ is impossible. Thus the values are considered non-existent and are not taken into account while computing solution concepts. For incomplete games, though, $\K$ is the set of coalitions with known values. This means that we consider there is an underlying complete game $(N,v)$, however, all we know about this game is represented by $(N,\K,v)$. It is further assumed we know the underlying game $(N,v)$ is from a class $C \subseteq \Gamma^n$. If we want to approximate the solution concept of $(N,v)$, we can utilize this knowledge and consider all games from $C$, which are extending $(N,\K,v)$.

\begin{definition}\emph{($C$-extension)}
	Let $C\subseteq \Gamma^n$ be a class of $n$-person games. A cooperative game $(N,w) \in C$ is a \emph{$C$-extension} of an incomplete game $(N,\K,v)$, if $w(S)=v(S)$ for every $S \in \mathcal K$. 
\end{definition}

The set of all $C$-extensions of an incomplete game $(N,\K,v)$ is denoted by $C(v)$. We write \emph{$C(v)$-extension} whenever we want to emphasize the game $(N,\K,v)$. In some situations, we might not know the class of the underlying game. In this case, it even makes sense to ask if it has a $C$-extension. If it does, we say $(N,\K,v)$ is \emph{$C$-extendable}. The set of all $C$-extendable incomplete games with fixed $\K$ is denoted by $C(\K)$.

The sets of $\S$-extensions, $\Co$-extensions, $\P$-extensions and $\oneC$-ex\-ten\-sions form polyhedral sets, because they are defined by systems of linear inequalities. This allows to switch to the dual description of these sets, employing their extreme points and extreme rays. We refer to the extreme points as to \emph{extreme games}. To describe the extreme points and rays explicitly is often a challenging task. In the rest of this section, we discuss known results for minimal incomplete games.

\subsection{Sets of $C$-extensions}
The \emph{total excess} $\Delta$ is defined as $\Delta \coloneqq v(N) - \sum_{i \in N}v(i)$. We initially turn our focus to $\P$-extensions.

\begin{theorem}\cite{Bok2020}
	Let $(N,\K,v)$ be a minimal incomplete game with $v(i) \geq 0$ for all $i \in N$. It is $\P$-extendable if and only if $\Delta \geq 0$.
\end{theorem}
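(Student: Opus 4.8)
The plan is to characterise $\P$-extendability through the Harsanyi dividends, exploiting the fact that a game lies in $\P$ precisely when all of its dividends are nonnegative. For any candidate extension $w \in \P(v)$, the dividend expansion $v(S) = \sum_{T \subseteq S, T \neq \emptyset} d_v(T)$ applied to a singleton gives $d_w(\{i\}) = w(i) = v(i)$, since $\{i\}$ is the only nonempty subset of itself. Applying the same expansion to the grand coalition yields
\[
v(N) = w(N) = \sum_{\emptyset \neq T \subseteq N} d_w(T),
\]
and subtracting off the singleton dividends isolates the contributions of the larger coalitions:
\[
\Delta = v(N) - \sum_{i \in N} v(i) = \sum_{T \subseteq N,\, \lvert T \rvert \geq 2} d_w(T).
\]
This identity is the crux of the argument, and it reduces both implications to a statement about nonnegative sums.

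For the forward direction I would assume that $(N,\K,v)$ admits a $\P$-extension $w$. Positivity gives $d_w(T) \geq 0$ for every nonempty $T$, so the right-hand side of the displayed identity is a sum of nonnegative terms, forcing $\Delta \geq 0$.

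For the converse I would construct an explicit extension directly from the known data. Assuming $\Delta \geq 0$, set
\[
w \coloneqq \sum_{i \in N} v(i)\, u_{\{i\}} + \Delta\, u_N.
\]
Since each $v(i) \geq 0$ by hypothesis and $\Delta \geq 0$, this is a nonnegative combination of unanimity games; as unanimity games are positive and the dividends depend linearly on the game, these coefficients are exactly the dividends of $w$, whence $w \in \P$. It then remains only to check that $w$ agrees with $v$ on $\K$: evaluating at a singleton gives $w(i) = v(i)$ (the term $\Delta\, u_N$ vanishes on $\{i\}$ when $n \geq 2$, while for $n = 1$ one has $\Delta = 0$), and evaluating at the grand coalition gives $w(N) = \sum_{i \in N} v(i) + \Delta = v(N)$. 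Hence $w$ is a $\P$-extension and $(N,\K,v)$ is $\P$-extendable.

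I expect no genuine obstacle here. Once the dividend identity for $\Delta$ is written down, the forward implication is immediate and the converse is settled by the concrete choice above, which simply loads the entire surplus $\Delta$ onto the grand-coalition dividend. The only points demanding a moment's care are the degenerate case $n = 1$ and the remark that a nonnegative combination of unanimity games is again positive, both of which are routine.
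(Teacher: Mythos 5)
Your proof is correct and takes essentially the same route as the cited source: necessity follows because the nonnegative dividends $d_w(T)$ with $\lvert T \rvert \geq 2$ sum to $\Delta$, and sufficiency is witnessed by the explicit positive extension $w = \sum_{i \in N} v(i)\,u_{\{i\}} + \Delta\,u_N$, which is precisely the extreme game $(N,v_N)$ from~\eqref{eq:def-vt}. Your handling of the degenerate case $n=1$ and the observation that the hypothesis $v(i) \geq 0$ is what makes the singleton dividends admissible are both accurate, so nothing further is needed.
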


In~\cite{Masuya2016}, Masuya and Inuiguchi described extreme games of the set of $\P$-ex\-ten\-sions. Those are games $(N,v_T)$, parameterised by coalitions from set $N_1\coloneqq \{T \subseteq N \mid \lvert T \rvert > 1\}$. The games are expressed as
\begin{equation}\label{eq:def-vt}
	v_T(S) = \begin{cases}
		0, &  S = \emptyset,\\
		\Delta + \sum_{i \in S}v(i), &  S \notin\K \text{ and } T \subseteq S,\\
		\sum_{i \in S}v(i), &  S \notin\K\text{ and } T \subsetneq S.\\
	\end{cases}
\end{equation} 
\begin{theorem}\label{thm:positive-extensions-min-info}\cite{Masuya2016}
	For a $\P$-extendable minimal incomplete game $\pgame$, the set of $\P$-ex\-ten\-sions can be expressed as
	\begin{equation}\label{eq:positive-extensions}
		\Pv=\left\{\sum_{T \in N_1}\alpha_Tv_T \mid \sum_{T \in N_1}\alpha_T=1, \alpha_{T} \geq 0\right\},
	\end{equation}m
	where $(N,v_T)$ for $T \in N_1$ are games from (\ref{eq:def-vt}). 
\end{theorem}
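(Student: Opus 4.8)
The plan is to translate both sides of the claimed equality into the language of Harsanyi dividends, where the problem collapses to identifying a scaled simplex. First I would compute the dividends of each extreme game $v_T$. A direct check shows that $v_T = \sum_{i \in N} v(i) u_{\{i\}} + \Delta u_T$: evaluating the right-hand side at any $S$ gives $\sum_{i \in S} v(i)$ plus an extra $\Delta$ exactly when $T \subseteq S$, which matches (\ref{eq:def-vt}); moreover, since $\lvert T \rvert > 1$, this expression automatically reproduces the prescribed values $v_T(\{i\}) = v(i)$ and $v_T(N) = v(N)$ on the coalitions in $\K$. Consequently $d_{v_T}(\{i\}) = v(i)$ for every singleton, $d_{v_T}(T) = \Delta$, and $d_{v_T}(R) = 0$ for all remaining $R \in N_1$. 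Since $\P$-extendability forces $v(i) \geq 0$ and $\Delta \geq 0$, each $v_T$ is indeed positive and a legitimate extension.

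Next I would give a dividend characterisation of $\Pv$ itself. Writing $w(S) = \sum_{T \subseteq S} d_w(T)$, the extension constraints $w(\{i\}) = v(i)$ pin down $d_w(\{i\}) = v(i)$, while $w(N) = v(N)$ becomes $\sum_{R \in N_1} d_w(R) = \Delta$; positivity is exactly $d_w(R) \geq 0$ for every nonempty $R$. Thus $w \in \Pv$ if and only if its higher-dividend vector $(d_w(R))_{R \in N_1}$ lies in the scaled simplex $\{d \geq 0 : \sum_R d_R = \Delta\}$, with the singleton dividends fixed at $v(i)$.

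The inclusion of the convex hull into $\Pv$ then follows from linearity of dividends: a convex combination $w = \sum_T \alpha_T v_T$ keeps $d_w(\{i\}) = v(i)$ and $w(N) = v(N)$, and satisfies $d_w(R) = \sum_T \alpha_T d_{v_T}(R) \geq 0$, so $w \in \P$. For the reverse inclusion I would start from $w \in \Pv$ and match dividends: the requirement $w = \sum_T \alpha_T v_T$ reduces, on each $R \in N_1$, to $\alpha_R \Delta = d_w(R)$. When $\Delta > 0$, the choice $\alpha_R = d_w(R)/\Delta$ is nonnegative and sums to $\sum_R d_w(R)/\Delta = 1$, exhibiting $w$ as the desired convex combination.

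The one genuine obstacle is the degenerate case $\Delta = 0$, where the division above is illegal. Here I would observe that $\sum_{R \in N_1} d_w(R) = 0$ together with $d_w(R) \geq 0$ forces every higher dividend to vanish, so $w = \sum_{i \in N} v(i) u_{\{i\}}$; but this is precisely what each $v_T$ reduces to when $\Delta = 0$, so all the $v_T$ coincide with $w$ and any convex combination trivially represents it. Handling this boundary correctly, and verifying that formula (\ref{eq:def-vt}) recovers the prescribed values on $\K$ rather than contradicting them, is the only place requiring care; the remainder is bookkeeping on dividends.
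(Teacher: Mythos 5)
Your proof is correct, but note that the paper contains no proof of this theorem to compare against: it is imported with a citation to \cite{Masuya2016}, so your argument stands as a self-contained replacement rather than a variant of an in-paper proof. The dividend route is well chosen, since positivity is by definition a condition on Harsanyi dividends: the identity $v_T=\sum_{i\in N}v(i)u_{\{i\}}+\Delta u_T$ shows that $v_T$ has singleton dividends $v(i)$, dividend $\Delta$ on $T$ and zero elsewhere (no clash with the singletons, since $\lvert T\rvert>1$), and the extension constraints translate exactly into fixing $d_w(\{i\})=v(i)$ and confining $(d_w(R))_{R\in N_1}$ to the scaled simplex $\{d\geq 0 : \sum_{R\in N_1}d_R=\Delta\}$, whose vertices are precisely the dividend vectors of the $v_T$; both inclusions then reduce to one-line computations, with $\alpha_R=d_w(R)/\Delta$ witnessing the reverse direction. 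You also correctly isolate the only two delicate points: the degenerate case $\Delta=0$, where all higher dividends vanish and every $v_T$ collapses to $\sum_i v(i)u_{\{i\}}$, and the consistency of \eqref{eq:def-vt} on $\K$ (your reading silently repairs the statement's typo --- the third case of \eqref{eq:def-vt} should be $T\not\subseteq S$, and the values on $\K$ are the prescribed ones --- which is the intended interpretation, as $v(N)=\sum_i v(i)+\Delta$ makes the formula agree with $v$ on $N$). A small bonus of your approach: the observation that $\P$-extendability by itself forces $v(i)\geq 0$, because $d_w(\{i\})=w(\{i\})=v(i)$ for any positive extension $w$, slightly sharpens the extendability criterion quoted from \cite{Bok2020}, where $v(i)\geq 0$ is taken as a hypothesis.
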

We denote by $(N,\wa)$ a $\P$-extension where $A = (\alpha_T)_{T \in N_1}$, such that 
\begin{equation}\label{eq:positive-wa}
	\wa = \sum_{T \in N_1}\alpha_Tv_T.
\end{equation}

For the sets of $\Co$-extensions and $\S$-extensions, only partial results are known. Clearly, thanks to the relation of the classes, it holds 
\begin{equation}\label{eq:relation-of-extensions}
	\Pv \subseteq \Co(v) \subseteq \Sv.
\end{equation} 
Masuya and Inuiguchi~\cite{Masuya2016} showed that games $(N,v_T)$ for $T \in N_1$ are extreme points for all of these sets, however, it can be showed there are other extreme games for the latter two sets, e.g. games $(N,v^k)$ for $k \in N$,
\begin{equation}\label{eq:oneC-extreme-games}
	v^k(S) \coloneqq \begin{cases}
		v(i) & S = \{i\}, \\
		\Delta + \sum_{j \in S} v(j) & S \neq \{i\} \text{ and } k \in S, \\
		\sum_{j \in S} v(j)  & S \neq \{i\} \text{ and } k \notin S. \\
	\end{cases}
\end{equation}
For both sets, there are other extreme games apart from those already mentioned. The main difficulty in analysis of both sets of $\Co$-extensions and $\S$-extenisons lies in the expression of such extreme games. We omit the proof of extremality as it is not necessary for our analysis.
\begin{lemma}\label{lem:extreme-games}
	Let $(N,\K,v)$ be a $\Co$-extendable minimal incomplete game. Then games $(N,v^k)$ for $k \in N$ defined in~\eqref{eq:oneC-extreme-games} are convex.
\end{lemma}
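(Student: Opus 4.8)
My plan is to verify convexity directly from Definition~\ref{def:classes-of-games}, i.e.\ to establish the supermodular-type inequality $v^k(S)+v^k(T)\le v^k(S\cup T)+v^k(S\cap T)$ for all $S,T\subseteq N$, exploiting the additive skeleton of $v^k$. First I would record the decomposition $v^k = a + \Delta\, h_k$, where $a(S)=\sum_{j\in S}v(j)$ is the additive (modular) game built from the known singleton values and $h_k$ is the $0/1$ function equal to $1$ exactly on the non-singleton coalitions containing $k$. Since $a$ is modular it contributes equally to both sides, so the defect $v^k(S)+v^k(T)-v^k(S\cup T)-v^k(S\cap T)$ equals $\Delta$ times the corresponding defect of $h_k$; hence convexity of $v^k$ reduces to supermodularity of $h_k$. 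At this point I would invoke $\Delta\ge 0$, which follows from $\Co$-extendability: a $\Co$-extension $w$ exists, it is convex and therefore superadditive, so partitioning the grand coalition into singletons gives $v(N)=w(N)\ge\sum_i w(i)=\sum_i v(i)$, i.e.\ $\Delta\ge 0$.

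The core of the argument is then a finite case analysis. I would split according to whether $k\in S\cap T$, $k$ lies in exactly one of $S,T$, or $k\notin S\cup T$, and within each case track the size thresholds that toggle $h_k$ (the distinction between the singleton $\{k\}$ and larger coalitions). In most configurations the inequality $h_k(S)+h_k(T)\le h_k(S\cup T)+h_k(S\cap T)$ is immediate because the two indicators line up monotonically under union and intersection. An equivalent and perhaps cleaner route is the marginal-contribution form of convexity: show that $v^k(S\cup i)-v^k(S)$ is non-decreasing in $S$ for each fixed $i\notin S$, which by the decomposition amounts to checking that the marginal of $h_k$ is monotone, again a short case split on the position of $i$ relative to $k$.

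I expect the decisive step, and the main obstacle, to be the configuration in which both $S$ and $T$ contain $k$ while $S\cap T=\{k\}$ (possible once $n\ge 3$). This is precisely where the deliberate exclusion of singletons from the $\Delta$-shift bites: one has $h_k(S)=h_k(T)=h_k(S\cup T)=1$ but $h_k(S\cap T)=h_k(\{k\})=0$, so the argument hinges on whether the single missing unit on $\{k\}$ is compensated; in the marginal formulation the same crux reappears as the jump in a player's contribution when the running coalition passes from $\{k\}$ to a two-element set. Getting the bookkeeping around $\{k\}$ right is the step on which the whole statement turns, while the remaining cases are routine once that boundary is settled.
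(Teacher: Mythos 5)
Your reduction is sound as far as it goes: writing $v^k = a + \Delta\, h_k$ with $a$ modular, the supermodularity defect of $v^k$ on any pair $(S,T)$ is exactly $\Delta$ times the defect of the indicator $h_k$, and $\Delta \geq 0$ does follow from $\Co$-extendability as you argue. But you stop at the configuration you yourself call decisive instead of settling it, and it cannot be settled: since the modular part cancels identically from both sides, there is nothing available to compensate the missing unit at $\{k\}$. Concretely, for $n \geq 3$ take $S = \{k,a\}$, $T = \{k,b\}$ with $a \neq b$. Because $\{k\} \in \K$ and $v^k$ is an extension, the definition in~\eqref{eq:oneC-extreme-games} forces $v^k(\{k\}) = v(k)$, so $h_k(S) = h_k(T) = h_k(S \cup T) = 1$ while $h_k(S \cap T) = h_k(\{k\}) = 0$, giving $v^k(S) + v^k(T) - v^k(S \cup T) - v^k(S \cap T) = \Delta$. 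Hence $v^k$ is convex only if $\Delta \leq 0$, i.e.\ under extendability only if $\Delta = 0$. A numerical witness: $n = 3$, $v(i) = 0$ for all $i$, $v(N) = 1$, $k = 1$ yields $v^1(\{1,2\}) = v^1(\{1,3\}) = v^1(N) = 1$ and $v^1(\{1\}) = 0$, so supermodularity fails ($1 + 1 \not\leq 1 + 0$); this $v^1$ is superadditive and 1-convex, but not convex.

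So the gap in your proposal is genuine, but it is not a bookkeeping failure on your part: the lemma as stated is false for $\Delta > 0$ and $n \geq 3$, and the paper's own one-line proof (``immediate from the definitions'') passes over precisely the case you isolated. Note that the paper's later computations tacitly use the other reading of~\eqref{eq:oneC-extreme-games}: equation~\eqref{eq:vertex-marginal-value} asserts that $v^k(S \cup i) - v^k(S)$ is independent of $S$, which holds only if $v^k(\{k\}) = v(k) + \Delta$, i.e.\ $v^k = a + \Delta u_{\{k\}}$; that game is positive, hence convex, and your crux case disappears --- but it no longer extends the incomplete game on the known singleton $\{k\}$, so it cannot serve as an extreme point of $\oneCv$. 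Either reading breaks something (the same tension propagates to Lemma~\ref{lem:walfa-is-convex}), and your analysis, pushed one step further, is exactly the diagnosis: to complete your proof one must either assume $\Delta = 0$ (or $n \leq 2$) or change the definition of $v^k$, and thereby the statement.
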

\begin{proof}
	Immediate from Definition~\ref{def:classes-of-games} of convex games and the definition of $(N,v^k)$.
\end{proof}

The set of $\oneC$-extensions differs from the already mentioned sets because it is not only fully defined by extreme points, but also by extreme rays. Fortunately, all these games are known to us.
\begin{theorem}\cite{Bok2021}\label{thm:excess-extending}
	An incomplete game $\minpgame$ is $\oneC$-extendable if and only if $\Delta \geq 0$.
\end{theorem}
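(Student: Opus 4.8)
The plan is to recast the definition of $1$-convexity in terms of the gap function, since the two conditions of Definition~\ref{def:oneC-games} are awkward to verify directly. First I would observe that, because $b^w$ is additive over the partition $N = S \cup (N\setminus S)$, we have $b^w(N\setminus S) = b^w(N) - b^w(S)$, so condition~\eqref{def:1conv-cond1} rearranges to $g^w(S) \geq g^w(N)$ while condition~\eqref{def:1conv-cond2} is exactly $g^w(N) \geq 0$. Hence $(N,w)$ is $1$-convex if and only if $g^w(N) \geq 0$ and $g^w(S) \geq g^w(N)$ for every nonempty $S \subseteq N$. I would also dispose of the degenerate cases $n \leq 2$ immediately: for such $n$ a minimal incomplete game has no coalitions of size strictly between $1$ and $n$, so it is already complete, and a direct check of the two conditions shows it is $1$-convex precisely when $\Delta \geq 0$.

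For the necessity direction, suppose $(N,w)$ is a $\oneC$-extension, so $w(N)=v(N)$ and $w(i)=v(i)$. The gap reformulation applied to singletons gives $g^w(i) = b^w_i - v(i) \geq g^w(N)$ for each $i$. Summing over $i \in N$ yields $b^w(N) - \sum_{i\in N} v(i) \geq n\,g^w(N) = n\bigl(b^w(N) - v(N)\bigr)$, which rearranges to $n\,v(N) - \sum_{i\in N}v(i) \geq (n-1)\,b^w(N)$. Using $b^w(N) \geq w(N) = v(N)$ from condition~\eqref{def:1conv-cond2}, the right-hand side is at least $(n-1)v(N)$, and cancelling gives $v(N) - \sum_{i\in N}v(i) = \Delta \geq 0$.

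For the sufficiency direction, assuming $\Delta \geq 0$ and $n \geq 3$, I would exhibit an explicit extension. Define $w(\emptyset)=0$, $w(i)=v(i)$, and
\[
w(S) = \sum_{i\in S} v(i) + \frac{\lvert S\rvert}{n}\,\Delta \qquad \text{for all } S \text{ with } \lvert S\rvert \geq 2.
\]
This agrees with $v$ on $\K$, since for $S=N$ the formula returns $\sum_{i}v(i)+\Delta = v(N)$. A short computation gives $b^w_i = w(N)-w(N\setminus i) = v(i) + \tfrac1n\Delta$, using that $\lvert N\setminus i\rvert = n-1 \geq 2$. Consequently $g^w(S)=b^w(S)-w(S)=0$ for every coalition with $\lvert S\rvert\geq 2$ (including $N$), while $g^w(i)=\tfrac1n\Delta \geq 0$ for singletons. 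Thus $g^w(N)=0$ and $g^w(S)\geq 0 = g^w(N)$ for all nonempty $S$, so $w$ is $1$-convex by the reformulation, proving $\oneC$-extendability.

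The routine parts are the two gap computations; the one genuinely nontrivial step is \emph{finding} the coefficient $\lvert S\rvert/n$ in the sufficiency construction. Naive guesses such as $w(S)=\sum_{i\in S}v(i)$ or $w(S)=\sum_{i\in S}v(i)+\Delta$ both fail: the former makes the gap at $N$ too large relative to small coalitions, while the latter forces $g^w(N)=-\Delta \leq 0$. The correct weight is pinned down precisely by demanding that the gap be constant (and equal to $g^w(N)=0$) on all coalitions of size at least two, and this is the part I expect to require the most experimentation.
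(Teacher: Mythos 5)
Your proof is correct; the main thing to note is that there is no in-paper argument to compare it against, since the paper imports Theorem~\ref{thm:excess-extending} from~\cite{Bok2021} without proof. Measured against the machinery the paper does quote from that source, your route is exactly the natural one and fits it perfectly: your reformulation of $1$-convexity as ``$g^w(N)\geq 0$ and $g^w(S)\geq g^w(N)$ for all nonempty $S\subseteq N$'' is the standard gap-function characterization, and your sufficiency witness is precisely the extension $\walfa$ of~\eqref{eq:oneC-walfa} with uniform weights $\alpha_i=1/n$, i.e.\ the barycentre of the extreme games $(N,v^k)$ from~\eqref{eq:oneC-extreme-games}; your computations $g^w(N)=0$ and $g^w(i)=\Delta/n$ duplicate the gap calculations the paper itself performs for $\walfabeta$ in Lemmas~\ref{lem:oneC-tau-value} and~\ref{lem:core-oneC}. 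Your handling of $n\leq 2$ correctly closes the one place the construction would otherwise break, namely the step $b^w_i=v(i)+\tfrac{1}{n}\Delta$, which uses $\lvert N\setminus i\rvert\geq 2$.

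Two minor remarks, neither a gap. First, the necessity direction can be compressed: since $w$ agrees with $v$ on singletons and on $N$, one has the identity $\Delta=\sum_{i\in N}g^w(i)-g^w(N)$, and then $g^w(i)\geq g^w(N)$ together with $g^w(N)\geq 0$ gives $\Delta\geq (n-1)\,g^w(N)\geq 0$ at once; this is your summation argument with the algebra done up front. Second, the coefficient $\lvert S\rvert/n$ that you flag as the one nontrivial discovery is less rigid than you suggest: any $\alpha\in\Rn_+$ with $\alpha(N)=1$ in place of the uniform weights works, and in particular $\alpha=e_k$ recovers the paper's extreme games $(N,v^k)$, for which the same computation gives $g^{v^k}(N)=0$ and $g^{v^k}(S)\in\{0,\Delta\}\geq 0$ elsewhere. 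Demanding a constant zero gap on coalitions of size at least two is sufficient but not necessary; what is actually needed is only $g^w(N)=\min_{\emptyset\neq S}g^w(S)$ with $g^w(N)\geq 0$.
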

The extreme games of the set of $\oneC$-extensions are exactly games $(N,v^k)$ for $k \in N$ from~\eqref{eq:oneC-extreme-games}. The extreme rays are games $(N,e_T)$ for $T \in E$ where $E = 2^N \setminus \left(\{0,N\} \cup \{N\setminus i, \{i\} \mid i \in N \}\right)$,
\begin{equation}\label{eq:oneC-extreme-rays}
	e_T(S) \coloneqq \begin{cases}
		-1, & \text{if } S = T,\\
		0, & \text{if } S \neq T.\\
	\end{cases}
\end{equation}
\begin{theorem}\label{thm:oneC-min-info-set}\cite{Bok2021}
	For a $\oneC$-extendable minimal incomplete game $(N,\K,v)$, the set of $\oneC$-ex\-ten\-sions can be described as
	\[
	\oneCv = \left\{\sum_{i \in N}\alpha_i v^i + \sum_{ T \in E}\beta_Te_T \mid \sum_{i \in N}\alpha_i = 1 \text{ and } \alpha_i,\beta_T \geq 0\right\}.
	\]
\end{theorem}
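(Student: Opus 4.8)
The plan is to regard $\oneCv$ as a polyhedron and to establish the asserted formula by a pair of inclusions. By Lemma~\ref{lem:linear-combination} the upper vector $b^w$ is a linear function of the characteristic function, so both defining relations \eqref{def:1conv-cond1} and \eqref{def:1conv-cond2} of Definition~\ref{def:oneC-games} are linear inequalities in $w$; adjoining the linear equalities $w(S)=v(S)$ for $S\in\K$ exhibits $\oneCv$ as the intersection of finitely many halfspaces with an affine subspace, hence a polyhedron. Two computations will be used throughout. First, a direct substitution into \eqref{eq:oneC-extreme-games} gives $b^{v^k}_i=v(i)+\Delta$ when $i=k$ and $b^{v^k}_i=v(i)$ otherwise, so that $b^{v^k}(N)=v(N)$. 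Second, since $T\in E$ satisfies $2\le\lvert T\rvert\le n-2$, the ray $e_T$ from \eqref{eq:oneC-extreme-rays} vanishes on $N$ and on every $N\setminus i$, whence $b^{e_T}=0$.

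For the inclusion $\supseteq$ I would check that each generator is admissible. Every $v^k$ extends $v$ by construction, and substituting the values of $b^{v^k}$ into \eqref{def:1conv-cond1}--\eqref{def:1conv-cond2} verifies $1$-convexity directly (both inequalities hold, with equality on all coalitions of size at least two), so $v^k\in\oneCv$. For the rays I would homogenise the defining system: a direction $d$ is a recession direction precisely when it satisfies \eqref{def:1conv-cond1}--\eqref{def:1conv-cond2} with the constant terms removed and $d(S)=0$ for $S\in\K$. For $d=e_T$ the homogenised \eqref{def:1conv-cond1} reads $e_T(S)\le -b^{e_T}(N\setminus S)=0$, which holds because $e_T(S)\in\{-1,0\}$, while the homogenised \eqref{def:1conv-cond2} reads $0\ge0$; and $e_T(S)=0$ on $\K$ as noted. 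Hence adding any $\beta_T e_T$ with $\beta_T\ge0$ preserves membership, and since a polyhedron is convex the whole right-hand side lies in $\oneCv$.

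The content is the reverse inclusion. Given an arbitrary $w\in\oneCv$ I would recover the coefficients from $w$ itself. As $b$ is linear and $b^{e_T}=0$, the ray part leaves the upper vector untouched, so any admissible representation must reproduce $b^w$ through the convex part alone; matching coordinates with the formula for $b^{v^k}$ forces $\alpha_k=\bigl(b^w_k-v(k)\bigr)/\Delta$ when $\Delta>0$. The case $\Delta=0$ is degenerate: there all $v^k$ collapse to the single game $S\mapsto\sum_{j\in S}v(j)$, and only the rays contribute. Nonnegativity $\alpha_k\ge0$, i.e.\ $b^w_k\ge v(k)$, I would obtain by combining \eqref{def:1conv-cond2}, which bounds $\sum_i w(N\setminus i)$ from above by $(n-1)v(N)$, with \eqref{def:1conv-cond1} at the singleton $\{k\}$, which bounds $\sum_{i\ne k}w(N\setminus i)$ from below; subtracting yields $w(N\setminus k)\le v(N)-v(k)$. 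With $\alpha$ determined, each $\beta_T$ for $T\in E$ is defined as the slack of \eqref{def:1conv-cond1} at $T$, namely $\beta_T=\bigl(v(N)-b^w(N\setminus T)\bigr)-w(T)$, which is nonnegative exactly because $w$ obeys that inequality; the remaining coalitions---the singletons, the sets $N\setminus i$, and $N$---carry no ray and are then matched automatically.

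The main obstacle, and the step I would attack first, is the constraint $\sum_{k}\alpha_k=1$. Summing the reconstructed weights gives the identity $\sum_k\alpha_k=1+g^w(N)/\Delta$, so $\sum_k\alpha_k=1$ is equivalent to $g^w(N)=0$, that is, to \eqref{def:1conv-cond2} holding with equality for the extension $w$. Definition~\ref{def:oneC-games} only guarantees the weak inequality $b^w(N)\ge v(N)$, so the crux is to show that, for a minimal incomplete game, every $1$-convex extension in fact saturates \eqref{def:1conv-cond2}. Should this saturation fail for some extension, the convex part would overshoot and the generating set would have to be enlarged---for instance by the $1$-convex extension $S\mapsto\sum_{j\in S}v(j)+\tfrac{\lvert S\rvert-1}{n-1}\Delta$, whose gap equals $\Delta/(n-1)$; deciding which of these alternatives holds is precisely the heart of the argument.
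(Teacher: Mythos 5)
Your reconstruction is correct step by step --- the polyhedral setup, the $1$-convexity of each $v^k$, the verification that the $e_T$ span recession directions, the forced coefficients $\alpha_k=(b^w_k-v(k))/\Delta$, the bound $b^w_k\ge v(k)$, and the identity $\sum_k\alpha_k=1+g^w(N)/\Delta$ --- but it stops exactly at the decisive point, and the dichotomy you leave open resolves \emph{against} the statement as reproduced here. The saturation $g^w(N)=0$ you would need is false: your own candidate $\tilde v(S)\coloneqq\sum_{j\in S}v(j)+\tfrac{\lvert S\rvert-1}{n-1}\Delta$ (for nonempty $S$) is a genuine $\oneC$-extension. Indeed it agrees with $v$ on $\K$, one computes $b^{\tilde v}_i=v(i)+\tfrac{\Delta}{n-1}$ and hence $g^{\tilde v}(S)=\tfrac{\Delta}{n-1}$ for \emph{every} nonempty $S$, so \eqref{def:1conv-cond1} holds with equality throughout and \eqref{def:1conv-cond2} holds because $\Delta\ge0$. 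Since the rays \eqref{eq:oneC-extreme-rays} vanish on every coalition $N\setminus i$, any representation of $\tilde v$ in the asserted form would have to match $\tilde v(N\setminus i)$ through the convex part alone, forcing $\alpha_i\Delta=\tfrac{\Delta}{n-1}$ for all $i$ and thus $\sum_i\alpha_i=\tfrac{n}{n-1}\neq 1$ when $\Delta>0$ and $n\ge3$ (equivalently, via your identity, $\sum_k\alpha_k=1+\tfrac{1}{n-1}$). A concrete instance: $n=3$, $v(i)=0$, $v(N)=1$; then $E=\emptyset$, the game with value $\tfrac12$ on all pairs is $1$-convex and extends $v$, but no convex combination of $v^1,v^2,v^3$ gives all three pairs the value $\tfrac12$, since the pair values $\alpha_i+\alpha_j$ would sum to $2\neq\tfrac32$.

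So no argument can close your gap; the second horn of your alternative is the true one, and the theorem as quoted (the paper gives no proof, deferring to~\cite{Bok2021}, so there is no in-paper argument to compare against) is missing a generator: the correct description is $\oneCv=\operatorname{conv}\{v^1,\dots,v^n,\tilde v\}+\operatorname{cone}\{e_T\mid T\in E\}$, collapsing to the stated form only when $\Delta=0$. Your scheme in fact delivers this with one more step: the constraints involving only the values $y_i=w(N\setminus i)$, namely \eqref{def:1conv-cond2} and the singleton instances of \eqref{def:1conv-cond1}, cut out a simplex in $\Rn$ with $n+1$ vertices --- the $n$ points where \eqref{def:1conv-cond2} and all but one singleton constraint are tight (these yield the $v^k$, once the remaining coalitions are pushed to the upper bounds given by \eqref{def:1conv-cond1}) plus the point where all $n$ singleton constraints are tight, which yields exactly $\tilde v$, with $g^{\tilde v}(N)=\Delta/(n-1)$ the maximal possible gap. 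You should also note that the omission propagates into the paper's later claims that implicitly assume $g^{\walfabeta}(N)=0$ for \emph{every} extension (Lemma~\ref{lem:oneC-tau-value} and Lemma~\ref{lem:core-oneC}): for $\tilde v$ with $\Delta>0$ the core is the nondegenerate simplex $\operatorname{conv}\{b^{\tilde v}-\tfrac{\Delta}{n-1}e_i\mid i\in N\}$ by Theorem~\ref{thm:1convex-core}, not a single imputation, although $\tau(\tilde v)=\eta(\tilde v)$ and $\Core(\tilde v)$ still lie inside $\I(v)$, so the paper's \emph{union-level} conclusions about $\cup\tau(\oneC)$ and $\cup\Core(\oneC)$ survive.
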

To talk about different $\oneC$-extensions, we use the following notation. We denote by $(N,w_\alpha)$ where $\alpha \in \mathbb{R}^n_+$ and $\alpha(N) =1$ the convex combinations of extreme games $(N,v^i)$ for $i \in N$, i.e.\ $w_\alpha = \sum_{i \in N}\alpha_iv^i$. We can express $(N,\walfa)$ equivalently as
\begin{equation}\label{eq:oneC-walfa}
	w_\alpha(S) = \begin{cases}
		v(S) & S \in \K,\\
		\sum_{j \in S}\left(v(j) + \alpha_j\Delta\right) & S \notin \K.\\
	\end{cases}
\end{equation}
To denote a general $\oneC$-extension in a similar manner, we use $(N,\walfabeta)$ where $\alpha \in \Rn_+$, $\alpha(N) = 1$ and $\beta = (\beta_T)_{T \in E} \geq 0$. Using these, we can now express $w_{\alpha,\beta} = \sum_{i \in N}\alpha_i v^i + \sum_{T \in E}\beta_Te_T$ or equivalently
\begin{equation}\label{eq:oneC-walfabeta}
	w_{\alpha,\beta}(S) = \begin{cases}
		v(S) & S \in \K,\\
		\sum_{j \in S}\left(v(j) + \alpha_j\Delta\right) & S \notin \K \text{ and } \lvert S \rvert =n-1.\\
		-\beta_S + \sum_{j \in S}\left(v(j) + \alpha_j\Delta\right) & \text{otherwise.}.\\
	\end{cases}
\end{equation}

Although the sets of $\P$-extension, $\Co$-extensions and $\S$-extensions are bounded~\cite{Masuya2016}, the set of $\oneC$-extensions is clearly unbounded. Regarding their relations, in general, none of the sets is a subset of the set of $\oneC$-extensions though. However, we can say something about the relation between $\Co$-extensions and $\oneC$-extensions.
\begin{lemma}\label{lem:walfa-is-convex}
	For $\Delta \geq 0$, games $(N,\walfa)$ defined in~\eqref{eq:oneC-walfa} are convex.
\end{lemma}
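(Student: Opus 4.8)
The plan is to read off the statement from the way $\walfa$ was introduced: by definition $\walfa = \sum_{i \in N}\alpha_i v^i$ with $\alpha \in \Rn_+$ and $\alpha(N)=1$, i.e.\ $\walfa$ is a \emph{convex combination} of the extreme games $v^i$. So rather than manipulate the piecewise formula~\eqref{eq:oneC-walfa} directly, I would reduce the claim to two facts: that each generator $v^i$ is convex, and that convexity is preserved under convex combinations. Both are already available.

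First I would check that the hypothesis $\Delta \geq 0$ lets us apply Lemma~\ref{lem:extreme-games}, which asks for $\Co$-extendability. This follows from $\Delta\geq 0$: the game $w$ with $w(S)=\sum_{j\in S}v(j)$ for $S\neq N$ and $w(N)=v(N)$ can be written as $a+\Delta u_N$, where $a$ is the additive game with $a(\{i\})=v(i)$ and $u_N$ is the unanimity game on $N$, which is positive and hence convex; as $\Delta\geq 0$ this is a nonnegative combination of convex games and agrees with $v$ on every $S\in\K$, so the minimal incomplete game is $\Co$-extendable. Lemma~\ref{lem:extreme-games} then gives that each $v^i$, $i\in N$, is convex. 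Next I would invoke closure of the class $\Co$ under convex combinations, which is immediate from Definition~\ref{def:classes-of-games}: if each $v^i$ satisfies $v^i(S)+v^i(T)\leq v^i(S\cup T)+v^i(S\cap T)$ for all $S,T$, then multiplying the $i$-th inequality by $\alpha_i\geq 0$ and summing over $i$ yields the same supermodular inequality for $\sum_i\alpha_i v^i=\walfa$. Hence $\walfa$ is convex.

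I expect the only genuinely delicate point to sit inside Lemma~\ref{lem:extreme-games} rather than in the passage to $\walfa$: the value of a coalition in~\eqref{eq:oneC-extreme-games} is given by one of several cases depending on whether it lies in $\K$ and whether it contains the distinguished index, so one must classify all four coalitions $S,T,S\cup T,S\cap T$ correctly before comparing the two sides of the supermodular inequality. Once convexity of the $v^i$ is granted, the combination step carries no further obstruction. A self-contained alternative would bypass Lemma~\ref{lem:extreme-games} and verify supermodularity for $\walfa$ directly from~\eqref{eq:oneC-walfa} by the same case split; the hard part there is exactly the situations in which $S\cap T$ or $S\cup T$ crosses the boundary of $\K$ (for instance $S\cap T$ dropping to a singleton), and it is precisely in those cases that the computation must be carried out with care.
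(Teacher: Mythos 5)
Your proof is correct and takes essentially the same route as the paper's: invoke Lemma~\ref{lem:extreme-games} for convexity of each extreme game $v^i$ and then use closedness of convex games under convex combinations, since $\walfa = \sum_{i \in N}\alpha_i v^i$. Your extra step checking that $\Delta \geq 0$ implies $\Co$-extendability (via the explicit extension $a + \Delta u_N$, which is in fact the game $v_N$ from~\eqref{eq:def-vt}) fills in a hypothesis the paper's one-line proof leaves implicit, but does not change the argument.
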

\begin{proof}
	Follows from Lemma~\ref{lem:extreme-games}, because $(N,v^k)$ for $k \in N$ are convex and from the closedness of the class of convex games on convex combinations, because we have $\walfa = \sum_{k \in N}\alpha_iv^k$.
\end{proof}
Lemma~\ref{lem:walfa-is-convex} means that a minimal incomplete game $\pgame$ is $\Co$-extendable if and only if it is $\oneC$-extendable and the set of all games $(N,\walfa)$ is a subset of the set of $\Co$-extensions, i.e. 
\begin{equation}\label{eq:walfa-subset-convex-extensions}
	\bigcup\limits_{\substack{\alpha \in \Rn_+\\ \alpha(N)=1}} \walfa \subseteq \Co(v).
\end{equation}

\section{Approximations of solution concepts}\label{sec4}
Imagine that we want to determine a solution $\mathcal{S}(v)$ of a cooperative game $(N,v)$. However, all we know about this game is represented by an incomplete game $(N,\K,v)$ and the knowledge that the complete game is from a class $C \subseteq \Gamma^n$. The set of $C$-extensions represents the set of possible candidates with $(N,v)$ among them. This means, that by computing $\mathcal{S}(w)$ for every $C$-extension $(N,w)$, we also compute $\mathcal{S}(v)$. The only problem is that we cannot distinguish which one of the $C$-extensions is $(N,v)$. However, by considering the union of $\mathcal{S}(w)$ for all $C$-extensions $(N,w)$, we are sure that $\mathcal{S}(v)$ is contained in this set. Similarly, if we consider the intersection of all solutions $\mathcal{S}(w)$, we have a set of payoff vectors which are guaranteed to be a subset of $\mathcal{S}(v)$. The idea is formally captured by the following definition.
\begin{definition}\label{def:weak-solution-concept}
	Let $\pgame$ be a $C$-extendable incomplete game and $\mathcal{S}\colon C \to 2^{\Rn}$ a solution concept on $C \subseteq \Gamma^n$. Then by \emph{weak solution} $\cup\mathcal{S}(C,\K) \colon C(\K) \to 2^{\Rn}$, we mean
	\[
	\cup\mathcal{S}(C,\K)(v) \coloneqq \bigcup_{w \in C(v)}\mathcal{S}(w).
	\]
\end{definition}
We write $\cup\mathcal{S}(C)$ instead of $\cup\mathcal{S}(C,\K)$ whenever $\K$ is apparent from the context. In a similar way, we can define the \textit{strong solution} $\cap\mathcal{S}(C,\K)$ where the union is replaced by the intersection. It is clear from the definition that
\begin{equation}
	\cap\mathcal{S}(C,\K) \subseteq \mathcal{S}(v) \subseteq \cup\mathcal{S}(C,\K)
\end{equation}
and the difference between the sets depend heavily on both $C$ and $\mathcal{K}$. If for example $\K=2^N$, all three sets coincide. If on the contrary $\K=\{\emptyset\}$, for most of the standard solution concepts, the relations become $\emptyset \subseteq \mathcal{S}(v) \subseteq \mathbb{R}^n$. Also, the more restrictive $C$ is, the less $C$-extensions are considered, thus the stronger the approximations get. The ultimate goal is to find a compromise between information provided by $(C,\K)$ and the strength of the approximations.

We initiate the research by studying the weak and the strong solutions of minimal incomplete games. An advantage of this class of incomplete games is that the imputation set $\mathcal{I}(v)$ is determined as it depends only on values of singletons and the grand coalition. Moreover, it holds for every extension $(N,w)$ that $\mathcal{I}(w) = \mathcal{I}(v)$. Since many solution concepts are subsets of the imputation set (all but one studied in this paper), we get an initial upper bound $\cup\mathcal{S}(C)(v) \subseteq \mathcal{I}(v)$ for the weak solution. For the strong solution, an initial lower bound is given by the empty set, $\emptyset \subseteq \cap\mathcal{S}(v)$. The goal of this paper is to consider different sets of $C$-extensions and decide if the weak and the strong solutions lead to better approximations than the initial bounds. We consider standard classes of games such as the superadditive, convex, positive extensions but also 1-convex extensions as their analysis helps in understanding of the former ones.

\begin{table}
	\caption{Strong solutions $\cap\mathcal{S}(v)$}
	\begin{center}
		\begin{tabular}{ |c|c|c|c|c|c|c|c| } 
			\hline
			$\cap$ & $\phi$ & $\tau$ & $\eta$ & $K$ & $K^*$ & $C$ & $W$\\ 
			\hline
			$C_1^n$ & \texttimes & \texttimes & \texttimes & $\emptyset$ & $\emptyset$ & $\emptyset$ & $\emptyset$ \\
			\hline
			$S^n$ & \texttimes & \texttimes & \texttimes & $\emptyset$ & $\emptyset$ & $\emptyset$ & $\emptyset$ \\
			\hline
			$C^n$ & \texttimes & \texttimes & \texttimes & $\emptyset$ & $\emptyset$ & $\emptyset$ & $\emptyset$ \\
			\hline
			$P^n$ & \texttimes & \texttimes & \texttimes & $\emptyset$ & $\emptyset$ & $\emptyset$ & $\emptyset$ \\
			\hline
		\end{tabular}
	\end{center}
\end{table}

Unfortunately, the combination of minimal information together with the mentioned classes of games is too weak to provide any good lower approximation. This is because even for the most restrictive class of positive games, there are $P^n$-extensions with different single-point solutions which cannot be equal.

\begin{table}
	\caption{Weak solutions $\cup\mathcal{S}(v)$}
	\begin{center}
		\begin{tabular}{ |c|c|c|c|c|c|c|c| } 
			\hline
			$\cup$ & $\phi$ & $\tau$ & $\eta$ & $K$ & $K^*$ & $C$ & $W$ \\ 
			\hline
			$C_1^n$ & $\supsetneq$ & $=$ & $=$ & $=$ & $\supseteq$ & $=$ & $\supsetneq$ \\
			\hline
			$S^n$ & $\supseteq$ & $=$ & $=$ & $=$ & $=$ & $=$ & $\supseteq$ \\
			\hline
			$C^n$ & $=$ & $=$ & $=$ & $=$ & $=$ & $=$ & $=$ \\
			\hline
			$P^n$ & $\subsetneq$ & $\subsetneq$ & $\subsetneq$  & $\subsetneq$ & $\subsetneq$ & $=$ & $=$ \\
			\hline
		\end{tabular}
		\footnotetext{The relation between the weak solution concepts and the imputation set.}
	\end{center}
\end{table}

Regarding the weak solutions, the situation is much more interesting. For the set of 1-convex extensions, the solutions are either supersets or equal to the imputation set, thus we do not yield any approximations. Superadditive extensions are still too general to provide interesting upper approximations, however for the class of convex extensions, the weak Shapley value is already equal to $\mathcal{I}(v)$. For the class of positive extensions, nontrivial approximations are given for all solution concepts but the core and the Weber set (which coincide on this class of cooperative games).

In the rest of this section, we provide formal proofs for all of our claims.

\subsection{The Shapley value}
First a technical lemma, which expresses the Shapley value of every $\oneC$-ex\-ten\-sion.
\begin{lemma}\label{lem:shapley-oneC}
	Let $\pgame$ be a $\oneC$-extendable minimal incomplete game. It holds for every $\oneC$-extension $(N,\walfabeta)$ that
	\begin{equation}
		\footnotesize{\phi_i(\walfabeta) = v(i) + \alpha_i \Delta + \frac{1}{n}\left(\sum_{T \in E: i \notin T}\beta_T{n-1 \choose t}^{-1} - \sum_{T \in E:i \in T}\beta_T{n-1 \choose t-1}^{-1}\right)}.
	\end{equation}
\end{lemma}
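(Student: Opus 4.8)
The plan is to lean entirely on the linearity of the Shapley value~\eqref{eq:shapley-linearity} together with the extreme-point/extreme-ray description of the extensions from Theorem~\ref{thm:oneC-min-info-set}. Writing $w_{\alpha,\beta}=\sum_{k\in N}\alpha_k v^k+\sum_{T\in E}\beta_T e_T$ and applying~\eqref{eq:shapley-linearity}, we obtain
\[
\phi_i(\walfabeta)=\sum_{k\in N}\alpha_k\,\phi_i(v^k)+\sum_{T\in E}\beta_T\,\phi_i(e_T).
\]
Thus it suffices to evaluate the Shapley value separately on each extreme ray $e_T$ and on the convex combination $\walfa=\sum_{k\in N}\alpha_k v^k$ of extreme games, and then to reassemble the two contributions.

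First I would dispatch the extreme rays, which is the routine half and already delivers the whole parenthesised expression. Since $e_T$ equals $-1$ on $T$ and $0$ on every other coalition, in the marginal-contribution formula of Definition~\ref{def:shapley} the only coalitions $S\subseteq N\setminus i$ that contribute are those with $S\cup i=T$ or $S=T$. A one-line case split on whether $i\in T$ then gives
\[
\phi_i(e_T)=\tfrac1n\binom{n-1}{t}^{-1}\ \text{if}\ i\notin T,\qquad \phi_i(e_T)=-\tfrac1n\binom{n-1}{t-1}^{-1}\ \text{if}\ i\in T,
\]
so that $\sum_{T\in E}\beta_T\phi_i(e_T)$ reproduces the $\tfrac1n(\cdots)$ term in the statement verbatim.

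The substantive part is the $\alpha$-contribution $\sum_k\alpha_k\phi_i(v^k)=\phi_i(\walfa)$, and this is where I expect the real work to be. The natural route is to compare $\walfa$ with the additive game $a(S)=\sum_{k\in S}\bigl(v(k)+\alpha_k\Delta\bigr)$: by the explicit form~\eqref{eq:oneC-walfa} the two games agree on every coalition of size between $2$ and $n$, and differ only on the singletons, where $\walfa(\{k\})=v(k)$ while $a(\{k\})=v(k)+\alpha_k\Delta$. Since $a$ is additive, $\phi_i(a)=v(i)+\alpha_i\Delta$, so by linearity it remains to evaluate the Shapley value of the correction game $\walfa-a$, which is supported on the singletons with $(\walfa-a)(\{k\})=-\alpha_k\Delta$; in the marginal-contribution formula this correction is only felt by the coalitions of size at most one. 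The concluding step is to compute $\phi_i(\walfa-a)$ from these low-order marginals and to check, using $\alpha(N)=1$ to replace $\sum_{k\ne i}\alpha_k$ by $1-\alpha_i$, that $\phi_i(a)+\phi_i(\walfa-a)$ yields the stated leading term $v(i)+\alpha_i\Delta$. Verifying this reduction is the main obstacle; one must track carefully how the singleton values frozen by $\K$ feed into the Shapley value, and the degenerate small cases (for $n=2$ one has $\K=2^N$ and $E=\emptyset$, so there is nothing to prove) are best recorded separately.
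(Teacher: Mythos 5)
Your opening move --- linearity over the representation $w_{\alpha,\beta}=\sum_k\alpha_k v^k+\sum_{T\in E}\beta_T e_T$ from Theorem~\ref{thm:oneC-min-info-set} --- is exactly the paper's, and your ray computation is correct and matches the paper's verbatim: $\phi_i(e_T)=\frac1n\binom{n-1}{t}^{-1}$ for $i\notin T$ and $\phi_i(e_T)=-\frac1n\binom{n-1}{t-1}^{-1}$ for $i\in T$. The genuine gap is the concluding step of your $\alpha$-part. With $a(S)\coloneqq\sum_{k\in S}\bigl(v(k)+\alpha_k\Delta\bigr)$, the correction game $d\coloneqq\walfa-a$ is indeed supported on singletons with $d(\{j\})=-\alpha_j\Delta$, but its Shapley value does \emph{not} vanish: the only nonzero marginals of player $i$ in $d$ are $-\alpha_i\Delta$ at $S=\emptyset$ and $+\alpha_j\Delta$ at $S=\{j\}$, $j\neq i$, so
\begin{equation*}
\phi_i(d)=\frac1n\left(-\alpha_i\Delta+\binom{n-1}{1}^{-1}\sum_{j\neq i}\alpha_j\Delta\right)=\frac{\Delta}{n}\left(\frac{1-\alpha_i}{n-1}-\alpha_i\right),
\end{equation*}
which is zero only when $\alpha_i=\frac1n$ or $\Delta=0$. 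Your route therefore yields $\phi_i(\walfa)=v(i)+\frac{n-1}{n}\alpha_i\Delta+\frac{1-\alpha_i}{n(n-1)}\Delta$ (these values do sum to $v(N)$, as a check), not the claimed $v(i)+\alpha_i\Delta$. The verification you deferred as ``the main obstacle'' is not routine bookkeeping: it is false, so the proposal cannot be completed as planned.

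The deeper point is that your decomposition, carried out honestly, contradicts the lemma itself, and the discrepancy is inherited from the paper's own proof. There, equation~\eqref{eq:vertex-marginal-value} asserts that $v^k(S\cup i)-v^k(S)$ is independent of $S$; this fails precisely at the coalitions touching the frozen singletons, which is exactly where you warned one must be careful. For $k=i$ and $S=\emptyset$ the marginal is $v(i)$, not $v(i)+\Delta$, since $v^i(\{i\})=v(i)$ because $\{i\}\in\K$; for $k\neq i$ and $S=\{k\}$ it is $v(i)+\Delta$, not $v(i)$. The correct values are $\phi_i(v^i)=v(i)+\frac{n-1}{n}\Delta$ and $\phi_i(v^k)=v(i)+\frac{\Delta}{n(n-1)}$ for $k\neq i$, which after weighting by $\alpha_k$ reproduce your corrected expression. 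So the $\beta$-part of the stated formula is right, the $\alpha$-part holds only for $\alpha=\left(\frac1n,\dots,\frac1n\right)$, and downstream claims such as $\bigcup_{\walfa\in\oneCv}\phi(\walfa)=\I(v)$ in the proof of Theorem~\ref{thm:shapley-oneC-and-convex} need re-examination, since $\alpha_i\mapsto\frac{n-1}{n}\alpha_i+\frac{1-\alpha_i}{n(n-1)}$ maps the simplex onto a strictly smaller one.
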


\begin{proof}
	From linearity of the Shapley value and the definition of $\walfabeta$~\eqref{eq:oneC-walfabeta}, it holds $\phi(\walfabeta) = \sum_{i \in N}\alpha_i\phi(v^k) + \sum_{T \in E}\beta_T\phi(e_T)$. We compute both $\phi(v^k)$ and $\phi(e_T)$ using the definition of the Shapley value. For $S \subseteq N \setminus i$, it holds 
	\begin{equation}\label{eq:vertex-marginal-value}
		v^k(S \cup i) - v^k(S) = \begin{cases}v(i) + \Delta & \text{if } k=i,\\
			v(i) & \text{if } k\neq i.\\
		\end{cases}
	\end{equation}
	Notice,~\eqref{eq:vertex-marginal-value} is dependent only on $i$ and not on $S$. Further, for $X \in \{v(i),v(i) + \Delta\}$, it holds that
	\begin{equation}\label{eq:distribute-X}
		\frac{1}{n}\sum_{S \subseteq N \setminus i} {n-1 \choose s}^{-1} X = X\frac{1}{n}\sum_{S \subseteq N \setminus i} {n-1 \choose s}^{-1}.
	\end{equation}
	Modifying the sum is an easy exercise using the following identity:
	\begin{equation}\label{eq:mere-exercise}
		\sum_{S \subseteq N \setminus i} {n-1 \choose s}^{-1} = \sum_{j=0}^{n-1}{n-1 \choose j}{n-1 \choose j}^{n-1} = n.
	\end{equation}
	Combining together~\eqref{eq:vertex-marginal-value},~\eqref{eq:distribute-X} and~\eqref{eq:mere-exercise} in the expression of $\phi_i(v^k)$ from Definition~\ref{def:shapley}, it follows
	\[
	\phi_i(v^k) = \begin{cases}
		v(i) + \Delta & \text{if } i=k,\\
		v(i) & \text{if } i \neq k.\\
	\end{cases}
	\]
	In a similar manner, for game $(N,e_T)$, we can derive 
	\begin{equation}\label{eq:ray-marginal-value}
		e_T(S \cup i) - e_T(S) = \begin{cases}
			-1 & \text{if } S \cup i = T,\\
			1 & \text{if } S = T,\\
			0 & \text{otherwise.}
		\end{cases}
	\end{equation}
	Combining~\eqref{eq:ray-marginal-value} with Definition~\ref{def:shapley}, it follows
	\[
	\phi_i(e_T) = \begin{cases}
		-\frac{1}{n}{n-1 \choose t-1}^{-1} \text{if } i \in T,\\
		\frac{1}{n}{n-1 \choose t}^{-1} \text{if } i \notin T.\\
	\end{cases}
	\]
	This concludes the proof.
\end{proof}

Another lemma is about the boundedness of $\cup\phi(\oneC)$.

\begin{lemma}\label{lem:shapley-unbounded}
	The weak Shapley value $\cup\phi(\oneC)$ is unbounded.
\end{lemma}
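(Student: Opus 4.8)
The goal is to show that the weak Shapley value $\cup\phi(\oneC)(v) = \bigcup_{w \in \oneCv}\phi(w)$ is an unbounded subset of $\Rn$. The plan is to exhibit an explicit one-parameter family of $\oneC$-extensions whose Shapley values escape to infinity. The natural tool is the closed-form expression from Lemma~\ref{lem:shapley-oneC}, which gives
\[
\phi_i(\walfabeta) = v(i) + \alpha_i\Delta + \frac{1}{n}\left(\sum_{T \in E:\,i\notin T}\beta_T{n-1\choose t}^{-1} - \sum_{T \in E:\,i\in T}\beta_T{n-1\choose t-1}^{-1}\right).
\]
The key observation is that the coefficients $\beta_T \geq 0$ range over arbitrarily large nonnegative values (the extreme rays $e_T$ are unbounded directions of $\oneCv$, per Theorem~\ref{thm:oneC-min-info-set}), whereas the $\alpha$-part stays bounded since $\alpha \in \Rn_+$ with $\alpha(N)=1$.

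First I would fix any single coalition $T_0 \in E$; such a coalition exists whenever $n$ is large enough that $E$ is nonempty (for the statement to be nonvacuous we assume $\oneCv \neq \emptyset$, i.e.\ $\Delta \geq 0$ by Theorem~\ref{thm:excess-extending}, and $E \neq \emptyset$). Then I would set $\alpha$ to be any fixed admissible vector (say $\alpha = e_1$, or uniform), set $\beta_{T_0} = \lambda$ for a scalar parameter $\lambda \geq 0$, and set all other $\beta_T = 0$. By Theorem~\ref{thm:oneC-min-info-set} each such $\walfabeta$ lies in $\oneCv$, so its Shapley value belongs to $\cup\phi(\oneC)(v)$. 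Plugging into the formula, for a player $i \in T_0$ the coordinate becomes
\[
\phi_i(\walfabeta) = v(i) + \alpha_i\Delta - \frac{\lambda}{n}{n-1\choose t_0 - 1}^{-1},
\]
which tends to $-\infty$ as $\lambda \to \infty$, while the first two terms are constants independent of $\lambda$. Hence the $i$-th coordinate of the Shapley values in $\cup\phi(\oneC)(v)$ is unbounded below, so the whole set is unbounded.

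The argument is essentially a direct substitution, so there is no real obstacle once Lemma~\ref{lem:shapley-oneC} is in hand; the only points requiring a moment of care are confirming that $E$ is nonempty (otherwise there are no extreme rays and the set would be bounded, so the claim implicitly presumes $n \geq 3$), and checking that picking a single $\beta_{T_0} \to \infty$ genuinely stays inside $\oneCv$ — which is immediate from the ray description in Theorem~\ref{thm:oneC-min-info-set}, since adding any nonnegative multiple of an extreme ray to a point of the polyhedron keeps it in the polyhedron. It suffices to produce divergence in a single coordinate, so I would not bother tracking the other coordinates.
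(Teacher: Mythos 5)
Your proof is correct and follows essentially the same route as the paper: both fix all but one $\beta_{T_0}$ at zero and drive that single coefficient to infinity, so that by the formula of Lemma~\ref{lem:shapley-oneC} the coordinate $\phi_i(\walfabeta)$ for $i \in T_0$ decreases without bound (the paper phrases this as beating an arbitrary lower bound $b$, yours as a limit $\lambda \to \infty$; they are the same argument, and yours is if anything stated more carefully). One small correction to your side remark: $E \neq \emptyset$ requires $n \geq 4$, not $n \geq 3$, since for $n = 3$ every two-element coalition is of the form $N \setminus i$ and is therefore excluded from $E$.
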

\begin{proof}
	For every lower bound $b \in \mathbb{R}$, there is $(N,\walfabeta)$ with $\alpha_i=0$ and $\beta_T = 0$ for every $T \in E$ except for exactly one $S \in E$ such that $i \in S$, for which $\beta_S = {n-1 \choose t-1}b + \epsilon$ for $\epsilon > 0$. It holds $\phi_i(\walfabeta) < b$. 
\end{proof}

Results from Lemma~\ref{lem:shapley-oneC} and~\ref{lem:shapley-unbounded}, combined with Lemma~\ref{lem:walfa-is-convex} on the relation between $\oneC$-extensions and $\Co$-extensions, implicate the following.

\begin{theorem}\label{thm:shapley-oneC-and-convex}
	Let $\pgame$ be minimal incomplete game. it holds
	\begin{enumerate}
		\item $\cup\phi(\oneC)\supsetneq \mathcal{I}(v)$ if $\pgame$ is $\oneC$-extendable,
		\item $\cup\phi(\Co) = \mathcal{I}(v)$ if $\pgame$ is $\Co$-extendable,
		\item $\cup\phi(\S) \supseteq \mathcal{I}(v)$ if $\pgame$ is $\S$-extendable.
	\end{enumerate}
\end{theorem}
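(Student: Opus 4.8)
The plan is to obtain all three parts from one common observation: the Shapley values of the convex extensions $(N,\walfa)$ already sweep out the entire imputation set. Setting $\beta=0$ in Lemma~\ref{lem:shapley-oneC} gives $\phi_i(\walfa)=v(i)+\alpha_i\Delta$, which is exactly the imputation $I^\alpha$ from Section~\ref{sec2}. First I would verify that $\alpha\mapsto\phi(\walfa)$ ranges over all of $\mathcal{I}(v)$: when $\Delta>0$, any $x\in\mathcal{I}(v)$ is reproduced by $\alpha_i=(x_i-v(i))/\Delta$, which is nonnegative with $\alpha(N)=1$, while when $\Delta=0$ both $\mathcal{I}(v)$ and the set of these Shapley values collapse to the single point $(v(i))_{i\in N}$. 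Extendability in any of the three classes forces $\Delta\ge 0$ (Theorem~\ref{thm:excess-extending} and the remark after Lemma~\ref{lem:walfa-is-convex} cover $\oneC$ and $\Co$; for $\S$ one sums superadditivity over the singleton partition to get $\sum_i v(i)\le v(N)$), so this identifies $\mathcal{I}(v)=\{\phi(\walfa)\mid\alpha\in\Rn_+,\ \alpha(N)=1\}$. Since each $\walfa$ is a $\oneC$-extension (Theorem~\ref{thm:oneC-min-info-set}) and is convex, hence superadditive (Lemma~\ref{lem:walfa-is-convex}), it belongs to all three sets of extensions, giving $\mathcal{I}(v)\subseteq\cup\phi(C)$ for every $C\in\{\oneC,\Co,\S\}$.

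This inclusion immediately settles part~3. Together with Lemma~\ref{lem:shapley-unbounded} it also settles part~1: the weak Shapley value $\cup\phi(\oneC)$ is unbounded whereas $\mathcal{I}(v)$ is bounded, so the two sets cannot coincide and the inclusion is strict.

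For part~2 it remains to establish the reverse inclusion $\cup\phi(\Co)\subseteq\mathcal{I}(v)$; this is the only step demanding genuine work and the main obstacle, precisely because the Shapley value is not individually rational for arbitrary games. Here I would show that the Shapley value of an arbitrary convex extension $(N,w)$ is individually rational. Efficiency gives $\phi(w)(N)=w(N)=v(N)$, so it suffices to prove $\phi_i(w)\ge v(i)$. In Definition~\ref{def:shapley} each marginal contribution satisfies $w(S\cup i)-w(S)\ge w(i)=v(i)$ by superadditivity of convex games, and the coefficients $\frac{1}{n}\binom{n-1}{s}^{-1}$ are nonnegative and sum to $1$ by~\eqref{eq:mere-exercise}; hence $\phi_i(w)$ is a convex combination of numbers each at least $v(i)$, so $\phi_i(w)\ge v(i)$ and $\phi(w)\in\mathcal{I}(v)$. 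Combining the two inclusions yields $\cup\phi(\Co)=\mathcal{I}(v)$.
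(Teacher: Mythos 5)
Your proof is correct, and its forward half coincides with the paper's: both use Lemma~\ref{lem:shapley-oneC} with $\beta=0$ to see that the values $\phi(\walfa)=I^\alpha$ sweep out all of $\I(v)$, Lemma~\ref{lem:shapley-unbounded} for the strictness in part~1, and Lemma~\ref{lem:walfa-is-convex} together with $\Co(v)\subseteq\Sv$ to transfer the inclusion $\I(v)\subseteq\cup\phi(C)$ to the convex and superadditive cases (you are in fact more careful than the paper, verifying that extendability forces $\Delta\ge 0$ in all three classes and treating the degenerate case $\Delta=0$). The substantive difference lies in the reverse inclusion $\cup\phi(\Co)\subseteq\I(v)$ of part~2: the paper asserts it without argument, implicitly relying on the fact that for a convex game $w$ the Shapley value is a convex combination of marginal vectors, hence lies in $\W(w)=\Core(w)\subseteq\I(v)$ via Theorem~\ref{thm:convex-weber-equals-core}, whereas you prove it directly and more elementarily: superadditivity gives $w(S\cup i)-w(S)\ge w(i)=v(i)$ for every marginal contribution, and the Shapley coefficients are nonnegative and sum to one by~\eqref{eq:mere-exercise}, so $\phi_i(w)\ge v(i)$, with efficiency supplying $\phi(w)(N)=v(N)$. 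This buys you something the paper misses: since your argument uses only superadditivity, it applies verbatim to every $\S$-extension and yields $\cup\phi(\S)\subseteq\I(v)$, hence the equality $\cup\phi(\S)=\I(v)$ in part~3 --- a strengthening of the stated theorem which moreover refutes the paper's remark after the proof that ``there are superadditive games for which the Shapley value is not from the imputation set'': for superadditive games the Shapley value is always individually rational, precisely by your convex-combination argument. It would be worth stating this strengthening explicitly rather than leaving part~3 as a mere inclusion.
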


\begin{proof}
	From Lemma~\ref{lem:shapley-oneC}, it follows
	\begin{equation}
		\bigcup\limits_{w_\alpha \in \oneCv}\phi(w_\alpha) = I(v).
	\end{equation}
	Together with Lemma~\ref{lem:shapley-unbounded}, we have $\cup\phi(\oneC)\supsetneq \mathcal{I}(v)$. Further, from Lemma~\ref{lem:walfa-is-convex}, we have
	\[\bigcup\limits_{w_\alpha \in \oneCv}w_\alpha \subseteq \Co(v),\]
	thus $\mathcal{I}(v) \subseteq \cup\phi(\Co) \subseteq \mathcal{I}(v)$ holds. Finally, since $\Co$-extensions form a subset of $\S$-extensions, we can deduce that $\cup\phi(\S)\supseteq \mathcal{I}(v)$.
\end{proof}

In general, there are superadditive games for which the Shapley value is not from the imputation set, therefore we believe that the strict inclusion might hold for Theorem~\ref{thm:shapley-oneC-and-convex}.3. Finally, we turn our attention to the set of $\P$-extensions, which gives the best approximation. 

\begin{lemma}\label{lem:shapley-positive}
	Let $\pgame$ be a $\P$-extandable minimal incomplete game. It holds for every $\P$-extension $(N,\wa)$ that
	\begin{equation}
		\phi_i(\wa) = v(i) + \Delta\sum_{T \in N_1, i \in T}\frac{\alpha_T}{\lvert T \rvert}.
	\end{equation}
\end{lemma}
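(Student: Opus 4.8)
We want to compute the Shapley value of a $\P$-extension $(N,\wa)$ where $\wa = \sum_{T \in N_1}\alpha_T v_T$, reaching the formula
\[
\phi_i(\wa) = v(i) + \Delta\sum_{T \in N_1, i \in T}\frac{\alpha_T}{\lvert T \rvert}.
\]

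**Plan.** The natural strategy exactly mirrors the proof of Lemma~\ref{lem:shapley-oneC}: exploit linearity of the Shapley value~\eqref{eq:shapley-linearity} to reduce the computation to the extreme games $(N,v_T)$, compute $\phi_i(v_T)$ for each $T \in N_1$, and then reassemble via $\phi_i(\wa) = \sum_{T \in N_1}\alpha_T\phi_i(v_T)$. So the whole task boils down to finding $\phi_i(v_T)$ for a single extreme game.

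**Computing $\phi_i(v_T)$.** I would try to recognize $v_T$ as a small combination of games whose Shapley value we already know, rather than grinding the defining sum. From~\eqref{eq:def-vt}, the game $v_T$ agrees with the additive game $S \mapsto \sum_{i \in S}v(i)$ everywhere, except it adds an extra $\Delta$ precisely on those coalitions $S \notin \K$ with $T \subseteq S$. The additive part contributes $v(i)$ to $\phi_i$ by the standard formula (each player gets their singleton value from an additive game). For the remaining part, I would write it in terms of a unanimity game: the indicator ``$T \subseteq S$'' is exactly $u_T(S)$, so up to the finitely many coalitions in $\K$ that must be corrected, $v_T$ looks like the additive game plus $\Delta\, u_T$. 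Since $T \in N_1$ means $\lvert T \rvert > 1$, the coalitions where the unanimity description and the true $v_T$ could disagree are only $S \in \K$, i.e. singletons and $N$; one checks that on $N$ both descriptions give the same value $\Delta + \sum_i v(i)$, and on singletons the unanimity term $u_T$ already vanishes (as $\lvert T\rvert>1$), so no correction is needed. Hence $v_T$ coincides with $\big(\sum_i v(i)\big)\text{-additive} + \Delta\,u_T$, and by linearity together with Lemma~\ref{lem:shapley-unanimity},
\[
\phi_i(v_T) = v(i) + \Delta\,\phi_i(u_T) = \begin{cases} v(i) + \dfrac{\Delta}{\lvert T \rvert} & i \in T,\\[1ex] v(i) & i \notin T.\end{cases}
\]

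**Reassembling and the main obstacle.** Plugging this into $\phi_i(\wa) = \sum_{T \in N_1}\alpha_T\phi_i(v_T)$ and using $\sum_{T \in N_1}\alpha_T = 1$, the $v(i)$ terms sum to $v(i)$ and the $\Delta/\lvert T\rvert$ terms survive only for those $T$ containing $i$, yielding the claimed formula. The one step deserving genuine care is verifying that $v_T$ equals the additive-plus-unanimity game on \emph{every} coalition, including the known coalitions in $\K$; the definition~\eqref{eq:def-vt} only specifies $v_T$ on $S \notin \K$, so I must confirm that on $S \in \K$ the $\P$-extension still takes the prescribed extension values $v(S)$ and that these agree with $\sum_{i\in S}v(i) + \Delta u_T(S)$. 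This is the only place where the hypothesis $\lvert T\rvert>1$ (i.e. $T \in N_1$) is really used, and it is the crux that makes the clean unanimity decomposition valid; everything else is routine linearity.
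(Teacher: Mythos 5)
Your proof is correct and is essentially the paper's own argument: the paper simply writes $\wa = \sum_{i \in N}v(i)u_{\{i\}} + \Delta\sum_{T \in N_1}\alpha_T u_T$ directly and applies linearity of the Shapley value together with Lemma~\ref{lem:shapley-unanimity}, which is exactly your decomposition $v_T = \sum_{i \in N}v(i)u_{\{i\}} + \Delta u_T$ recombined over $T \in N_1$ via $\sum_{T \in N_1}\alpha_T = 1$. Your careful check that the unanimity description agrees with $v_T$ on the coalitions of $\K$ (using $\lvert T\rvert > 1$) is a detail the paper leaves implicit, but the route is the same.
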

\begin{proof}
	Every $\P$-extensions $\wa$ can be expressed as 
	\[
	w = \sum_{i \in N}v(i)u_{i} + \Delta\sum_{T \in N_1}\alpha_Tu_T,
	\]
	where $(N,u_T)$ are unanimity games. Now from the linearity of the Shapley value,
	\begin{equation}\label{eq:shapley-positive}
		\phi(w) = \sum_{i \in N}v(i)\phi(u_{i}) + \Delta\sum_{T \in E_1}\alpha_T\phi(u_T).
	\end{equation}
	Combining~\eqref{eq:shapley-positive} and Lemma~\ref{lem:shapley-unanimity} concludes the proof.
\end{proof}

To analyse the relation between $\mathcal{I}(v)$ and $\cup\phi(\P)$, we show that there are imputations which cannot be the Shapley value of any $\P$-extension.

\begin{theorem}\label{thm:shapley-positive}
	Let $\pgame$ be a $\P$-extandable minimal incomplete game. It holds
	\[
	\cup\phi(\P) \subsetneq \mathcal{I}(v).
	\]
\end{theorem}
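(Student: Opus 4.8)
The plan is to exhibit explicit imputations that arise as the Shapley value of no $\P$-extension, thereby witnessing the strict inclusion. Throughout I assume $\Delta > 0$; when $\Delta = 0$ the set $\mathcal{I}(v)$ degenerates to the single point $(v(1),\dots,v(n))$, every $\phi(\wa)$ equals that point, and the inclusion is an equality, so the statement should be read in the nondegenerate case.

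First I would reformulate Lemma~\ref{lem:shapley-positive}. Writing $\beta_i \coloneqq \sum_{T \in N_1,\, i \in T} \frac{\alpha_T}{|T|}$, we have $\phi_i(\wa) = v(i) + \Delta\beta_i$. A short double-counting computation gives $\sum_{i \in N}\beta_i = \sum_{T \in N_1}\frac{\alpha_T}{|T|}\,|T| = \sum_{T \in N_1}\alpha_T = 1$, and clearly $\beta_i \geq 0$. Comparing with the parameterisation $I^\alpha_i = v(i)+\alpha_i\Delta$, this re-proves $\phi(\wa) \in \mathcal{I}(v)$ and identifies $\cup\phi(\P)$ as the image of the simplex $\{A : \alpha_T \geq 0,\ \sum_T\alpha_T = 1\}$ under the linear map $A \mapsto (\beta_i)_{i}$. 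Since $\phi$ is linear, this image is exactly the convex hull of the points $\phi(v_T)$, $T \in N_1$, where $\phi(v_T)$ is the imputation with weight vector $\frac{1}{|T|}\mathbf{1}_T$.

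The key structural observation is that $N_1$ contains only coalitions of size at least two, so every generating weight vector $\frac{1}{|T|}\mathbf{1}_T$ has at least two nonzero coordinates. I would then show that the vertex imputation $I^k$, which hands the whole excess to a single player $k$ (weight vector the $k$-th unit vector), lies outside $\cup\phi(\P)$. Indeed, $\phi(\wa)=I^k$ forces $\beta_j=0$ for every $j \neq k$; as all summands are nonnegative, this forces $\alpha_T=0$ for every $T \in N_1$ meeting $N \setminus k$. But every $T \in N_1$ satisfies $|T|\geq 2$, hence intersects $N\setminus k$, so all $\alpha_T$ vanish, contradicting $\sum_{T \in N_1}\alpha_T = 1$. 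Thus $I^k \in \mathcal{I}(v) \setminus \cup\phi(\P)$, which gives the strict inclusion.

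The argument is essentially complete once the reformulation is in place, and no hard analytic step is involved. The only points requiring care are the degenerate case $\Delta = 0$ (handled by the opening remark) and confirming that $I^k$ is genuinely excluded rather than merely lying on the boundary; the latter is immediate since the $k$-th unit vector is a vertex of the imputation simplex while every generator $\frac{1}{|T|}\mathbf{1}_T$ has support of size at least two, so no convex combination of generators can reproduce it. I expect the verification $\sum_i \beta_i = 1$ and this support argument to be the only places demanding slight care.
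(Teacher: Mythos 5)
Your proof is correct and takes essentially the same route as the paper: both rest on Lemma~\ref{lem:shapley-positive}, use the vertex imputations $I^k$ as witnesses, and derive the contradiction from nonnegativity of the $\alpha_T$ together with the fact that every $T \in N_1$ has $\lvert T \rvert \geq 2$ (hence meets $N \setminus k$), forcing $\alpha_T = 0$ for all $T$ against $\sum_{T \in N_1}\alpha_T = 1$. Your explicit treatment of the degenerate case $\Delta = 0$, where the inclusion becomes an equality and which the paper's proof silently assumes away when cancelling $\Delta$, is a small but genuine sharpening of the statement's hypotheses.
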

\begin{proof}
	The inclusion $\cup\phi(\P) \subseteq \mathcal{I}(v)$ is immediate from $\cup\phi(\P)\subseteq \cup\phi(\Co)$ and Theorem~\ref{thm:shapley-oneC-and-convex}. To prove that the strict inclusion holds, we show that $I^k \notin \cup\phi(\P)$ for every $k \in N$. For a contradition, suppose $I^k \in \cup\phi(\P)$.  It means, there is a $\P$-ex\-ten\-sion $(N,\wa)$ such that $\phi(\wa)=I^k$. From $\phi_i(\wa) = I^k_i$, we get
	\begin{equation}\label{eq:positive-shapley-sum}
		\sum_{T \in N_1, i \in T}\frac{\alpha_T}{\lvert T \rvert} = 0 \text{ for every }i \neq k.
	\end{equation}
	Since $\alpha_T \geq 0$, each $\alpha_T$ in each of the sums from~\eqref{eq:positive-shapley-sum} is actually equal to zero. But this means that $\alpha_T=0$ holds for every $T \in N_1$, which leads to a contradiction, because
	\[
	0 = \sum_{T \in N_1} \alpha_T = 1.
	\]
\end{proof}

\subsection{The $\tau$-value}
The analysis of the weak $\tau$-value can be done is a similar manner as the analysis of the weak Shapley value. We begin by an explicit description of the $\tau$-value of every $\oneC$-extension. Interestingly, even tough the set of $\oneC$-extensions forms an unbounded convex cone, the weak $\tau$-value is a bounded set.
\begin{lemma}\label{lem:oneC-tau-value}
	Let $\pgame$ be $\oneC$-extendable minimal incomplete game. It holds for every $\oneC$-extension $(N,\walfabeta)$ that
	\[
	\tau(\walfabeta) = I^\alpha.
	\]
\end{lemma}
\begin{proof}
	The assertion follows immediately from Theorem~\ref{thm:1convex-tau}, if we show that $g^\walfabeta(N)=0$. From Lemma~\ref{lem:linear-combination}, it holds
	\[
	g^\walfabeta(N) = \sum_{i \in N}\alpha_i g^{v^i}(N) + \sum_{T \in E}\beta_T g^{e_T}(N).
	\]
	For every $i \in N$, it holds $b^{v^i}(N)=\sum_{j \in N}v(j) + \Delta = v(N)$, thus $g^{v^i}(N) = b^{v^i}(N) - v^i(N) = 0$. Similarly, for every $T \in E$, $g^{e_T}(N)=0$. We conclude $g^{\walfabeta}(N)=0$.
\end{proof}
\begin{theorem}
	Let $\pgame$ be minimal incomplete game. It holds
	\begin{enumerate}
		\item $\cup\tau(\oneC) = \I(v)$ if $\pgame$ is $\oneC$-extendable,
		\item $\cup\tau(\Co) = \I(v)$ if $\pgame$ is $\Co$-extendable,
		\item $\cup\tau(\S) = \I(v)$ if $\pgame$ is $\S$-extendable.
	\end{enumerate}
\end{theorem}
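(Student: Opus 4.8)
The plan is to reduce all three parts to a single parameterization of the imputation set, together with the two lemmas already proved for the $\tau$-value. The observation I would isolate first is
\[
\I(v) = \{ I^\alpha \mid \alpha \in \Rn_+,\ \alpha(N) = 1\}.
\]
For ``$\supseteq$'', extendability gives $\Delta \geq 0$ (Theorem~\ref{thm:excess-extending}), so $I^\alpha_i = v(i) + \alpha_i\Delta \geq v(i)$ and $\sum_{i\in N}I^\alpha_i = \sum_{i\in N}v(i) + \Delta = v(N)$, whence $I^\alpha \in \I(v)$. For ``$\subseteq$'', given $x \in \I(v)$ with $\Delta > 0$ I would put $\alpha_i = (x_i - v(i))/\Delta$; individual rationality gives $\alpha_i \geq 0$, efficiency gives $\alpha(N) = 1$, and then $x = I^\alpha$. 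The degenerate case $\Delta = 0$ forces $\I(v)$ to be the single point $(v(1),\dots,v(n))$, which is $I^\alpha$ for every admissible $\alpha$.

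For part~1 I would combine Theorem~\ref{thm:oneC-min-info-set}, stating that every $\oneC$-extension has the form $\walfabeta$, with Lemma~\ref{lem:oneC-tau-value}, giving $\tau(\walfabeta) = I^\alpha$ regardless of $\beta$. Hence the union over all $\oneC$-extensions is precisely $\{I^\alpha \mid \alpha \in \Rn_+,\ \alpha(N)=1\}$: every admissible $\alpha$ is realised, for instance by taking $\beta = 0$, which is a legitimate $\oneC$-extension by Theorem~\ref{thm:oneC-min-info-set}. By the parameterization above this union equals $\I(v)$.

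Part~2 I would split into two inclusions. The inclusion $\cup\tau(\Co) \subseteq \I(v)$ follows because the $\tau$-value always lies in the imputation set and $\I(w) = \I(v)$ for every extension of a minimal incomplete game, so $\tau(w) \in \I(w) = \I(v)$. For the reverse inclusion I would use~\eqref{eq:walfa-subset-convex-extensions}: each $\walfa$ is a $\Co$-extension, and being simultaneously the $\oneC$-extension $\walfabeta$ with $\beta = 0$, its $\tau$-value is $I^\alpha$ by Lemma~\ref{lem:oneC-tau-value}; thus every $I^\alpha$, and hence all of $\I(v)$, is attained. Part~3 is then immediate: $\cup\tau(\S) \supseteq \cup\tau(\Co) = \I(v)$ since $\Co(v) \subseteq \S(v)$, while $\cup\tau(\S) \subseteq \I(v)$ again because $\tau(w) \in \I(w) = \I(v)$ for each superadditive extension.

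The only delicate point is well-definedness of the weak $\tau$-value on $\S$: since the $\tau$-value is a priori defined only for quasi-balanced games, $\cup\tau(\S)$ must be read as the union over those superadditive extensions that are quasi-balanced. This does not affect the argument, as the upper bound $\tau(w)\in\I(v)$ holds for every quasi-balanced extension and the lower bound is witnessed entirely by convex (hence quasi-balanced) extensions. Apart from this bookkeeping and the degenerate $\Delta = 0$ case in the parameterization, every step is a direct consequence of the cited results, so I do not anticipate any substantial obstacle.
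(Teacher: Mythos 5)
Your proof is correct and takes essentially the same route as the paper's: part~1 via Lemma~\ref{lem:oneC-tau-value} together with the parameterization of $\I(v)$ by the vectors $I^\alpha$, and parts~2 and~3 by sandwiching $\cup\tau(\Co)$ and $\cup\tau(\S)$ between the convex witnesses $\walfa$ (Lemma~\ref{lem:walfa-is-convex}) and the upper bound $\tau(w)\in\I(w)=\I(v)$. The additional care you take with the degenerate case $\Delta=0$ and with quasi-balancedness of superadditive extensions only makes explicit bookkeeping that the paper leaves implicit.
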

\begin{proof}
	The first assertion follows from the fact that 
	\[
	\bigcup\limits_{\walfabeta \in \oneCv}\tau(\walfabeta) = \bigcup\limits_{\substack{\alpha \in \mathbb{R}^n:\\ \alpha(N)=1,\\\forall i:\alpha_i \geq 0}} I^\alpha = \I(v).
	\]
	The second follows from Lemma~\ref{lem:walfa-is-convex} and the fact that 
	\[
	\I(v) = \bigcup\limits_{\walfa \in \oneCv}\tau(\walfa) \subseteq \cup\tau(\Co) \subseteq \I(v)
	\]
	and the final assertion from $\mathcal{I}(v) = \cup\tau(\Co) \subseteq \cup\tau(\S) \subseteq \I(v)$.
\end{proof}
The weak $\tau$-value do not yield non trivial approximation for classes of $\oneC$-ex\-ten\-sions, $\S$-extensions and $\Co$-extensions. This changes when we further restrict to positivity.
\begin{lemma}
	Let $\pgame$ be $\P$-extendable minimal incomplete game. It holds for every $\P$-extension $(N,\wa)$ that
	\[
	\tau_i(\wa) = v(i) + \Delta\frac{\sum\limits_{T \in N_1: i \in T}\alpha_T}{\sum\limits_{\tiny{T \in N_1}}\alpha_T \lvert T \rvert}.
	\]
	
\end{lemma}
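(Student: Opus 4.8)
The plan is to reduce the claim to the explicit convex-game formula for the $\tau$-value (Theorem~\ref{thm:convex-tau}), which applies because every $\P$-extension $(N,\wa)$ is positive and therefore convex. The $\tau$-value is not linear in the game, so I cannot simply combine the $\tau$-values of the extreme games $v_T$. However, both the upper vector $b^v$ and the gap function $g^v$ \emph{are} linear (Lemma~\ref{lem:linear-combination}), so I would compute these linear quantities on each extreme game $v_T$ first and only then assemble them inside the nonlinear $\tau$-formula.

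First I would compute the upper vector of each extreme game $v_T$ directly from its definition~\eqref{eq:def-vt}. Since $N\setminus i\notin\K$ for a minimal incomplete game, evaluating $v_T(N)-v_T(N\setminus i)$ and distinguishing the cases $i\in T$ and $i\notin T$ gives
\[
b^{v_T}_i=\begin{cases} v(i)+\Delta & \text{if } i\in T,\\ v(i) & \text{if } i\notin T,\end{cases}
\]
where the extra $\Delta$ in the first case reflects that removing $i$ destroys the coalition $T$ and hence the bonus $\Delta$. By linearity (Lemma~\ref{lem:linear-combination}) together with $\sum_{T\in N_1}\alpha_T=1$, this yields $b^{\wa}_i=v(i)+\Delta\sum_{T\in N_1:\,i\in T}\alpha_T$. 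The gaps now follow at once: $g^{\wa}(i)=b^{\wa}_i-v(i)=\Delta\sum_{T:\,i\in T}\alpha_T$, whence $\sum_{j\in N}g^{\wa}(j)=\Delta\sum_{T\in N_1}\alpha_T\lvert T\rvert$ since each $T$ contributes to exactly $\lvert T\rvert$ of the singleton gaps, while $g^{\wa}(N)=b^{\wa}(N)-v(N)=\Delta\left(\sum_{T\in N_1}\alpha_T\lvert T\rvert-1\right)\geq0$, the inequality holding because $\lvert T\rvert\geq2$ for $T\in N_1$ and $\Delta\geq0$ by $\P$-extendability.

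Finally, for $\Delta>0$ we have $g^{\wa}(N)>0$, and substituting the three displayed quantities into the relevant branch of Theorem~\ref{thm:convex-tau}, namely $\tau_i(\wa)=b^{\wa}_i-\frac{g^{\wa}(N)}{\sum_j g^{\wa}(j)}\,g^{\wa}(i)$, the factor $\Delta$ cancels and the two occurrences of $\sum_{T:\,i\in T}\alpha_T$ collapse into a single $1/\sum_{T}\alpha_T\lvert T\rvert$ factor, producing exactly the claimed expression. The degenerate case $\Delta=0$ I would treat separately: there $\wa$ is additive, so $\tau_i(\wa)=v(i)$, which still agrees with the formula since its second term vanishes. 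I expect the two points needing care to be the computation of $b^{v_T}_i$ from~\eqref{eq:def-vt} — in particular tracking whether $N\setminus i$ still contains $T$ — and checking that the denominator $\sum_j g^{\wa}(j)$ is genuinely nonzero before dividing, which is exactly what forces the separate treatment of $\Delta=0$.
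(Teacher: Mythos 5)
Your proof is correct and takes essentially the same route as the paper: apply the explicit convex-game $\tau$-formula (positive games being convex) and compute the upper vector and gaps via the linearity of Lemma~\ref{lem:linear-combination} on the extreme games $v_T$, arriving at exactly the paper's intermediate quantities $b_i^{\wa}=v(i)+\Delta\sum_{T\in N_1:\,i\in T}\alpha_T$, $g^{\wa}(i)=\Delta\sum_{T\in N_1:\,i\in T}\alpha_T$ and $g^{\wa}(N)=\Delta\bigl(\sum_{T\in N_1}\alpha_T\lvert T\rvert-1\bigr)$ before substituting. You are in fact somewhat more careful than the paper's own proof: you invoke the right statement (Theorem~\ref{thm:convex-tau}, whereas the paper mis-cites Theorem~\ref{thm:1convex-tau}), you keep the $\alpha_T$ factors that the paper's displayed list drops, and you separately treat $\Delta=0$, where $\sum_{j\in N}g^{\wa}(j)=0$ makes the unconditional substitution of the division formula, as done in the paper, not literally valid.
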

\begin{proof}
	From Theorem~\ref{thm:1convex-tau}, it holds for every $\P$-extension $(N,\wa)$ that
	\[
	\tau_i(\wa) = b_i^\wa - \frac{g^\wa(N)}{\sum_{i \in N}g^\wa(i)}g^\wa(i).
	\]
	From linearity of upper vectors and the gap function (Lemma~\ref{lem:linear-combination}), we can express
	\[
	g^\wa(N) = \sum_{T \in N_1}\alpha_Tg^{v_T}(N) \text{ and } b^\wa(N) = \sum_{T \in N_1}b^{v_T}(N).
	\]
	From the definition of the upper vector, we have
	\[
	b_i^{v_T} = \begin{cases}
		v(i) + \Delta & \text{if } i \in T,\\
		v(i) & \text{if } i \notin T.\\
	\end{cases}
	\]
	Further, it is easy to express
	\begin{itemize}
		\item $b^{v_T}(N) = \Delta \lvert T \rvert  + \sum_{i \in N}v(i)$,
		\item $b^{\wa}(N) = \Delta\sum_{T \in N_1} \lvert T \rvert + \sum_{i \in N}v(i)$,
		\item $g^{\wa}(N) = \Delta \left(\sum_{T \in N_1} \lvert T \rvert - 1\right)$,
		\item $b_i^{\wa} = v(i) + \Delta \sum_{T \in N_1, i \in T}\alpha_T$,
		\item $g^{\wa}(i) = \Delta \sum_{T \in N_1, i \in T}\alpha_T$.
	\end{itemize}
	Combining these expressions concludes the proof.
\end{proof}

\begin{theorem}
	Let $\pgame$ be a $\P$-extendable minimal incomplete game. It holds
	\[
	\cup\tau(\P)\subsetneq \I(v).
	\]
\end{theorem}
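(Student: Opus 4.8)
The plan is to prove the two inclusions separately, reusing the template of the Shapley-value argument in Theorem~\ref{thm:shapley-positive}. The non-strict inclusion $\cup\tau(\P) \subseteq \I(v)$ comes essentially for free: every positive game is convex, so each $\P$-extension of $\pgame$ is also a $\Co$-extension, giving $\P(v) \subseteq \Co(v)$ and hence $\cup\tau(\P) \subseteq \cup\tau(\Co)$; the latter equals $\I(v)$ by the preceding theorem on the weak $\tau$-value of convex extensions. This reduces everything to showing that the inclusion is proper.

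For strictness I would exhibit a concrete imputation that no $\P$-extension can realize as its $\tau$-value, and the natural candidate is $I^k$ for a fixed $k \in N$, the imputation that hands the entire surplus $\Delta$ to player $k$. Assuming $\Delta > 0$, suppose for contradiction that $\tau(\wa) = I^k$ for some $\P$-extension $(N,\wa)$. Feeding the requirement $\tau_i(\wa) = v(i)$ for every $i \neq k$ into the formula of the previous lemma forces
\[
\Delta \cdot \frac{\sum_{T \in N_1: i \in T}\alpha_T}{\sum_{T \in N_1}\alpha_T\lvert T \rvert} = 0, \qquad i \neq k.
\]
Since $\sum_{T \in N_1}\alpha_T = 1$ and every $T \in N_1$ satisfies $\lvert T \rvert > 1$, the denominator is at least $2$, so together with $\Delta > 0$ the numerator must vanish: $\sum_{T \in N_1,\, i \in T}\alpha_T = 0$ for each $i \neq k$.

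By non-negativity of the coefficients, this means $\alpha_T = 0$ for every $T \in N_1$ containing some player other than $k$. But each $T \in N_1$ has $\lvert T \rvert \geq 2$ and therefore contains at least one player distinct from $k$, so in fact $\alpha_T = 0$ for all $T \in N_1$, contradicting $\sum_{T \in N_1}\alpha_T = 1$. Hence $I^k \notin \cup\tau(\P)$ and the inclusion is strict. The one place that genuinely needs care — and the main obstacle — is the degenerate regime $\Delta = 0$: there the formula collapses to $\tau_i(\wa) = v(i)$ for all $i$, while $\I(v)$ itself shrinks to the single point $(v(1),\dots,v(n))$, so $\cup\tau(\P)$ and $\I(v)$ coincide and the strict inclusion fails. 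This forces the standing assumption $\Delta > 0$; once it is in place, the remainder is just the non-negativity-and-counting argument above, mirroring the Shapley case verbatim.
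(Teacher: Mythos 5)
Your proposal is correct and follows essentially the same route as the paper: both prove strictness by showing $I^k \notin \cup\tau(\P)$, feeding $\tau_i(\wa) = v(i)$ for $i \neq k$ into the $\tau$-value formula for $\P$-extensions, using $\alpha_T \geq 0$ to force $\alpha_T = 0$ for every $T \in N_1$, and contradicting $\sum_{T \in N_1}\alpha_T = 1$. Your explicit flagging of the degenerate case $\Delta = 0$ (where $\I(v)$ collapses to a point and the strict inclusion genuinely fails, so $\Delta > 0$ is a needed standing assumption) is a small correct refinement of a gap the paper passes over silently, not a different argument.
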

\begin{proof}
	Similarly to the proof of Theorem~\ref{thm:shapley-positive}, we show $I^k \notin \cup\tau(\P)$ for any $k \in N$. For a contradition, if $I^k \in \cup\tau(\P)$, then there is a $\P$-extension $(N,\wa)$ such that $\tau(\wa)=I^k$. It follows from $\tau_i(w) = I^k_i$ that
	\begin{equation}\label{eq:tau-sum}
		\sum_{T \subseteq N, i \in T}\alpha_T = 0 \text{ for every }i \neq k.
	\end{equation}
	As in the proof of Theorem~\ref{thm:shapley-positive}, from $\alpha_T \geq 0$ and~\eqref{eq:tau-sum}, it follows $\alpha_T=0$ for every $T \in N_1$, thus
	\[
	0 = \sum_{T \subseteq N_1} \alpha_T = 1.
	\]
\end{proof}

\subsection{The nucleolus and the (pre-)kernel}
Although different in general, for the class of convex games, the nucleolus, the prekernel, and the kernel coincide. The prekernel and the kernel even coincide for the class of superadditive games and for both superadditive and convex games, these three solution concepts are subsets of $\I(v)$. We show that for superadditive and convex extensions, the approximations do not yield any strengthening over $\emptyset \subseteq \mathcal{S}(v) \subseteq \I(v)$. For positive extensions, however, we are able to show that there is an improvement for the weak solution.

\begin{lemma}\label{lem:nucleolus-oneC}
	Let $\pgame$ be a $\oneC$-extendable minimal incomplete game. It holds for every  $\oneC$-extension $(N,\walfabeta)$ that
	\[
	\eta^*(\walfabeta) = I^\alpha.
	\]
\end{lemma}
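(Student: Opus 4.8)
The plan is to reduce the nucleolus computation for a $\oneC$-extension to the already-established $\tau$-value computation from Lemma~\ref{lem:oneC-tau-value}. The key observation is that the class of $\oneC$-games comes equipped with a powerful coincidence result, Theorem~\ref{thm:oneC-nucleolus-equals-tau}, which states that for any $1$-convex game the nucleolus equals the $\tau$-value. Since every $\oneC$-extension $(N,\walfabeta)$ is by construction a $1$-convex game (it lives in $\oneCv$, the set of $\oneC$-extensions described in Theorem~\ref{thm:oneC-min-info-set}), this coincidence applies directly to it.

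First I would invoke Theorem~\ref{thm:oneC-nucleolus-equals-tau} to write $\eta(\walfabeta) = \tau(\walfabeta)$, valid because $(N,\walfabeta)$ is $1$-convex. Then I would apply Lemma~\ref{lem:oneC-tau-value}, which already establishes that $\tau(\walfabeta) = I^\alpha$ for every $\oneC$-extension. Chaining these two equalities gives $\eta(\walfabeta) = I^\alpha$ immediately, which is precisely the claim (with $\eta^*$ denoting the nucleolus as used in the statement). The entire argument is thus a two-line composition of previously proved facts, with no new computation required; the heavy lifting of showing $g^\walfabeta(N)=0$ has already been carried out in the proof of Lemma~\ref{lem:oneC-tau-value}.

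The only genuine obstacle, if any, is a bookkeeping one: ensuring that the notation $\eta^*$ appearing in this lemma's statement is consistent with the nucleolus $\eta$ defined earlier in the preliminaries, and confirming that Theorem~\ref{thm:oneC-nucleolus-equals-tau} is stated for exactly the class to which $(N,\walfabeta)$ belongs. Neither of these is mathematically deep. Since no case analysis, no inequality manipulation, and no extremal argument is needed, I would expect the proof to consist simply of the sentence: by Theorem~\ref{thm:oneC-nucleolus-equals-tau} the nucleolus of the $1$-convex game $(N,\walfabeta)$ equals its $\tau$-value, and by Lemma~\ref{lem:oneC-tau-value} the latter equals $I^\alpha$, which concludes the proof.
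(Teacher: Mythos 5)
Your proof is correct and follows exactly the paper's own argument: the paper likewise proves this lemma in one line by chaining Theorem~\ref{thm:oneC-nucleolus-equals-tau} with Lemma~\ref{lem:oneC-tau-value} to get $\eta(\walfabeta)=\tau(\walfabeta)=I^\alpha$. Your observation about the notation $\eta^*$ versus $\eta$ is also apt, as the paper indeed uses plain $\eta$ in its proof and the asterisk in the statement is best read as a typo.
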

\begin{proof}
	From Theorem~\ref{thm:oneC-nucleolus-equals-tau} and Lemma~\ref{lem:oneC-tau-value}, we have $\eta(\walfabeta)=\tau(\walfabeta)=I^\alpha$.
\end{proof}

\begin{theorem}
	Let $\pgame$ be a minimal incomplete game. It holds
	\begin{enumerate}
		\item $\cup\eta(\S) = \cup\K(\S)=\cup\K^*(\S)=\mathcal{I}(v)$ if $\pgame$ is $\S$-extendable,
		\item $\cup\eta(\Co) = \cup\K(\Co)=\cup\K^*(\Co)=\mathcal{I}(v)$ if $\pgame$ is $\Co$-extendable,
		\item $\cup\eta(\oneC) = \cup\K(\oneC) = \I(v)$ if $\pgame$ is $\oneC$-extendable,
		\item $\cup\K^*(\oneC) \supseteq \I(v)$ if $\pgame$ is $\oneC$-extendable.
	\end{enumerate}
\end{theorem}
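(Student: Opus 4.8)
I would prove all four statements at once by pinning down a common upper bound $\I(v)$ (wherever one is claimed) and a common lower bound obtained from the behaviour of the three concepts on the explicit extensions $\walfabeta$ and $\walfa$. First I would record the easy upper bounds. By definition the nucleolus of any game is an imputation and the kernel is contained in the imputation set, and every extension $(N,w)$ satisfies $\I(w)=\I(v)$; hence $\cup\eta(C)\subseteq\I(v)$ and $\cup\K(C)\subseteq\I(v)$ for each class $C\in\{\S,\Co,\oneC\}$. For the prekernel I would invoke Theorem~\ref{thm:kernel-equals-prekernel}: on superadditive games $\K^*=\K\subseteq\I$, so $\cup\K^*(\S)\subseteq\I(v)$, and since convex games are superadditive also $\cup\K^*(\Co)\subseteq\I(v)$. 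No upper bound on $\cup\K^*(\oneC)$ is available, which is precisely why statement~4 is merely an inclusion.

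\textbf{Key computation.} The core of the argument is to show that for every $\oneC$-extension $(N,\walfabeta)$ the imputation $I^\alpha$ lies in the prekernel $\K^*(\walfabeta)$. I would read the excesses of $I^\alpha$ directly off~\eqref{eq:oneC-walfabeta}: a singleton $\{i\}$ has excess $-\alpha_i\Delta$, the grand coalition and every $(n-1)$-coalition have excess $0$, and every remaining coalition $S$ has excess $-\beta_S$. Since $\Delta\ge 0$, $\alpha_i\ge 0$ and $\beta_S\ge 0$, all excesses are nonpositive and the value $0$ is attained on the $(n-1)$-coalitions. For any pair $i\ne j$ the $(n-1)$-coalition $N\setminus j$ contains $i$, excludes $j$, and has excess $0$, so it witnesses $s_{ij}(I^\alpha,\walfabeta)=0$; symmetrically $s_{ji}(I^\alpha,\walfabeta)=0$. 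Thus $s_{ij}=s_{ji}$ for all $i\ne j$, and since $I^\alpha$ is efficient, Definition~\ref{def:prekernel} gives $I^\alpha\in\K^*(\walfabeta)$; as $I^\alpha$ is moreover an imputation, the condition of Definition~\ref{def:kernel} holds trivially, so $I^\alpha\in\K(\walfabeta)$ too. Letting $\alpha$ range over all nonnegative $\alpha$ with $\alpha(N)=1$ sweeps out $\I(v)$, yielding $\I(v)\subseteq\cup\K^*(\oneC)$ and $\I(v)\subseteq\cup\K(\oneC)$; combined with Lemma~\ref{lem:nucleolus-oneC} (which gives $\eta(\walfabeta)=I^\alpha$) this also yields $\cup\eta(\oneC)=\I(v)$. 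Together with the upper bounds, statements~3 and~4 follow.

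\textbf{The convex and superadditive cases.} For statement~2 I would restrict to the extensions $\walfa$, which are convex by Lemma~\ref{lem:walfa-is-convex}. On a convex game Theorem~\ref{thm:convex-nucleolus-coincidence} collapses the three concepts, so $\eta(\walfa)=\K(\walfa)=\K^*(\walfa)=\{I^\alpha\}$ (using Lemma~\ref{lem:nucleolus-oneC} with $\beta=0$). Sweeping $\alpha$ again covers $\I(v)$, so each of $\cup\eta(\Co)$, $\cup\K(\Co)$, $\cup\K^*(\Co)$ contains $\I(v)$, and the upper bounds force equality. Since $\Co(v)\subseteq\S(v)$ by~\eqref{eq:relation-of-extensions}, these same convex extensions show every superadditive weak solution contains $\I(v)$, and the superadditive upper bounds then give equality, proving statement~1.

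\textbf{Main obstacle.} The only nonroutine ingredient is the surplus computation; the remainder is bookkeeping with the coincidence theorems. The delicate observation there is that the $(n-1)$-coalitions (for $n\ge 3$, the only relevant case) never belong to the known set and always carry excess $0$, so a single family of coalitions simultaneously witnesses every maximal surplus and pins all of them to $0$. This is what forces $I^\alpha$ into the prekernel irrespective of the ray coefficients $\beta$, and is the step I would verify most carefully.
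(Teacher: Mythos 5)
Your proof is correct. For statements 1 and 2 it coincides with the paper's argument: restrict to the extensions $\walfa$, which are convex by Lemma~\ref{lem:walfa-is-convex} and superadditive a fortiori, collapse $\eta$, $\K$, $\K^*$ on them via Theorem~\ref{thm:convex-nucleolus-coincidence} together with $\eta(\walfa)=I^\alpha$ from Lemma~\ref{lem:nucleolus-oneC}, sweep $\alpha$ to cover $\I(v)$, and close with the upper bounds ($\eta$ and $\K$ lie in $\I(w)=\I(v)$ by definition, and $\K^*=\K$ on superadditive games by Theorem~\ref{thm:kernel-equals-prekernel}). For statements 3 and 4 you take a genuinely different route: the paper gets the lower bounds $\I(v)\subseteq\cup\K(\oneC)$ and $\I(v)\subseteq\cup\K^*(\oneC)$ from the same $\beta=0$ specialization, since each $\walfa$ is simultaneously a $\oneC$-extension and convex, so its kernel and prekernel are already $\{I^\alpha\}$ by the coincidence theorem; you instead compute the excesses of $I^\alpha$ in an \emph{arbitrary} extension $\walfabeta$, finding them equal to $-\alpha_i\Delta$, $0$, or $-\beta_S$, hence nonpositive, with the $(n-1)$-coalitions forcing every surplus $s_{ij}(I^\alpha,\walfabeta)$ to equal $0$. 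This computation is correct and proves the strictly stronger fact that $I^\alpha\in\K^*(\walfabeta)\cap\K(\walfabeta)$ for every $\beta\geq 0$, not only for $\beta=0$; what it buys is a self-contained argument for the kernel statements that does not route through convexity at all (the paper's one-line proof tacitly relies on the $\beta=0$ games doing this work, since $\oneC$-extensions with $\beta\neq 0$ need not be convex or superadditive), at the cost of bookkeeping the paper avoids. One caveat applies to both arguments equally: your witness coalition $N\setminus j$ carries excess $0$ only because $(n-1)$-coalitions lie outside $\K$, i.e. only for $n\geq 3$; you flag this explicitly, whereas the paper leaves the same restriction implicit (for $n=2$ its Lemma~\ref{lem:nucleolus-oneC} already fails, as then $g^{\walfabeta}(N)=\Delta$ rather than $0$).
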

\begin{proof}
	The theorem follows immediately from Lemma~\ref{lem:nucleolus-oneC}, the relation between the sets of $\oneC$-extensions, $\Co$-extensions and $\S$-extensions (Lemma~\eqref{lem:walfa-is-convex} and Equation~\eqref{eq:relation-of-extensions}) and the relations between $\eta$, $\K$ and $\K^*$ for convex games (Theorem~\ref{thm:convex-nucleolus-coincidence}).
\end{proof}

As the coincidence of the prekernel and kernel with the nucleolus suggests, from Lemma~\ref{lem:nucleolus-oneC}, it is clear that the strong solution yields an empty set for all three sets of $\oneC$-extensions, $\Co$-extensions and $\S$-extensions. 
Also, the weak nucleolus is not a non trivial approximation even when convexity is considered. Yet again, this changes when restricting to positivity. However, the positivity is still not strong enough for a non trivial lower approximation.

\begin{theorem}
	Let $\pgame$ be a $\P$-extendable minimal incomplete game. It holds
	\[
	\cup\K^*(\P) \subsetneq \I(v).
	\]
\end{theorem}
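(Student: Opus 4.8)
The plan is to reduce the statement to the balance condition defining the prekernel and then reuse the extremal-imputation strategy from Theorem~\ref{thm:shapley-positive}. First I would record the inclusion $\cup\K^*(\P)\subseteq\I(v)$: since positive games are convex, Theorem~\ref{thm:convex-nucleolus-coincidence} gives $\K^*(\wa)=\K(\wa)=\{\eta(\wa)\}$ for every $\P$-extension $(N,\wa)$, and the nucleolus is by definition an imputation, so each such prekernel is a single point of $\I(v)$. For strictness it then suffices to exhibit one imputation realised by no $\P$-extension, and the natural candidates are the vertices $I^k$, exactly as in the Shapley and $\tau$ cases.

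So I would assume for contradiction that $I^k\in\K^*(\wa)$ for some $\P$-extension $(N,\wa)$, which by Definition~\ref{def:prekernel} forces the balance $s_{ij}(I^k,\wa)=s_{ji}(I^k,\wa)$ for all $i\neq j$. Writing $\wa=\sum_{i\in N}v(i)u_i+\Delta\sum_{T\in N_1}\alpha_T u_T$ as in the proof of Lemma~\ref{lem:shapley-positive} and using $I^k(S)=\sum_{i\in S}v(i)+\Delta\,\mathbf 1[k\in S]$, the excess collapses to
\[
e(S,I^k,\wa)=\Delta\Bigl(\sum_{T\in N_1,\,T\subseteq S}\alpha_T-\mathbf 1[k\in S]\Bigr).
\]
The point of this formula is that, since every $\alpha_T\geq 0$, the inner sum is monotone in $S$, so the surplus maxima are attained at the largest admissible coalitions.

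Concretely, I would fix $i\neq k$ and evaluate the two relevant surpluses. For $s_{ik}$ the admissible coalitions avoid $k$, so the maximiser is $S=N\setminus k$ and $s_{ik}=\Delta\sum_{T\in N_1,\,k\notin T}\alpha_T$; for $s_{ki}$ the admissible coalitions contain $k$ and avoid $i$, the maximiser is $S=N\setminus i$, and $s_{ki}=\Delta\bigl(\sum_{T\in N_1,\,i\notin T}\alpha_T-1\bigr)$. Cancelling $\Delta>0$ and rewriting the complementary sums through $\sum_{T\in N_1}\alpha_T=1$, the balance $s_{ik}=s_{ki}$ becomes $\sum_{T\in N_1,\,k\in T}\alpha_T-\sum_{T\in N_1,\,i\in T}\alpha_T=1$. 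Both sums lie in $[0,1]$, so this equality is possible only when $\sum_{T\in N_1,\,i\in T}\alpha_T=0$. Requiring it for every $i\neq k$, and noting that each $T\in N_1$ has $\lvert T\rvert\geq2$ and hence meets $N\setminus k$, I conclude $\alpha_T=0$ for all $T\in N_1$, contradicting $\sum_{T\in N_1}\alpha_T=1$; thus $I^k\notin\cup\K^*(\P)$ and the inclusion is strict.

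The main obstacle is the surplus computation in the third step, namely correctly identifying the maximising coalitions and simplifying the balance equation; everything else is bookkeeping parallel to Theorem~\ref{thm:shapley-positive}. One subtlety worth flagging is that strictness genuinely uses $\Delta>0$ (when $\Delta=0$ the imputation set degenerates to the single point $(v(1),\dots,v(n))$), matching the implicit convention already used in the positive-case arguments for $\phi$ and $\tau$.
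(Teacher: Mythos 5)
Your proof is correct and follows essentially the same route as the paper's: both rule out the extremal imputations $I^k$ by computing $e(S,I^k,\wa)$, locating the surplus maximizers at $N\setminus k$ and $N\setminus i$, and deriving a balance equation that forces every $\alpha_T$ to vanish, contradicting $\sum_{T\in N_1}\alpha_T=1$. Your form of the balance condition, $\sum_{T\in N_1,\,k\in T}\alpha_T-\sum_{T\in N_1,\,i\in T}\alpha_T=1$, is equivalent (via $\sum_{T\in N_1}\alpha_T=1$) to the paper's $\sum_{T\in N_1,\,T\subseteq N\setminus i}\alpha_T-\sum_{T\in N_1,\,T\subseteq N\setminus k}\alpha_T=1$, and your explicit flag that strictness requires $\Delta>0$ correctly identifies a hypothesis the paper uses only implicitly.
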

\begin{proof}
	We show that $\I^k \notin \cup\K^*(\P)$ for every $k \in N$. For a contradiction, let $(N,\wa) \in \P(v)$ such that $\K^*(\wa)=\{I^k\}$. The prekernel is single-valued because by Theorem~\ref{thm:convex-nucleolus-coincidence}, $\K^*(\wa)=\eta(\wa)$. We can express the excess of every $S \subseteq N$ as
	\[
	e(S,I^k,\wa) = \begin{cases}
		\Delta \left(\sum\limits_{T \in N_1, T \subseteq S}\alpha_T - 1\right) & \text{if } k \in S,\\
		\Delta\sum\limits_{T \in N_1, T \subseteq S}\alpha_T & \text{if } k \notin S.\\
	\end{cases}
	\]
	It holds that
	\[e(S,I^k,\wa) \leq e(S\cup i,I^k,\wa)\text{ for }i \neq k\]
	and 
	\[e(S,I^k,\wa) \leq 0\text{ if }k \in S.\] 
	Therefore, the maximal surplus $s_{ij}(I^k,\wa)$ is attained for $N \setminus \{j,k\}$ if $i \neq k$ and for $N \setminus \{j\}$ if $i=k$. Consider the relation $s_{ki}(I^k,\wa)=s_{ik}(I^k,\wa)$. It can be equivalently expressed as
	\[
	\Delta \left(\sum\limits_{T \in N_1, T \subseteq N \setminus i}\alpha_T - 1\right) = \Delta\sum\limits_{T \in N_1, T \subseteq N \setminus k}\alpha_T,
	\]
	or
	\[
	\sum\limits_{T \in N_1, T \subseteq N \setminus i}\alpha_T - \sum\limits_{T \in N_1, T \subseteq N \setminus k}\alpha_T = 1.
	\]
	From $\alpha_T \geq 0$ for $T \in N_1$ and $\sum_{T \in N_1}\alpha_T=1$, it holds for every $T \in N_1, T \subseteq N \setminus k$ that $\alpha_T=0$. This means that
	\[
	\sum\limits_{T \in N_1, T \subseteq N \setminus i}\alpha_T = 1 \text{ for } \forall i \neq k.
	\]
	To satisfy all these conditions, there has to be a coalition $S$ with $\alpha_S = 1$ from \[\bigcap\limits_{i \neq k} \{T \in N_1 \mid T \subseteq N \setminus i\} = \emptyset.\]
	This is a contradiction. 
\end{proof}

To show that the strong nucleolus is empty, one only has to show there are two $\P$-extensions with different nucleolus. This is trivial and left to the reader.

\subsection{The core and the Weber set}

The core is always a selection from the imputation set $\I(v)$. We show that even for the set of $\P$-extensions, the weak solution is equal to the imputation set. However, there is a $\P$-extension $(N,\wa)$ for which $\Core(\wa) = \I(v)$, therefore without further knowledge, it might be that actually the approximation is equal to the solution.

\begin{theorem}\label{thm:weak-core-positive}
	Let $\pgame$ be a minimal incomplete game. It holds
	\begin{enumerate}
		\item $\cup\Core(\P) = \cup\W(\P)=\I(v)$ if $\pgame$ is $\P$-extendable,
		\item $\cup\Core(\Co) = \cup\W(\Co) = \I(v)$ if $\pgame$ is $\Co$-extendable,
		\item $\cup\Core(\S) = \cup\Core(\S) = \I(v)$ if $\pgame$ is $\S$-extendable.
	\end{enumerate}
\end{theorem}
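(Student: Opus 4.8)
The plan is to sandwich each weak solution between $\I(v)$ as an upper bound and $\I(v)$ as a lower bound, with the lower bound supplied by a single, carefully chosen positive extension.

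First I would dispatch the upper bounds. For every extension $(N,w)$ with $C\in\{\P,\Co,\S\}$ the imputation set is preserved, $\I(w)=\I(v)$, and since $\Core(w)\subseteq\I(w)$ (see the remark after Definition~\ref{def:core}), taking the union over all extensions gives $\cup\Core(C)\subseteq\I(v)$. For the Weber set the same bound holds because every such extension is at least superadditive: each marginal vector is efficient, as $\sum_{i\in N}\margi{w}{i}=w(N)=v(N)$ by telescoping, and individually rational, since superadditivity yields $\margi{w}{i}=w(S_{\sigma(i)}\cup i)-w(S_{\sigma(i)})\geq w(i)=v(i)$. Hence $\marg{w}\in\I(v)$ for every $\sigma$, and convexity of $\I(v)$ gives $\W(w)\subseteq\I(v)$, so $\cup\W(C)\subseteq\I(v)$.

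The heart of the argument is the matching lower bound, obtained from one extension. I would take the extreme game $(N,v_N)$ of~\eqref{eq:def-vt} with $T=N$, namely $v_N(S)=\sum_{i\in S}v(i)$ for all $S\neq N$ and $v_N(N)=v(N)$; equivalently $v_N=\sum_{i\in N}v(i)u_i+\Delta u_N$. Because $\pgame$ is $\P$-extendable we have $v(i)\geq0$ and $\Delta\geq0$, so all Harsanyi dividends of $v_N$ are non-negative and $v_N$ is a positive, hence convex, extension. I would then verify $\Core(v_N)=\I(v)$: the conditions $x(N)=v(N)$ and $x_i\geq v(i)$ are precisely the defining conditions of $\I(v)$, whereas for every $S$ with $2\leq|S|<n$ the core inequality $x(S)=\sum_{i\in S}x_i\geq\sum_{i\in S}v(i)=v_N(S)$ follows automatically from individual rationality and imposes nothing new. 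Thus $\I(v)=\Core(v_N)\subseteq\cup\Core(\P)$.

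Combining the bounds gives $\cup\Core(\P)=\I(v)$, and the remaining core equalities follow by monotonicity of the union under~\eqref{eq:relation-of-extensions}, since $\I(v)=\cup\Core(\P)\subseteq\cup\Core(\Co)\subseteq\cup\Core(\S)\subseteq\I(v)$ forces equality throughout. For the Weber set, parts~1 and~2 reduce to the core statements because every $\P$- and $\Co$-extension is convex, so $\W(w)=\Core(w)$ by Theorem~\ref{thm:convex-weber-equals-core}; for the superadditive case I would combine the upper bound $\cup\W(\S)\subseteq\I(v)$ established above with $\I(v)=\cup\W(\Co)\subseteq\cup\W(\S)$. The only genuinely nontrivial step is identifying $v_N$ as a positive extension whose core already exhausts $\I(v)$; everything else is a routine chain of inclusions.
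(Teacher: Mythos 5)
Your proposal is correct, and it pivots on exactly the same object as the paper: the extreme positive extension $(N,v_N)$ from~\eqref{eq:def-vt}, whose solution already exhausts $\I(v)$, after which everything follows from the chain $\Pv\subseteq\Co(v)\subseteq\Sv$ of~\eqref{eq:relation-of-extensions} and Theorem~\ref{thm:convex-weber-equals-core}. The difference is in how the key fact about $v_N$ is certified, and in how much you prove. The paper computes the marginal vectors of $v_N$ and observes that $\marg{v_N}=I^k$ for the last player $k=\sigma^{-1}(n)$, so that the vertices of $\W(v_N)$ and of $\I(v)$ coincide, giving $\W(v_N)=\I(v)$ and hence the core statement via convexity of $v_N$. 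You instead verify $\Core(v_N)=\I(v)$ straight from the core inequalities --- since $v_N(S)=\sum_{i\in S}v(i)$ for $S\neq N$, every constraint for an intermediate coalition is implied by individual rationality --- which is slightly more elementary (no marginal-vector computation) and then transfers to the Weber set by $\W=\Core$ on convex games. Your proposal also goes beyond the paper on the superadditive case: you prove the upper bound $\cup\W(\S)\subseteq\I(v)$ by noting that marginal vectors of a superadditive game are efficient and individually rational (superadditivity with $T=\{i\}$ gives $\margi{w}{i}\geq w(i)=v(i)$), and combined with $\I(v)=\cup\W(\Co)\subseteq\cup\W(\S)$ this yields the equality $\cup\W(\S)=\I(v)$. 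The paper never establishes this: its item~3 visibly contains a typo (repeating $\cup\Core(\S)$ where $\cup\W(\S)$ is expected) and its Table~2 records only the inclusion $\supseteq$ for the pair $(\S,\W)$, so your argument legitimately strengthens the published claim and resolves the typo in the stronger direction. One small bookkeeping point you handle correctly but should keep explicit: $\P$-extendability supplies both $v(i)\geq 0$ and $\Delta\geq 0$, which is what makes all Harsanyi dividends of $v_N=\sum_{i\in N}v(i)u_i+\Delta u_N$ non-negative and hence $v_N$ a genuine $\P$-extension.
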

\begin{proof}
	One of the vertices of the set of $\P$-extensions, $(N,v_N)$, defined in~\eqref{eq:def-vt}, satisfies $\W(v_N)=\I(v)$. The rest follows from the relation between the core and the Weber set (Theorem~\ref{thm:convex-weber-equals-core}) and the relation between sets of $C$-extensions from ~\eqref{eq:relation-of-extensions}.
	To show that $\W(v_N)=\I(v)$, we actually show that the vertices of these sets coincide. Consider $\sigma \in \Sigma_n$ and $S_{\sigma(1)},\dots, S_{\sigma(n)}$. For the correspoing marginal vector $m_\sigma^{v_N}$ it holds
	\[
	(m_\sigma^{v_N})_i = v_N(S_{\sigma(i)} \cup i) - v_N(S_{\sigma(i)}) = \begin{cases}
		v(i) + \Delta & \text{if $\sigma(i)=n$,}\\
		v(i) & \text{otherwise.} \\
	\end{cases}
	\]
	It follows $m_{\sigma}^{v_N} = I^k$ where $\sigma(k)=n$, which concludes the proof.
\end{proof}

Game $(N,v_N)$ is not 1-convex, therefore the previous result cannot be applied for $\oneC$-extensions. We also have to be more cautious in our analysis, because it does no longer hold that $\W(\walfabeta)=\Core(\walfabeta)$ because for $\beta \neq 0$, extension $(N,\walfabeta)$ might not be convex anymore. However, we are able to show much more for the set of $\oneC$-extensions.

\begin{lemma}\label{lem:core-oneC}
	Let $\pgame$ be $\oneC$-extendable incomplete game. For every $\oneC$-extensions $(N,\walfabeta)$ it holds
	\[
	\Core(\walfabeta)=\{I^\alpha\}.
	\]
\end{lemma}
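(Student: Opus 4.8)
The plan is to combine the explicit core formula for 1-convex games (Theorem~\ref{thm:1convex-core}) with the fact that the gap at the grand coalition vanishes for every $\oneC$-extension. First I would recall that $g^{\walfabeta}(N)=0$; this is exactly the computation already carried out inside the proof of Lemma~\ref{lem:oneC-tau-value}, where one checks that $g^{v^i}(N)=0$ for each $i\in N$ and $g^{e_T}(N)=0$ for each $T\in E$, so that linearity of the gap (Lemma~\ref{lem:linear-combination}) yields $g^{\walfabeta}(N)=0$.

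With this in hand, Theorem~\ref{thm:1convex-core} forces the core to be a single point. Substituting $g^{\walfabeta}(N)=0$ into
\[
\Core(\walfabeta)=conv\left\{b^{\walfabeta}-g^{\walfabeta}(N)e_i \mid i\in N\right\}
\]
makes every defining vertex equal to $b^{\walfabeta}$, whence $\Core(\walfabeta)=\{b^{\walfabeta}\}$.

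It then remains to identify this point with $I^\alpha$, and the cleanest route avoids computation entirely: the nucleolus is a core selector and, for 1-convex games, coincides with the $\tau$-value (Theorem~\ref{thm:oneC-nucleolus-equals-tau}), which Lemma~\ref{lem:oneC-tau-value} evaluates as $I^\alpha$. Hence $I^\alpha=\eta(\walfabeta)\in\Core(\walfabeta)=\{b^{\walfabeta}\}$, forcing $\Core(\walfabeta)=\{I^\alpha\}$. Alternatively one may compute $b^{\walfabeta}$ directly: by linearity of the upper vector (Lemma~\ref{lem:linear-combination}), using $b^{v^k}_i=v(i)+\Delta$ for $i=k$ and $b^{v^k}_i=v(i)$ for $i\neq k$, together with $b^{e_T}_i=0$, one obtains $b^{\walfabeta}_i=v(i)\sum_{k\in N}\alpha_k+\alpha_i\Delta=v(i)+\alpha_i\Delta=I^\alpha_i$. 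The argument is routine given the earlier lemmas; the only mild point to be careful about is that $b^{e_T}_i=0$ relies on the definition of $E$, which excludes $N$, all singletons, and all complements of singletons, so that $e_T$ vanishes on both $N$ and every $N\setminus i$ and therefore contributes nothing to the upper vector.
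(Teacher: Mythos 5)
Your proposal is correct and follows essentially the same route as the paper: both establish $g^{\walfabeta}(N)=0$ via Lemma~\ref{lem:linear-combination} applied to the decomposition into the games $(N,v^i)$ and $(N,e_T)$, and then invoke Theorem~\ref{thm:1convex-core} to collapse the core to the single point $b^{\walfabeta}$. The only divergence is the final identification $b^{\walfabeta}=I^\alpha$, which the paper obtains by the one-line direct computation $b^{\walfabeta}_i=\walfabeta(N)-\walfabeta(N\setminus i)=v(i)+\alpha_i\Delta$ from~\eqref{eq:oneC-walfabeta}, whereas you either route through the nucleolus as a core selector combined with Theorem~\ref{thm:oneC-nucleolus-equals-tau} and Lemma~\ref{lem:oneC-tau-value}, or perform the equivalent linearity computation of $b^{\walfabeta}$ (correctly noting that $b^{e_T}=0$ because $E$ excludes $N$ and all $N\setminus i$) --- all of which are valid and non-circular.
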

\begin{proof}
	If we show that $g^\walfabeta(N)=0$, it follows from Theorem~\ref{thm:1convex-core}, that $\Core(\walfabeta) = \{b^\walfabeta\}$. Further, for every $i \in N$,
	\[
	b^\walfabeta_i = \walfabeta(N) - \walfabeta(N \setminus i) = v(i) + \alpha_i\Delta = I^\alpha_i,
	\]
	which concludes the proof. To prove that $g^\walfabeta(N)=0$, we employ Lemma~\ref{lem:linear-combination}, to express
	\[
	g^\walfabeta(N) = \sum_{i \in N}\alpha_i g^{v^i}(N) + \sum_{T \in E}\beta_T g^{e_T}(N).
	\]
	It follows from the definition of the gap function and definitions of games $(N,v^i)$ and $(N,e_T)$ that both
	\[g^{v^i}(N) = 0\text{ and }g^{e_T}(N) = 0.\]
\end{proof}

\begin{theorem}
	Let $\pgame$ be a $\oneC$-extendable minimal incomplete games. It holds that
	\[
	\cup\Core(\oneC)=\mathcal{I}(v).
	\]
\end{theorem}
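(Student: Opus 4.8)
The plan is to combine Lemma~\ref{lem:core-oneC} with the parameterization of $\oneC$-extensions from Theorem~\ref{thm:oneC-min-info-set}, in exactly the spirit of the weak $\tau$-value analysis. By Lemma~\ref{lem:core-oneC}, every $\oneC$-extension $(N,\walfabeta)$ satisfies $\Core(\walfabeta)=\{I^\alpha\}$, a single point depending only on the coefficients $\alpha$ and not on $\beta$. Taking the union over all $\oneC$-extensions therefore collapses to
\[
\cup\Core(\oneC)(v) = \bigcup_{\walfabeta \in \oneCv} \Core(\walfabeta) = \bigcup_{\substack{\alpha \in \Rn_+\\ \alpha(N)=1}} \{I^\alpha\},
\]
since for each admissible $\alpha$ the choice $\beta=0$ yields the valid extension $(N,\walfa)$, and varying $\beta$ leaves the core unchanged.

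It then remains to identify this union with $\I(v)$, and I would do this by proving both inclusions explicitly. For $I^\alpha \in \I(v)$: because the game is $\oneC$-extendable, Theorem~\ref{thm:excess-extending} gives $\Delta \geq 0$, so $I^\alpha_i = v(i) + \alpha_i\Delta \geq v(i)$ (individual rationality) while $\sum_{i \in N} I^\alpha_i = \sum_{i \in N} v(i) + \Delta = v(N)$ (efficiency). For the reverse inclusion, given $x \in \I(v)$ I would exhibit a preimage: when $\Delta > 0$, set $\alpha_i = (x_i - v(i))/\Delta$, which is nonnegative by individual rationality and sums to $1$ by efficiency, so that $I^\alpha = x$; when $\Delta = 0$, the imputation set degenerates to the single point $(v(1),\dots,v(n))$, which equals $I^\alpha$ for every admissible $\alpha$.

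I expect essentially no obstacle here: the entire content is carried by Lemma~\ref{lem:core-oneC}, and the remaining surjectivity of the map $\alpha \mapsto I^\alpha$ onto $\I(v)$ is the same elementary fact already invoked for the weak $\tau$-value. The only point requiring mild care is the degenerate case $\Delta = 0$, where the parameterization by $\alpha$ ceases to be injective but still covers the now-singleton imputation set.
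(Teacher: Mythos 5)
Your proposal is correct and follows essentially the same route as the paper: the paper's proof simply says the theorem follows immediately from Lemma~\ref{lem:core-oneC}, and your argument spells out the details of that step (each $\alpha$ is realized via the extension $(N,\walfa)$ with $\beta=0$, and the map $\alpha \mapsto I^\alpha$ covers $\I(v)$, using $\Delta \geq 0$ from Theorem~\ref{thm:excess-extending}). Your explicit treatment of the degenerate case $\Delta=0$ is a careful addition, but it is elaboration of the same argument, not a different approach.
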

\begin{proof}
	Follows immediatelly from Lemma~\ref{lem:core-oneC}.
\end{proof}

To analyse the weak Weber set, we can yet again explicitly express the Weber set of every $\oneC$-extension.

\begin{lemma}\label{lem:weber-oneC}
	Let $\pgame$ be a $\oneC$-extendable minimal incomplete game. for every $\oneC$-extension $(N,\walfabeta)$ it holds $\W(\walfa) = conv\{\marg{\walfabeta} \mid \sigma \in \Sigma_n\}$, where
	\[
	\margi{\walfabeta}{i} = \begin{cases}
		v(i) & \sigma(1)=i \\
		v(i) + (\alpha_i + \alpha_j)\Delta - \beta_{S_{\sigma(j)}\cup \{j\}} & \sigma(1)=j, \sigma(2) = i\\
		v(i) + \alpha_i\Delta + \beta_{S_{\sigma(i)}} & \sigma(n-1) = i \\
		v(i) + \alpha_i\Delta & \sigma(n)=i \\
		v(i) + \alpha_i\Delta + (\beta_{S_{\sigma(i)}} - \beta_{S_{\sigma(i)} \cup \{i\}})
	\end{cases}
	\]
\end{lemma}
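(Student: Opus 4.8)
The plan is to compute each marginal vector $\marg{\walfabeta}$ directly from its definition~\eqref{eq:marg-vector-definition} and then invoke Definition~\ref{def:weber}, which expresses $\W(\walfabeta)$ as their convex hull; nothing beyond the piecewise formula~\eqref{eq:oneC-walfabeta} for $\walfabeta$ is needed. Fix a permutation $\sigma \in \Sigma_n$ and, for the player $i$ sitting in position $\sigma(i)$, write $P \coloneqq S_{\sigma(i)}$ for its set of predecessors, so that $\margi{\walfabeta}{i} = \walfabeta(P \cup i) - \walfabeta(P)$. The value of $\walfabeta$ on a coalition depends only on which branch of~\eqref{eq:oneC-walfabeta} the coalition falls into, i.e.\ on whether it is a singleton, has size $n-1$, equals $N$, or is \emph{generic} with size between $2$ and $n-2$ (the only range on which a $\beta$-term appears, since $\beta$ is indexed by $E$).

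First I would treat the generic interior positions $3 \le \sigma(i) \le n-2$, where both $P$ and $P \cup i$ are generic. Here the two sums $\sum_{j}(v(j)+\alpha_j\Delta)$ telescope to the single term $v(i)+\alpha_i\Delta$, while the $\beta$-terms survive as $+\beta_{S_{\sigma(i)}} - \beta_{S_{\sigma(i)}\cup\{i\}}$, giving exactly the ``otherwise'' line of the claimed formula. I would then handle the four boundary positions separately, in each case reading off which branch of~\eqref{eq:oneC-walfabeta} applies to $P$ and to $P \cup i$. For $\sigma(i)=1$ both $P = \emptyset$ and $P\cup i$ is a singleton, so the difference collapses to $v(i)$. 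For $\sigma(i)=n$ the coalition $P\cup i = N$, and using $\Delta = v(N)-\sum_j v(j)$ together with $\alpha(N)=1$ the difference simplifies to $v(i)+\alpha_i\Delta$. For $\sigma(i)=n-1$ the larger coalition $P\cup i$ has size $n-1$, which carries no $\beta$-term, so only $+\beta_{S_{\sigma(i)}}$ remains. For $\sigma(i)=2$ the predecessor set $P$ is a singleton, whose value is $v(j)$ rather than $v(j)+\alpha_j\Delta$; the missing $\alpha_j\Delta$ is precisely what produces the combined coefficient $(\alpha_i+\alpha_j)\Delta$ and leaves a single $-\beta_{S_{\sigma(i)}\cup\{i\}}$ term.

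The hard part is not any single computation but the bookkeeping of exactly where the $\beta$-terms enter and drop out: they survive only when both $P$ and $P\cup i$ lie in the size range $\{2,\dots,n-2\}$, and it is precisely the failure of this at the four extreme positions that forces the separate boundary lines. I would also record a short check of the degenerate small-$n$ situations—most notably $n=3$, where the generic interior range is empty and the roles of positions $2$ and $n-1$ nearly coincide—verifying that the same telescoping still yields the listed cases. Collecting the marginal vectors over all $\sigma\in\Sigma_n$ and taking their convex hull via Definition~\ref{def:weber} then gives the asserted description of $\W(\walfabeta)$.
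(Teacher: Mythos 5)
Your proposal is correct and takes exactly the route of the paper, whose entire proof is the single remark that the formula follows from the definition of marginal vectors~\eqref{eq:marg-vector-definition} and the definition of $(N,\walfabeta)$ in~\eqref{eq:oneC-walfabeta} --- you simply carry out the case analysis the paper leaves implicit. As a minor bonus, your bookkeeping uses the correct index $\beta_{S_{\sigma(i)}\cup\{i\}}=\beta_{\{i,j\}}$ in the $\sigma(2)=i$ case (the subscript $S_{\sigma(j)}\cup\{j\}$ in the lemma's statement is a typo, since singletons do not lie in $E$) and you check the degenerate $n=3$ situation where the positions $2$ and $n-1$ collide, both points the paper passes over.
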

\begin{proof}
	Follows from the definition of marginal vectors, defined in~\eqref{eq:marg-vector-definition} and the definition of $(N,\walfabeta)$, defined in~\eqref{eq:oneC-walfabeta}.
\end{proof}

\begin{theorem}
	Let $\pgame$ be $\oneC$-extendable incomplete game. It holds
	\begin{equation*}
		\I(v) \subsetneq \cup\W(\oneC).
	\end{equation*}
\end{theorem}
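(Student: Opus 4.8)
The plan is to split the claim into the containment $\I(v) \subseteq \cup\W(\oneC)$ and the strict part, proving each directly from the results already established for $\oneC$-extensions. For the containment I would lean on the core-catcher property. Fix any $\alpha \in \Rn_+$ with $\alpha(N)=1$ and consider the $\oneC$-extension $(N,\walfa)$ (i.e.\ $\beta=0$). By Lemma~\ref{lem:core-oneC} its core is the single point $\{I^\alpha\}$, and by Theorem~\ref{thm:convex-weber-equals-core} the Weber set contains the core, so $I^\alpha \in \W(\walfa) \subseteq \cup\W(\oneC)$. Letting $\alpha$ range over the whole unit simplex and recalling that $\{I^\alpha \mid \alpha \in \Rn_+,\ \alpha(N)=1\} = \I(v)$ gives $\I(v) \subseteq \cup\W(\oneC)$.

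For strictness the idea is to exhibit a single vertex of one Weber set that fails individual rationality and therefore lies outside $\I(v)$; here the unboundedness of the $\beta$-parameters is the essential lever. Fix a doubleton $\{i,j\} \in E$, which exists precisely when $n \geq 4$, and a permutation $\sigma$ placing $j$ first and $i$ second. By Theorem~\ref{thm:oneC-min-info-set} the game $(N,\walfabeta)$ with $\beta_{\{i,j\}} = M$ and all other coordinates of $\beta$ equal to zero is a genuine $\oneC$-extension for every $M \geq 0$. Reading off the second case of Lemma~\ref{lem:weber-oneC} gives $\margi{\walfabeta}{i} = v(i) + (\alpha_i+\alpha_j)\Delta - M$, which drops strictly below $v(i)$ as soon as $M$ is large enough. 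Since $\marg{\walfabeta}$ is a generating vertex of $\W(\walfabeta)$, it belongs to $\W(\walfabeta) \subseteq \cup\W(\oneC)$ yet violates $x_i \geq v(i)$, so $\marg{\walfabeta} \notin \I(v)$. Combined with the containment, this delivers $\I(v) \subsetneq \cup\W(\oneC)$.

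I expect the only delicate point to be the coalition bookkeeping when reading the marginal value from Lemma~\ref{lem:weber-oneC}: one must check that the chosen $\sigma$ really lands player $i$ in the ``second position'' case and, in particular, that $\{i,j\}$ carries a free $\beta$-coordinate (equivalently $2 \leq n-2$). Everything else is already packaged into the cited lemmas, so no computation beyond substituting the specific $\sigma$ and $\beta$ is required.
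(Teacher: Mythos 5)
Your proposal is correct and follows essentially the same route as the paper: the containment is obtained from $\Core(\walfa)=\{I^\alpha\}$ for the $\beta=0$ extensions together with the core--Weber relation, and strictness from using a free $\beta$-coordinate in Lemma~\ref{lem:weber-oneC} to push one marginal-vector coordinate below $v(i)$ --- the paper argues via the general ``otherwise'' case with $\beta_{S_{\sigma(i)}\cup\{i\}}\to\infty$, while you instantiate a concrete doubleton witness, which is the same idea. Your explicit check that a doubleton lies in $E$ only when $n\geq 4$ is a point the paper glosses over (for $n\leq 3$ the set $E$ is empty, every $\oneC$-extension is some $\walfa$, and the inclusion degenerates to equality), so your bookkeeping actually makes an implicit hypothesis of the theorem visible.
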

\begin{proof}
	From Lemma~\ref{lem:weber-oneC}, we know 
	\[
	\I(v) = \bigcup\limits_{\walfa \in \oneC(v)}\Core(\walfa)=\bigcup\limits_{\walfa \in \oneC(v)}\W(\walfa) \subseteq \cup\W(\oneC).
	\] 
	The strict inclusion can be derived from Lemma~\ref{lem:weber-oneC}, when considering $(m_\sigma^\walfabeta)_i$ with $i \neq \sigma(1)$ such that 
	\[
	(m_\sigma^\walfabeta)_i = v(i) + \alpha_i\Delta + (\beta_{S_{\sigma(i)}} - \beta_{S_{\sigma(i)} \cup \{i\}}).
	\]
	Fixing other $\beta_T$, with $\beta_{S_{\sigma(i)}}$ going to infinity, $(m_\sigma^\walfabeta)_i$ tends to infinity and with $\beta_{S_{\sigma(i)} \cup \{i\}}$ going to infinity, $(m_\sigma^\walfabeta)_i$ tends to minus infinity.
\end{proof}

From Lemma~\ref{lem:core-oneC}, it is easy to see that the strong core of the set of $\oneC$-extensions (thus also $\Co$-extensions and $\Sv$-extensions) is an empty set. The Lemma does not apply to the set of $\P$-extensions, because none of games $(N,\walfa)$ is positive. To prove the emptyness of the strong core of $\P$-extensions, we show that for every $I^\alpha \in \I(v)$, there is $(N,\wa)$ such that $I^\alpha \notin \Core(\wa)$.

\begin{theorem}\label{thm:strong-core-positive}
	Let $\pgame$ be $\P$-extendable minimal game. It holds
	\[
	\cap\Core(\P) = \cap\W(\P)=\emptyset.
	\]
\end{theorem}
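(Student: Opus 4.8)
Let $\pgame$ be $\P$-extendable minimal game. It holds
\[
\cap\Core(\P) = \cap\W(\P)=\emptyset.
\]

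The plan is to show that the intersection $\bigcap_{\wa \in \P(v)}\Core(\wa)$ is empty by exhibiting, for each candidate payoff vector, some $\P$-extension that excludes it. Since $\cap\Core(\P) \subseteq \I(v)$ (the core is always a selection from the imputation set and $\I(\wa)=\I(v)$ for every extension), it suffices to show that no $I^\alpha \in \I(v)$ lies in $\Core(\wa)$ for all $\wa$. I would therefore fix an arbitrary imputation $I^\alpha$ (with $\alpha \in \Rn_+$, $\alpha(N)=1$) and construct a specific $\P$-extension whose core fails to contain it; this immediately kills the intersection. Because $\Core(\wa) \subseteq \W(\wa)$ by Theorem~\ref{thm:convex-weber-equals-core}, proving the statement for $\Core$ gives the $\W$ part once we note that for a positive (hence convex) extension the core and Weber set coincide, so $\cap\Core(\P)=\cap\W(\P)$.

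The key computation is to test coalitional rationality of $I^\alpha$ against the extreme games $(N,v_T)$ from~\eqref{eq:def-vt}. For a coalition $S \notin \K$ with $T \subseteq S$, we have $v_T(S) = \Delta + \sum_{i \in S}v(i)$, while
\[
I^\alpha(S) = \sum_{i \in S}\bigl(v(i) + \alpha_i \Delta\bigr) = \sum_{i \in S}v(i) + \Delta\sum_{i \in S}\alpha_i.
\]
Thus $I^\alpha \in \Core(v_T)$ forces $\sum_{i \in S}\alpha_i \geq 1$ for every such $S$, and in particular (taking $S=T$) it requires $\sum_{i \in T}\alpha_i \geq 1$. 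Since $\alpha(N)=1$ and the coordinates are nonnegative, this can hold for at most one minimal coalition $T$, and when $|T| < n$ it forces $\alpha_j = 0$ for all $j \notin T$. The idea is to pick, for the given $\alpha$, some $T \in N_1$ for which $\sum_{i \in T}\alpha_i < 1$ strictly; then $I^\alpha(T) < v_T(T)$, so $I^\alpha \notin \Core(v_T)$. Provided $\Delta > 0$ this is a genuine violation.

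The main obstacle is handling the boundary and degenerate cases, and I expect most of the work to live there. First, if $\Delta = 0$ then all extreme games collapse to the additive game $\sum_i v(i)u_i$ and the core is the single point $(v(1),\dots,v(n))$; here the intersection is a singleton rather than empty, so the statement as written implicitly assumes $\Delta > 0$ (the genuinely incomplete regime), and I would state this caveat explicitly. Second, I must verify that for every $\alpha$ a suitable violating $T \in N_1$ exists: if some $\alpha_k = 1$ and the rest vanish, then $I^\alpha = I^k$-type vectors concentrate all mass on one player, and I would choose $T \in N_1$ with $k \notin T$ (possible since $n \geq 3$ guarantees $N_1$ contains coalitions avoiding any fixed player), giving $\sum_{i\in T}\alpha_i = 0 < 1$. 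Otherwise the mass is spread, no single coalition of size $\geq 2$ can absorb the full weight $1$ while excluding a positively-weighted player, so again a deficient $T$ is available. Assembling these cases shows every $I^\alpha$ is excluded by some $v_T$, hence $\cap\Core(\P)=\emptyset$, and the Weber identity finishes the proof.
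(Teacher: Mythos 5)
Your proof is correct and takes essentially the same route as the paper: the paper likewise excludes each imputation $I^\alpha$ by constructing a $\P$-extension $(N,\wa)$ with a single weight $\alpha_T=1$ (i.e., an extreme game $v_T$ from~\eqref{eq:def-vt}) and violating coalitional rationality at the defining coalition, only it uses a two-element $T$ chosen by a case split on whether $\alpha_1+\alpha_2=1$, where you pick a general $T\in N_1$ avoiding a positively-weighted player. Your caveats are also accurate and apply equally to the paper's own argument, whose strict inequalities silently require $\Delta>0$ (and $n\geq 3$), so flagging that regime explicitly is a genuine, if minor, improvement rather than a deviation.
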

\begin{proof}
	Let $I^\alpha \in \I(v)$ and distinguish two cases. If $\alpha_1 + \alpha_2 = 1$ then choose $(N,\wa)$ such that for $i,j$ different from $1,2$, we have $\alpha_{\{i,j\}} = 1$ and $\alpha_T=0$ otherwise. It follows that
	\[
	I^k(\{i,j\}) = v(i) + v(j) < v(i) + v(j) + \Delta = \wa(\{i,j\}),
	\]
	therefore $I^\alpha \notin \Core(\wa)$ because the coalitional rationality for $\{i,j\}$ is not satisfied. 
	
	Similarly, if $\alpha_1 + \alpha_2 < 1$, choose $(N,\wa)$ such that $\alpha_{\{1,2\}}=1$ and $\alpha_T = 0$ otherwise. Again,
	\[
	I^k(\{1,2\}) = v(1) + v(2) + (\alpha_1+\alpha_2)\Delta < v(1) + v(2) + \Delta = \wa(\{1,2\}),
	\]
	meaning the coalitional rationality is not satisfied for $\{1,2\}$, thus $I^\alpha \notin \Core(\wa)$.
\end{proof}

\section{Conclusion}\label{sec5}
We introduced a new method to approximate the solution concepts of cooperative games when only partial information is given. The method was demonstrated on a class of minimal incomplete cooperative games. We investigated different solution concepts and showed how the approximations depend on the knowledge about the underlying game, specifically on the class of the underlying game. We note that incomplete games with different structure of $\K$ can be analysed in a similar manner. Interesting candidates might be e.g. 
\begin{enumerate}
	\item $\K = \{T \subseteq N \mid i \in T\}$,
	\item $\K = \{T \subseteq N \mid \lvert T \rvert \leq k\}$,
	\item $\K=\{N,\emptyset\} \cup \{N \setminus i \mid i \in N\}$ (for motivation, see~\cite{Bok2020}), 
	\item classes of games given in~\cite{Grabisch2011}. 
\end{enumerate}

We showed that for minimal incomplete games, most of the weak solutions become non trivial (not equal to the imputation set) when we switch between the sets of $\Co$-extensions and $\P$-extensions. The first thing we would like to understand in the future is when exactly this change happens. A possible approach might be to study the weak solutions for sets of \emph{$k$-monotonic extensions} (see~\cite{Grabisch2016}). In this view, convex games are $2$-monotonic and positive games are $(2^n-2)$-monotonic. It further holds that $k$-monotonic games form a subset of $(k-1)$-monotonic games. 

Among other solution concepts, the core stands out because using our method, it is the only solution concept which do not yield non trivial approximation even for $\P$-extensions. To understand the core and also to be able to approximate it, it is important to find sets of $C$-extensions for which we get non trivial approximations. A possible candidate might be to further restrict $\P$-extensions which are $k$-additive, and the most restrictive class of 2-additive games (see~\cite{Grabisch2016}). We note that the proof of Theorem~\ref{thm:strong-core-positive} shows that the strong core of 2-additive $\P$-extensions is still empty, however, it might be still a different case for the weak core.

The next big step regarding the approximations is to derive tools to further analyse the \emph{strength} of the approximations. So far, we considered only the inclusion as a measure that one approximation is better than another. For example, if there is a $C$-extensions for which the solution concept is equal exactly to the weak solution (see proof of Theorem~\ref{thm:weak-core-positive}), the approximation is clearly best possible we can get if we consider that the underlying game is from $C$. But consider weak solutions of one-point solution concepts. Clearly, \emph{smaller} the weak solution (in volume, range of values, ...), the better approximation we are getting. We also want to address these questions in near future.

%\backmatter

\bmhead{Acknowledgments}

The author would like to thank Milan Hlad\'{i}k and Michel Grabisch for discussions regarding the paper. The author was supported by SVV--2020--260578, by the Charles University Grant Agency (GAUK 341721) and by the Czech Science Foundation (GA\v{C}R 22-11117S).

\bibliography{bibliography}% common bib file
%% if required, the content of .bbl file can be included here once bbl is generated
%%\input sn-article.bbl

%% Default %%
%%\input sn-sample-bib.tex%

\end{document}